\providecommand{\iftoggle}[3]{#2}
\title{Generalised Hunter--Saxton equations, \\ optimal information transport, \\ and factorisation of diffeomorphisms}
\author{Klas Modin${}^{1,2,}$\thanks{\href{mailto:klas.modin@utoronto.ca}{\tt klas.modin@chalmers.se}}}
\affil{
	${}^1$Department of Mathematics \\
	University of Toronto \\
	Toronto, ON M5S 2E4, Canada
}
\affil{
	${}^2$Mathematical Sciences \\
	Chalmers University of Technology \\
	SE--412 96 Göteborg, Sweden
}
\date{\Large\today \\[-0.5ex] {\footnotesize (Published in J.~Geom.~Anal., \href{http://dx.doi.org/10.1007/s12220-014-9469-2}{DOI:10.1007/s12220-014-9469-2})}}
	\providecommand{\mathup}{\mathrm}
	\newcommand{\R}{{\mathbb R}}
	\newcommand{\C}{{\mathbb C}}
	\newcommand{\pd}{\partial}
	\newcommand{\dd}{\mathup{d}}
	\newcommand{\ud}{\,\dd}
	\DeclareMathOperator{\grad}{grad}
	\DeclareMathOperator{\divv}{div}
	\DeclareMathOperator{\Jac}{Jac}
	\newcommand{\LieD}{\mathcal{L}}
	\newcommand{\interior}{\mathup{i}}
	\newcommand{\vol}{\mathup{vol}}
	\providecommand{\met}{\mathsf{g}}
	\providecommand{\dist}{\mathup{dist}}
	\DeclareMathOperator{\Exp}{Exp}
	\providecommand{\pair}[1]{\langle #1 \rangle}
	\providecommand{\inner}[1]{\langle\!\langle #1 \rangle\!\rangle}
	\DeclareMathOperator{\tr}{tr}
	\providecommand{\id}{\mathup{id}}
	\newcommand{\trans}{\top}
	\DeclareMathOperator{\ad}{ad}
	\newcommand{\g}{\mathfrak{g}}
	\providecommand{\GL}{\mathup{GL}}
	\providecommand{\gl}{\mathfrak{gl}}
	\providecommand{\SO}{\mathup{SO}}
	\providecommand{\so}{\mathfrak{so}}
	\newcommand{\Fcal}{\mathcal{F}}
	\newcommand{\Diff}{\mathup{Diff}}
	\newcommand{\Xcal}{\mathfrak{X}}
	\newcommand{\Diffvol}{{\Diff_\vol}}
	\newcommand{\Xcalvol}{{\Xcal_\vol}}
	\newcommand{\Xcalvolex}{{\Xcal_{\vol,\mathrm{ex}}}}
	\newcommand{\Xcalhar}{{\Xcal_{\mathcal{H}}}}
	\newcommand{\Har}{\mathcal{H}}
	\newcommand{\Dens}{\mathup{Dens}}
	\newcommand{\Symn}{\mathup{Sym}(n)}
	\newcommand{\Sympn}{\mathup{Sym}(n)^+}
	\newcommand{\Matn}{\mathup{Mat}(n,n)}
	\newcommand{\Acal}{\mathcal{A}}
	\providecommand{\todo}[1]{}
\begin{document}

\maketitle

\begin{abstract}
	We study geodesic equations for a family of right-invariant Riemannian metrics on the group of diffeomorphisms of a compact manifold.
	The metrics descend to Fisher's information metric on the space of smooth probability densities. 
	The right reduced geodesic equations are higher-dimensional generalisations of the $\mu$--Hunter--Saxton equation, used to model liquid crystals under influence of magnetic fields.
	Local existence and uniqueness results are established by proving smoothness of the geodesic spray.

	The descending property of the metrics is used to obtain a novel factorisation of diffeomorphisms.
	Analogous to the polar factorisation in optimal mass transport, this factorisation solves an optimal information transport problem.
	It can be seen as an infinite-dimensional version of $QR$~factorisation of matrices.
		
	\textbf{Keywords:} Euler--Arnold equations; Euler--Poincare equations; descending metrics; Riemannian submersion; diffeomorphism groups; Fisher information metric; Fisher--Rao metric; entropy differential metric; geometric statistics; Hunter--Saxton equation; information geometry; optimal transport; polar factorisation; $QR$~factorisation; Cholesky factorisation; Calabi metric.

	\textbf{MSC~2010:} 58D05, 58D15, 35Q31, 53C21, 58B20, 94A17, 65F99.
\end{abstract}

\listoftodos

\tableofcontents

\section{Introduction} 
\label{sec:intro}

Let $(M,\met)$ be a closed $n$--dimensional Riemannian manifold.
Denote by $\vol$ the volume form.
We assume that 
$\int_M \vol = 1$.
Denote by $\Diff(M)$ the group of diffeomorphisms of~$M$ and by $\Diffvol(M)$ the subgroup of volume preserving diffeomorphisms.
Throughout the paper, the word ``metric'' always means ``Riemannian metric''.

The study of geodesic equations on diffeomorphism groups was initiated by \makebox{\citet{Ar1966}}, who discovered that Euler's equations for an incompressible perfect fluid correspond to a geodesic equation on $\Diffvol(M)$ with respect to a right-invariant $L^2$~metric. 
Since Arnold's discovery, many equations in mathematical physics have been found to fit the same framework.
Such equations are called \emph{Euler--Arnold equations};
see the monographs~\cite{ArKh1998, MaRa1999, KhWe2009, HoScSt2009} and the survey paper~\cite{Vi2008}.

This paper is about geodesic equations for a family of right-invariant metrics on $\Diff(M)$ that descend to the homogeneous space $\Diffvol(M)\backslash\Diff(M)$ of right co-sets, naturally identified with the space $\Dens(M)$ of smooth probability densities.

Riemannian metrics and geodesic equations on $\Dens(M)$ are important in optimal transport, probability theory, statistical mechanics, and quantum mechanics.
The connection between geodesics on $\Diff(M)$ and $\Dens(M)$ was pointed out by \citet{Ot2001}, who studied a non-invariant $L^{2}$~metric on $\Diff(M)$ that descends to $\Dens(M)$.
(In Otto's setting, $\Dens(M)$ is identified with the homogeneous space $\Diff(M)/\Diffvol(M)$ of \emph{left} co-sets.)
Remarkably, the corresponding metric on $\Dens(M)$ induces the Wasserstein distance; it is therefore called the \emph{Wasserstein metric}.
Otto's observation implies that~$L^{2}$~optimal mass transport can be interpreted (at least formally) as a geodesic boundary-value problem on $\Dens(M)$ with respect to the Wasserstein metric.
This follows from general result for Riemannian submersions: a minimal geodesic between two fibres is horizontal, and horizontal geodesics descend to geodesics on the base.

Another important metric on the space of probability densities is the \emph{Fisher metric} (also called \emph{Fisher--Rao metric}, \emph{Fisher information metric}, and \emph{entropy differential metric}).
Classically, the Fisher metric occurs as a finite-dimensional metric on smooth statistical models (called statistical manifolds). 
It has a fundamental rôle in the field of information geometry~\cite{Fi1973,Ra1945,Ch1982,AmNa2000}. 
\citet{Fr1991} realised that statistical manifolds can be interpreted as finite-dimensional submanifolds of $\Dens(M)$ and that the Fisher metric on such submanifolds is the restriction of one and the same canonical metric on $\Dens(M)$.
Friedrich also showed that this canonical metric has constant positive curvature.

When $(M,J,\omega)$ is a complex manifold, the Calabi--Yau theorem establishes an isomorphism between $\Dens(M)$ and the space of Kähler metrics on $M$ compatible with the symplectic structure $\omega$.
Under this isomorphism, the Fisher metric corresponds to the \emph{Calabi metric} introduced in the 1950s (see \cite{ClRu2013} for details).
Results about the Fisher metric are therefore of interest also in the field of Kähler--Einstein metrics and complex Monge--Ampere equations.

\citet*{KhLeMiPr2013} introduced a right-invariant degenerate~$\dot H^{1}$ ``metric'' on $\Diff(M)$ that descends to the Fisher metric on~$\Dens(M)$.
($\Dens(M)$ is here identified with the \emph{right} co-sets $\Diffvol(M)\backslash\Diff(M)$.)
By taking Otto's point of view, the authors regard the geodesic boundary-value problem, with respect to the Fisher metric, as an optimal ``information'' transport problem, with respect to a degenerate cost function on $\Diff(M)$ induced by the $\dot H^{1}$~``metric''.
Since the cost function is degenerate, the transport maps are not unique.

As pointed out in \cite{KhLeMiPr2013}, there are no available examples of right-invariant metrics on $\Diff(M)$ that descend to $\Dens(M)$.
Our main motivation is to construct such metrics, thereby completing the analogy between optimal mass and information transport.
Indeed, in this paper we introduce a 3--parameter family of right-invariant metrics on $\Diff(M)$ that descend to the Fisher metric.
We give existence and uniqueness results for the geodesic equation and the corresponding optimal information transport problem.
The latter result implies a novel factorisation of diffeomorphisms.
This factorisation is analogous to, but different from, the polar factorisation of vector valued maps on~$\R^{n}$, obtained by \makebox{\citet{Br1991}} and later generalised to Riemannian manifolds by \citet{Mc2001}.
Our factorisation result can be understood as an infinite-dimensional version of the $QR$~factorisation of matrices.

The right reduced geodesic equations for our family of metrics can be interpreted as higher-dimensional generalisations of the $\mu$--Hunter--Saxton ($\mu$HS) equation, studied by \citet*{KhLeMi2008} (also called $\mu$--Camassa--Holm in~\cite{LeMiTi2010}).
The $\mu$HS~equation is a simple model for a liquid crystal under influence of an external magnetic field. 

We now present the higher-dimensional generalisations.

Let $\Xcal(M)$ and $\Omega^{k}(M)$, respectively, denote smooth vector fields and $k$--forms on $M$.
Further, let
\begin{equation*}
	\Fcal(M) = \left\{ F \in C^{\infty}(M); \int_{M} F\, \vol = 0 \right\} .
\end{equation*}
Recall the differential $\dd\colon\Omega^k(M)\to\Omega^{k+1}(M)$ and the co-differential $\delta\colon\Omega^k(M)\to\Omega^{k-1}(M)$.
The Laplace--de~Rham operator $\Delta \coloneqq -\dd\circ\delta - \delta\circ\dd$ restricted to $\dd\Omega^{k-1}(M)$ or $\delta\Omega^{k+1}(M)$ is an isomorphism~\cite{Ta1996a}.
In particular, it is an isomorphism on $\Fcal(M) = \delta\Omega^{1}(M)$.
Let $\flat\colon\Xcal(M)\to\Omega^1(M)$ denote the flat map, also called the \emph{musical isomorphism}.
Its inverse, the sharp map, is denoted~$\sharp$.
We typically write $u^\flat$ instead of $\flat(u)$ and correspondingly for~$\sharp$.

Consider the pseudo-differential operator $\Acal\colon\Xcal(M)\to\Omega^1(M)$ defined by
\begin{equation}\label{eq:Acal}
	\Acal u \coloneqq 
	\Big( 
		\id + \dd\circ \Delta^{-1}\circ\delta + \gamma\,\delta\circ\Delta^{-1}\circ\dd + \alpha\,\delta\circ\dd + \beta\,\dd\circ\delta
	\Big) (u^{\flat})
\end{equation}
where $\alpha,\beta >0$ and $\gamma \in [0,1]$ are parameters.
We are interested in the integro-differential equation
\begin{subequations}\label{eq:maineq}
\begin{equation}\label{eq:maineq1}
	\dot m + \LieD_u m + m \divv (u) = 0, \quad m =\Acal u ,
\end{equation}
where $\LieD_u$ denotes the Lie derivative along $u$ and $\dot m \coloneqq \frac{\pd m}{\pd t}$.
A \emph{solution} is a curve $t\mapsto u(t)\in\Xcal(M)$ that fulfils equation~\eqref{eq:maineq1}.
The equation also admits the form
\begin{equation}\label{eq:maineq2}
	\Big( \frac{\pd}{\pd t} + \LieD_u \Big) (m\otimes\vol) = 0 ,
\end{equation}
because
\begin{equation*}
	\LieD_u (m\otimes\vol) = (\LieD_u m)\otimes \vol + m\otimes \divv(u)\vol = \big(\LieD_u m + m\divv(u)\big)\otimes\vol .
\end{equation*}
The one-form density $m\otimes\vol$ is therefore transported by the flow, much like vorticity is transported by the flow of a perfect fluid.

\begin{remark}
	If $\alpha=\beta=\gamma=1$ and the first de~Rham cohomology of $M$ is trivial, then $\Acal = -\Delta\circ\flat$.
	An example is $M = S^{n}$ (the $n$--dimensional sphere) for $n>1$.
\end{remark}	

\begin{remark}
	If $\alpha=\beta=\gamma=1$ and $M=\mathbb{T}^{n}$ (the $n$--dimensional flat torus) then
	\begin{equation*}
		\Acal u = -\Delta u^{\flat} + \sum_{i=1}^{n} \pair{u,t_i}_{L^{2}}\, t_i^{\flat},
	\end{equation*} 
	where $t_i \coloneqq \frac{\pd}{\pd{x_i}} \in \Xcal(\mathbb{T}^{n})$, i.e., $t_1,\ldots,t_n$ is a basis for infinitesimal translations on $\mathbb{T}^{n}$.
\end{remark}

The paper is organised as follows.
In \autoref{sec:euler_arnold} we show that equation~\eqref{eq:maineq} is a right reduced geodesic equation on $\Diff(M)$, i.e., an Euler--Arnold equation.
Local existence and uniqueness of the Cauchy problem is given in~\autoref{sec:cauchy_problem}.
In \autoref{sec:descending_metrics} we discuss characterisation and construction of right-invariant and descending metrics, and we show that our family of constructed metrics descend to the Fisher metric.
In \autoref{sec:optimal_transport_and_polar_factorisation} we present an abstract geometric framework for right-invariant optimal transport problems and polar factorisation.
Then, in \autoref{sub:optimal_info_trans}, we focus on optimal information transport, using as cost function the squared Riemannian distance of the new metrics, and we derive a polar factorisation result for $H^{s}$~diffeomorphisms.
Finally, we show in \autoref{sub:qr} that $QR$~factorisation of matrices can be viewed as polar factorisation corresponding to optimal transport of inner products on~$\R^{n}$.
The relation to the Cholesky factorisation of symmetric matrices is pointed out.

We continue the introduction by deriving yet another form of equation~\eqref{eq:maineq}.
This form reveals structural properties and relations to other equations.

\subsection{Hodge components} 
\label{sub:hodge_components}


The Helmholtz decomposition of vector fields is
\begin{equation*}
\Xcal(M) = \Xcalvol(M)\oplus \grad(\Fcal(M)),
\end{equation*}
where $\Xcalvol(M)$ is the space of divergence-free vector fields;
every $u \in \Xcal(M)$ can be decomposed uniquely as $u = \xi + \grad(f)$, with $\xi\in\Xcalvol(M)$ and $f \in \Fcal(M)$.
Since~$f$ is normalised, it is unique.
This decomposition is orthogonal with respect to the $L^{2}$~inner product on $\Xcal(M)$, given by
\begin{equation*}
	\pair{u,v}_{L^{2}} = \int_{M}\met(u,v)\vol .
\end{equation*}

The Hodge decomposition of $k$--forms is
\begin{equation*}
	\Omega^k(M) = \Har^{k}(M)\oplus\delta\Omega^{k+1}(M)\oplus\dd\Omega^{k-1}(M), 	
\end{equation*}
where $\Har^k(M) = \{ a \in \Omega^k(M); \Delta a=0 \}$ is the space of harmonic $k$--forms.
This decomposition is orthogonal with respect to the $L^{2}$~inner product on $\Omega^{k}(M)$, given by
\begin{equation*}
	\pair{a,b}_{L^{2}} = \int_{M} a\wedge\star b ,
\end{equation*}
where $\star:\Omega^{k}(M)\to\Omega^{n-k}(M)$ is the Hodge star map.
Notice that $\pair{u,v}_{L^{2}} = \pair{u^{\flat},v^{\flat}}_{L^{2}}$.

Let $\mathsf{D}^k(M) \coloneqq \Har^{k}(M)\oplus\delta\Omega^{k+1}(M)$.
Then $\mathsf{D}^k(M) = \{ a \in \Omega^k(M); \delta a = 0 \}$ is the space of co-closed $k$--forms.
The relation between the Helmholtz and Hodge decompositions is
\begin{equation*}
	\Xcalvol(M)^{\flat} = \mathsf{D}^{1}(M),\quad \grad(\Fcal(M))^{\flat} = \dd\Omega^{0}(M).
\end{equation*} 
That is, the musical isomorphism $\flat\colon\Xcal(M)\to \Omega^{1}(M)$ is diagonal with respect to the two decompositions.
The same is true for the pseudo-differential operator $\Acal\colon\Xcal(M)\to\Omega^1(M)$.
That is, 
\[
	\Acal\Xcalvol(M) = \mathsf{D}^1(M) ,
	\quad
	\Acal\grad(\Fcal(M)) = \dd\Omega^0(M).
\]
From the Hodge decomposition we also obtain a finer decomposition
\[
\Xcalvol(M) = \Xcalhar(M)\oplus\Xcalvolex(M) ,
\] 
where $\Xcalvolex(M) = \delta\Omega^{2}(M)^\sharp$ is the space of \emph{exact} volume preserving vector fields, and $\Xcalhar(M) = \Har^{1}(M)^\sharp$ is the space of \emph{harmonic} vector fields.
$\Acal$~is diagonal also with respect to this finer decomposition.
Indeed, the $L^2$~orthogonal projection operator~$R\colon\Omega^{1}(M)\to\Har^{1}(M)$ onto the harmonic part is given by
\[
	R = \id + \dd\circ\Delta^{-1}\circ\delta + \delta\circ\Delta^{-1}\circ\dd ,
\]
and the $L^2$~orthogonal projection operator~$P\colon\Omega^{1}(M)\to\mathsf{D}^{1}(M)$ onto the co-closed part is given by
\[
	P = \id + \dd\circ\Delta^{-1}\circ\delta .
\]
From the definition of $\Acal$ it follows that $\Acal = (\gamma R + (1-\gamma) P + \alpha\,\delta\circ\dd + \beta\,\dd\circ\delta)\circ\flat$.
If $h\in\Xcalhar(M)$ then
\[
	\Acal{h} = \underbrace{\big(\gamma R + (1-\gamma)P\big){h}^{\flat}}_{{h}^{\flat}}+ \alpha\delta\underbrace{\dd{h}^{\flat}}_{0}+\beta\dd\underbrace{\delta{h}^{\flat}}_{0} = {h}^{\flat} \in \Har^{1}(M).
\]
If $\xi\in\Xcalvolex(M)$ then
\[
	\Acal\xi = \underbrace{\gamma R\xi^{\flat}}_{0}+ (1-\gamma)\underbrace{P\xi^{\flat}}_{\xi^{\flat}} + \alpha\delta\dd\xi^{\flat}+\beta\dd\underbrace{\delta\xi^{\flat}}_{0} = (1-\gamma)\xi^{\flat} + \alpha\delta\dd\xi^{\flat} \in \delta\Omega^{2}(M).
\]
If $f\in\Fcal(M)$ then
\[
	\Acal\grad(f) = \underbrace{(1-\gamma)P\dd f + \gamma R\dd f + \alpha\, \delta\dd\dd f}_{0} - \beta\,\dd\Delta f = -\beta\,\dd\Delta f \in \dd\Omega^0(M).
\]
Thus, if we represent $u = h + \xi + \grad(f)$ by its unique ``Helmholtz--Hodge components'' $(h,\xi,f)\in\Xcalhar(M)\times\Xcalvolex(M)\times\Fcal(M)$, then 
\[
	\Acal(h,\xi,f) = (h^{\flat},\big((1-\gamma)\id-\alpha\Delta\big)\xi^\flat,-\beta\Delta f)\in\Har^{1}(M)\times\delta\Omega^{2}(M)\times\Fcal(M).
\]
Since both $\big((1-\gamma)\id - \alpha\Delta\big)\circ\flat\colon\Xcalvolex(M)\to\delta\Omega^{2}(M)$ and $\Delta\colon\Fcal(M)\to\Fcal(M)$ are invertible operators, $\Acal$ is also invertible (see \autoref{sec:cauchy_problem} for details).

Our aim is now to write equation~\eqref{eq:maineq1} in terms of the Hodge components
\[
\sigma \coloneqq (\gamma R + (1-\gamma)P + \alpha\,\delta\circ\dd)(u^{\flat})\in\mathsf{D}^{1}(M)
\]
\todo{Referee's remark.}
and
\[
	\rho \coloneqq \Delta f =\divv(u)\in\Fcal(M).
\]
In these variables $m=\sigma - \beta\dd\rho$, so equation~\eqref{eq:maineq1} becomes
\begin{gather*}
	\dot\sigma - \beta\,\dd\dot\rho 
	+ \LieD_{u}\sigma - \beta\,\dd\LieD_u\rho 
	+ \rho\sigma - \beta \rho\,\dd\rho = 0
	\\
	\Updownarrow
	\\
	\dot\sigma 
	+ \LieD_u\sigma
	+ \rho \sigma 
	- \beta\,\dd \Big( 
		\dot\rho + \LieD_u\rho + \frac{\rho^2}{2}
	\Big) = 0
\end{gather*}
In general, $\LieD_u\sigma + \rho \sigma\notin \mathsf{D}^1(M)$ and $\LieD_u\rho + \frac{\rho^2}{2}\notin\Fcal(M)$.
Therefore we need a Lagrange multiplier in order to find the Hodge components.
We can always find a function~$p\in C^\infty(M)$ such that $\LieD_u\xi^\flat + \rho \xi^\flat + \dd p \in \mathsf{D}^1(M)$, with $p$ uniquely determined up to a constant.
Further, we can always determine the constant part of~$p$ so that $\LieD_u\rho + \frac{\rho^2}{2} + \frac{p}{\beta} \in\Fcal(M)$.
Continuing from above
\begin{gather*}
	\Updownarrow
	\\
	\underbrace{\dot\sigma 
	+ \LieD_u\sigma
	+ \rho \sigma 
	+ \dd p}_{\in\mathsf{D}^1(M)}
	- \beta\,\dd \Big( 
		\underbrace{\dot\rho + \LieD_u\rho + \frac{\rho^2}{2} + \frac{p}{\beta}}_{\in\Fcal(M)}
	\Big) = 0. 
\end{gather*}
We now obtain equation~\eqref{eq:maineq1} in terms of the Hodge components
\begin{equation}\label{eq:maineq3}
	\begin{split}
		\dot\sigma + \LieD_u \sigma + \rho\sigma &= -\dd p, \qquad \sigma = \big(\gamma R + (1-\gamma)\id - \alpha\Delta\big)(Pu^{\flat}) \\
		\dot\rho + \LieD_{u}\rho + \frac{\rho^2}{2} &= - \frac{p}{\beta}, \qquad \rho = \divv(u) \\
		\delta\sigma &= 0  \\
		\int_{M} \rho \, \vol &= 0 ,
	\end{split}
\end{equation}
\end{subequations}
where the ``pressure'' $p\in C^{\infty}(M)$ is a Lagrange multiplier, determined uniquely by the two constraint equations.
 
If $\sigma(t_0) = 0$ at some time $t_0$, it follows from equation~\eqref{eq:maineq3} that $\dot\sigma(t_0) = 0$. 
The consequence is that $\grad(\Fcal(M))$ is an \emph{invariant subspace}; if $u(t_0)\in\grad(\Fcal(M))$ then $u(t)\in\grad(\Fcal(M))$ for all~$t$.
From a geometric point of view, the reason is that the corresponding right-invariant metric on $\Diff(M)$ descends to the homogenous space $\Diffvol(M)\backslash\Diff(M) \simeq \Dens(M)$, as described in \autoref{sec:descending_metrics}.
In contrast, $\rho(t_0) = 0$ does \emph{not} imply~$\rho(t)=0$, so $\Xcalvol(M)$ is \emph{not} an invariant subspace, so $\Diffvol(M)$ is \emph{not} totally geodesic (see~\cite{MoPeMaMc2011} for details on totally geodesic subgroups).
But, if $\rho(t_0)=0$ then it follows from equation~\eqref{eq:maineq3} that $\dot\rho(t_0)$ is arbitrarily small for large enough~$\beta$.
This observation suggests that solutions to equation~\eqref{eq:maineq} with $\gamma=0$ may converge to solutions of the Euler--$\alpha$ fluid equation as~$\beta\to\infty$.
We do not expect good behaviour of solutions as $\beta\to 0$, since~$\Acal$ is not invertible for~$\beta=0$.
 
Equation~\eqref{eq:maineq} is a higher-dimensional generalisation of the $\mu$HS equation, studied by \citet*{KhLeMi2008}.
Indeed, $\Xcal_{\vol}(S^{1}) = \Xcalhar(S^{1})\simeq \R$ consists of the constant vector fields on the circle~$S^{1}$, so equation~\eqref{eq:maineq3} with $M=S^{1}$ becomes
\begin{equation*}
	\begin{split}
		\dot\xi + 2 \xi u_{x} &= -p_{x}
		\\
		\dot u_{x} + u u_{xx} + \frac{1}{2}(u_{x})^{2} &= -\frac{p}{\beta} .
	\end{split}
\end{equation*}
From the first equation it follows that 
\begin{equation*}
	0= \int_{S^{1}}(\dot\xi + 2\xi u_{x} + p_{x})\ud x = \dot\xi \mu(S^{1}),
	\quad\mu(S^{1})\coloneqq\int_{S^{1}}\ud x,
\end{equation*}
which implies~$\dot\xi = 0$.
Now differentiate the second equation with respect to~$x$ to get
\begin{equation*}
	\dot u_{xx} + 2 u_{x} u_{xx} + u u_{xxx} = \frac{2 \xi u_{x}}{\beta}  .
\end{equation*}
Since $\int_{S^{1}}u \ud x = \int_{S^{1}}\xi \ud x$ it follows that $\xi$ is the mean of $u$ over~$S^{1}$, i.e.,
\begin{equation*}
	\xi = \mu(u) \coloneqq \frac{1}{\mu(S^{1})}\int_{S^{1}} u \ud x  .
\end{equation*}
Thus,
\begin{equation*}
	\dot u_{xx} + 2 u_{x} u_{xx} + u u_{xxx} = \frac{2 \mu(u) u_{x}}{\beta},
\end{equation*}
which is the $\mu$HS equation.

A different generalisation of the $\mu$HS equation, from $M=S^1$ to $M=\mathbb{T}^n$, is given by \citet{Ko2012}.
It is also an Euler--Arnold equation, but the corresponding right-invariant metric does not descend to density space.


\section{Euler--Arnold structure} 
\label{sec:euler_arnold}

The geodesic equation for a right-invariant (or left-invariant) metric on a Lie group~$G$ can be reduced to an equation on the Lie algebra $\g$, called an Euler--Poincaré or Euler--Arnold equation.
The abstract form of this equation, first written down by \citet{Po1901}, is
\begin{equation}\label{eq:EulerArnold_abstract}
	\Acal\dot u + \ad^{*}_{u}(\Acal u) = 0,
\end{equation}
where $\Acal\colon\g\to\g^{*}$ is the \emph{inertia operator} induced by the inner product on~$\g$ defining the right-invariant metric, and $\ad^{*}_{u}\colon\g^{*}\to\g^{*}$ is the infinitesimal action of~$u$ on $\g^{*}$, i.e., the dual operator of $\ad_{u}\colon\g\to\g$.

In our case, $G=\Diff(M)$, $\g = \Xcal(M)$, and $\ad_{u} = -\LieD_{u}$.
The dual of $\Xcal(M)$ is identified with $\Omega^{1}(M)$ via the pairing
\begin{equation*}
	\pair{m,u} = \int_{M}\interior_{u}m \, \vol = \pair{m,v^{\flat}}_{L^{2}} .
\end{equation*}
We now define an inner product on $\Xcal(M)$ with inertia operator given by~\eqref{eq:Acal}.
Indeed, consider the inner product
\begin{equation}\label{eq:inner_product}
	\begin{split}
		\pair{u,v}_{\alpha\beta\gamma} 
		&\coloneqq \pair{\bar P_{\gamma} u^{\flat},\bar P_{\gamma} v^{\flat}}_{L^{2}} +\alpha\pair{\dd u^{\flat},\dd v^{\flat}}_{L^{2}}+ \beta\pair{\delta u^{\flat},\delta v^{\flat}}_{L^{2}} ,
	\end{split}
\end{equation}
where $\bar P_{\gamma} \coloneqq \gamma R + (1-\gamma) P$ is introduced to simplify notation.
Notice that \eqref{eq:inner_product} is different from the Sobolev $a$-$b$-$c$~inner product considered in~\cite{KhLeMiPr2013}, since only the divergence-free components occur in the first term.
From $\pair{a,\delta b}_{L^{2}} = \pair{\dd a,b}_{L^{2}}$ and $\pair{\bar P_{\gamma} u^{\flat},\bar P_{\gamma} v^{\flat}}_{L^{2}} = \pair{\bar P_{\gamma} u^{\flat}, v^{\flat}}_{L^{2}}$ we get
\begin{equation*}
	\pair{u,v}_{\alpha\beta\gamma}  = \pair{\bar P_{\gamma} u^{\flat} + \alpha\, \delta\dd u^{\flat} + \beta\, \dd\delta u^{\flat},v^{\flat} }_{L^{2}} = \pair{\Acal u,v},
\end{equation*}
so $\Acal$ in \eqref{eq:Acal} is indeed the inertia tensor corresponding to the inner product~\eqref{eq:inner_product}.

Using the inner product~\eqref{eq:inner_product}, we define a right-invariant metric $\inner{\cdot,\cdot}_{\alpha\beta\gamma}$~on $\Diff(M)$ by right translation of vectors to $\Xcal(M) = T_{\id}\Diff(M)$. 
Explicitly,
\begin{equation}\label{eq:metric}
	\inner{U,V}_{\alpha\beta\gamma}
	= \pair{U\circ\varphi^{-1},V\circ\varphi^{-1}}_{\alpha\beta\gamma} \; ,
\end{equation}
for $U,V\in T_{\varphi}\Diff(M)$.

Let us revisit the case $M=S^{1}$.
$\Diffvol(S^{1}) = \mathrm{Rot}(S^{1})$, i.e., the one dimensional Lie group of rigid rotations, and $\Xcalvol(S^{1})\simeq \R$, i.e., the constant vector fields.
The inner product~\eqref{eq:inner_product} becomes
\begin{equation*}
	\pair{u,v}_{\mu\dot H^{1}} 
	= \int_{S^{1}} u \ud s \int_{S^{1}}v \ud s + \int_{S^{1}} u_{x} v_{x} \ud s,
\end{equation*}
which is the inner product for the $\mu$HS~metric in~\cite{KhLeMi2008}.
As expected, the $\alpha$-$\beta$-$\gamma$~metric~\eqref{eq:metric} on $\Diff(M)$ is thus a generalisation of the $\mu$HS~metric on $\Diff(S^{1})$.

The dual operator of $-\LieD_{u}\colon\Xcal(M)\to\Xcal(M)$ is computed as
\begin{equation*}
	\begin{split}
		\pair{m,-\LieD_{u}v} &= -\int_{M} m\wedge\star (\LieD_{u}v)^{\flat} 
		= -\int_{M}m\wedge\interior_{\LieD_{u}v}\vol
		\\ &
		= -\int_{M}m\wedge \big( \LieD_{u}\interior_{v}\vol - \interior_{v}\LieD_{u}\vol \big)		
		\\
		& = -\underbrace{\int_{M}\LieD_{u} \big( m\wedge \interior_{v}\vol \big)}_{0} + \int_{M}\LieD_{u}m\wedge \big(\interior_{v}\vol+ \divv(u)\interior_{v}\vol \big)		
		\\
		& =  \int_{M}\big(\LieD_{u}m + \divv(u)m \big)\wedge \interior_{v}\vol = \pair{\LieD_{u}m + \divv(u)m,v}.		
	\end{split}
\end{equation*}
Hence, $\ad_{u}^{*}(m)=\LieD_{u}m + \divv(u)m$. 
We obtain the following result by comparing equations~\eqref{eq:maineq} and~\eqref{eq:EulerArnold_abstract}.

\begin{proposition}\label{pro:EulerArnoldEquation}
	Equation~\eqref{eq:maineq} is the Euler--Arnold equation for the geodesic flow on $\Diff(M)$ with respect to the right-invariant $\alpha$-$\beta$-$\gamma$~metric~\eqref{eq:metric}.
\end{proposition}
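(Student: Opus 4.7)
The proof amounts to assembling two ingredients already prepared in the paragraphs immediately before the proposition: the inertia operator of the right-invariant metric \eqref{eq:metric}, and the coadjoint action $\ad^{*}_{u}$ on the smooth dual. My plan is to verify that their combination in the abstract Euler--Arnold equation \eqref{eq:EulerArnold_abstract} reproduces \eqref{eq:maineq1} verbatim.

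First I would invoke the identity $\pair{u,v}_{\alpha\beta\gamma} = \pair{\Acal u,v}$ established a few lines above, which shows that the operator $\Acal$ of \eqref{eq:Acal} is indeed the inertia operator induced by the inner product \eqref{eq:inner_product}. Since the metric $\inner{\cdot,\cdot}_{\alpha\beta\gamma}$ is defined as the right translation of this inner product to every tangent space, the same $\Acal$ serves as the inertia operator for the geodesic problem on $\Diff(M)$. Second, using the standard identification $\ad_{u} = -\LieD_{u}$ on $\Xcal(M)$ together with the pairing $\pair{m,v} = \int_{M}\interior_{v}m\,\vol$, the application of Cartan's magic formula and Stokes' theorem performed just before the proposition yields $\ad^{*}_{u}(m) = \LieD_{u}m + \divv(u)\,m$.

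Substituting $m = \Acal u$ in \eqref{eq:EulerArnold_abstract} and expanding $\ad^{*}_{u}$ then gives
\[ \dot m + \LieD_{u} m + m\,\divv(u) = 0, \qquad m = \Acal u, \]
which is precisely \eqref{eq:maineq1}. Because the forms \eqref{eq:maineq2} and \eqref{eq:maineq3} were shown in the introduction to be equivalent reformulations of \eqref{eq:maineq1}, this identifies \eqref{eq:maineq} in its full form as the Euler--Arnold equation of the metric \eqref{eq:metric}, completing the argument.

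No step presents a genuine obstacle; the proposition is essentially a bookkeeping assembly of the preceding calculations. The one point that deserves a moment's attention is that the abstract duality pairing on $\Xcal(M)\times\Omega^{1}(M)$ and the $L^{2}$ pairing on one-forms used in the inertia computation agree, which follows from the identity $\pair{m,v} = \pair{m,v^{\flat}}_{L^{2}}$ recorded in the definition of the pairing.
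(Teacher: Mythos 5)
Your proof is correct and follows the paper's approach exactly: identify $\Acal$ as the inertia operator via $\pair{u,v}_{\alpha\beta\gamma}=\pair{\Acal u,v}$, use the computed coadjoint action $\ad^{*}_{u}(m)=\LieD_{u}m+\divv(u)\,m$, and substitute into the abstract Euler--Arnold equation \eqref{eq:EulerArnold_abstract}. The paper's ``proof'' is precisely this comparison, stated in the sentence immediately preceding the proposition.
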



\section{Local existence and uniqueness} 
\label{sec:cauchy_problem}

In this section we show that equation~\eqref{eq:maineq} is well-posed as a Cauchy problem.
Following \citet{EbMa1970}, the approach is to prove smoothness of the geodesic spray in Sobolev $H^s$~topologies.

Let $N$ be a smooth finite-dimensional manifold.
If $s>n/2$ then the set $H^{s}(M,N)$ of maps from $M$ to $N$ of Sobolev differentiability~$H^{s}$ is a Banach manifold ($H^{s}(M,N)$ is even a Hilbert manifold, but that is not relevant in our analysis).
Let $\pi_{N}\colon TN\to N$ be the canonical projection.
The tangent space at $f\in H^{s}(M,N)$ is
\begin{equation*}
	T_{f}H^{s}(M,N) = \{ v \in H^{s}(M,TN); \pi_{N}\circ v = f\} ,
\end{equation*}
so $T H^{s}(M,N) = H^{s}(M,TN)$.
By iteration we obtain the higher order tangent bundles as $T^{k}H^{s}(M,N) = H(M,T^{k}N)$.

If $s>n/2+1$, which is assumed throughout the remainder, then $\Diff^{s}(M)$, i.e., the set of bijective maps in $H^{s}(M,M)$ whose inverses also belong to $H^{s}(M,M)$, is an open subset of $H^{s}(M,M)$, and therefore also a Banach manifold.
Since $\Diff^{s}(M)$ is open in $H^{s}(M,M)$ we have $T_{\varphi}\Diff^{s}(M) = T_{\varphi}H^{s}(M,M)$.
In particular, $T_{\id}\Diff^{s}(M) = \Xcal^{s}(M)$, i.e., the vector fields on $M$ of Sobolev type~$H^{s}$.

The space of $k$--forms of Sobolev type~$H^{s}$ is denoted $\Omega^{k,s}(M)$.

Let $\psi\in\Diff^{s}(M)$. 
Right multiplication $\Diff^{s}(M)\ni\varphi\mapsto \varphi\circ\psi \in \Diff^{s}(M)$ is smooth, but $\Diff^s(M)$ is \emph{not} a Banach Lie group, because left multiplication is \emph{not} smooth.
Instead, $\Diff^s(M)$ is a topological group, because the group operations are continuous~\cite[\S\!~2]{EbMa1970}.

\iftoggle{final}{}{{\color{blue!50!black}
\begin{framed}
We now introduce a local coordinate chart in a neighbourhood of $\varphi_0\in\Diff^{s}(M)$.
Thus, let $(g,\dot g)\in (O,T_{\varphi_{0}}\Diff^{s}(M))$ denote local coordinates for $T\Diff^{s}(M)$, where $O\subset T_{\varphi_{0}}\Diff^{s}(M)$ is an open neighbourhood of zero.
Notice that
\begin{equation*}
	T_{\varphi_{0}}\Diff^{s}(M) = \{ \psi \in H^{s}(M,TM); \pi_{M}\circ\psi = \varphi_{0}\}.
\end{equation*}
The chart mapping $O\to \Diff^{s}(M)$ is given by $g \mapsto \exp_{M}\circ g$, where $\exp_{M}:TM\to M$ is the Riemannian exponential on~$M$.

If $\psi\in H^{s}(M,M)$, then a chart from a neighbourhood of zero in $T_{\psi}H^{s}(M,M)$ to a neighbourhood of $\psi\in H^{s}(M,M)$ is given by ...
\end{framed}
}}

\iftoggle{final}{}{{\color{blue!50!black}
\begin{framed}
The outline of the proof is as follows:
\begin{enumerate}
	\item Show that $\Acal^{\sharp}$ is a smooth isomorphism $\Xcal^{s}(M)\to\Xcal^{s-2}(M)$.
	\hfill (\autoref{lem:A_pseudo_differential})
	\item Rewrite equation~\eqref{eq:maineq} as
	\begin{equation*}
		\frac{D}{\dd t}\frac{\dd\varphi}{\dd t} = \big((\Acal^\sharp)^{-1}S(\dot\varphi\circ\varphi^{-1})\big)\circ\varphi.
	\end{equation*}
	\item Show that $u\mapsto S(u)$ is a smooth map $\Xcal^{s}(M)\to\Xcal^{s-2}(M)$.
	\hfill  (\autoref{lem:diff_order_reduction})
	\item Show that 
	\begin{equation*}
		(\varphi,\dot\varphi)\mapsto \big((\Acal^\sharp)^{-1}S(\dot\varphi\circ\varphi^{-1})\big)\circ\varphi	
	\end{equation*}
	is smooth as a map $\Diff^{s}(M)\times H^{s}(M,TM) \to H^{s}(M,TM)$.
	\item Use standard results of smooth ordinary differential equations on Banach manifolds to obtain local existence and uniqueness.
\end{enumerate}
\end{framed}
}}

Consider the lifted inertia operator $\Acal^\sharp \coloneqq \sharp\circ \Acal$, with $\Acal$ defined in~\eqref{eq:Acal}.

\begin{lemma}\label{lem:A_pseudo_differential}
	$\Acal^{\sharp}$ is a smooth isomorphism $\Xcal^{s}(M) \to \Xcal^{s-2}(M)$.
\end{lemma}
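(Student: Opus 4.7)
The strategy is to exploit the block-diagonal form of $\Acal^{\sharp}$ already computed in \autoref{sub:hodge_components} and reduce the claim to standard facts about the Hodge Laplacian on each summand of the Helmholtz--Hodge decomposition. Since $\Acal^{\sharp}$ is linear between Banach spaces, smoothness reduces to boundedness, and the remaining task is to exhibit bounded inverses block by block.

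First I would extend the Helmholtz--Hodge decomposition to the Sobolev category,
\[
	\Xcal^{s}(M) = \Xcalhar(M)\oplus \Xcalvolex^{s}(M)\oplus \grad(\Fcal^{s+1}(M)),
\]
with closed summands and continuous $L^{2}$--orthogonal projections, as given by standard elliptic regularity for the Hodge Laplacian (see~\cite{Ta1996a}); an analogous splitting holds at level $s-2$. The calculations at the end of \autoref{sub:hodge_components} then show that $\Acal^{\sharp}$ is block diagonal with respect to these decompositions: the identity on the (finite-dimensional) harmonic factor, the operator $(1-\gamma)\id - \alpha\Delta$ on $\Xcalvolex^{s}(M)$, and the map $\grad(f)\mapsto -\beta\,\grad(\Delta f)$ on $\grad(\Fcal^{s+1}(M))$.

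Next I would verify each block separately. The harmonic block is the identity on a finite-dimensional subspace, trivially an isomorphism. The middle block is a second-order elliptic differential operator, hence bounded $H^{s}\to H^{s-2}$; invertibility follows because $(-\Delta)$ is self-adjoint and nonnegative on $\delta\Omega^{2}(M)$ with trivial kernel---harmonic forms are $L^{2}$--orthogonal to $\delta\Omega^{2}(M)$---so its spectrum is bounded below by some $\lambda_{1}>0$, giving a strictly positive lower bound $(1-\gamma)+\alpha\lambda_{1}$ for the block under the standing assumptions $\alpha>0$ and $0\le\gamma\le 1$. The gradient block factors through the isomorphism $\Delta\colon\Fcal^{s+1}\to\Fcal^{s-1}$ recalled in the introduction, composed with the obvious gradient identifications at the two relevant Sobolev levels. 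The main technical point is ensuring that the Helmholtz--Hodge splitting is a topological direct sum in the Sobolev category, so that the block structure of $\Acal^{\sharp}$ survives at the Banach level; this is standard but essential to the reduction.
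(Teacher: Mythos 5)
Your proof is correct and takes essentially the same approach as the paper's: both rely on the smoothness of the Sobolev Hodge decomposition and the resulting block-diagonal form of $\Acal^{\sharp}$ (the paper packages this by explicitly introducing the conjugating isomorphism $Z$, you phrase it directly in terms of the decomposition, but the content is identical). The only substantive difference is that you supply an explicit spectral-gap argument for the invertibility of the block $(1-\gamma)\id - \alpha\Delta$ on $\delta\Omega^{2}(M)$, whereas the paper simply cites elliptic theory from~\cite{Ta1996a}; your elaboration is sound and makes the proof a bit more self-contained.
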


\begin{proof}
	Let $R\colon\Omega^{1,s}(M)\to\Har^{1}(M)$ and $P_\mathrm{ex}\colon \Omega^{1,s}(M)\to \delta\Omega^{1,s+1}(M)$ be the Hodge projections onto $\Har^{1}(M)$ and $\delta\Omega^{1,s+1}(M)$ respectively.
	These mappings are smooth, as the Hodge decomposition of $\Omega^{1,s}(M)$ is smooth~\cite{EbMa1970}.
	Since the musical isomorphism is smooth it follows that $Z\colon u \mapsto (Ru^{\flat},P_{\mathrm{ex}}u^{\flat},\divv(u))$ is a smooth mapping $\Xcal^{s}(M) \to \Har^{1}(M)\times\delta\Omega^{1,s+1}(M)\times \Fcal^{s-1}(M)$;
	it is in fact an isomorphism, since it has a smooth inverse given by $(h,\sigma,\rho)\mapsto h^{\sharp} + \sigma^{\sharp} + \grad(\Delta^{-1}(\rho))$. 
	(Notice that $\Delta^{-1}\colon \Fcal^{s-1}(M)\to\Fcal^{s+1}(M)$.)

	From the definition \eqref{eq:Acal} of $\Acal$ it follows that
	\begin{equation*}
		\Acal^{\sharp} = Z^{-1}\circ (\id,(1-\gamma)\id - \alpha\Delta, -\beta\Delta )\circ Z
	\end{equation*}
	This is an isomorphism $\Xcal^{s}(M)\to\Xcal^{s-2}(M)$ since $((1-\gamma)\id-\alpha\Delta)\colon\delta\Omega^{1,s+1}(M) \to \delta\Omega^{1,s-1}(M)$ and $\Delta\colon\Fcal^{s-1}(M)\to\Fcal^{s-3}(M)$ are isomorphisms (see~\cite{Ta1996a}).
	The inverse is
	\begin{equation*}
		(\Acal^{\sharp})^{-1} = Z^{-1}\circ\big(\id,\big((1-\gamma)\id-\alpha\Delta\big)^{-1},-\frac{1}{\beta}\Delta^{-1} \big)\circ Z,
	\end{equation*}
	which concludes the proof.
\end{proof}

From the definition of $u$ it follows that $u(\varphi(x)) = \dot\varphi(x)$ for $x\in M$.
By differentiation with respect to $t$ we obtain
\begin{equation}\label{eq:second_derivation}
	\frac{\ud}{\ud t}\Big( u\big(\varphi(x)\big) \Big)  = \ddot\varphi(x) \in T^{2}_{(\varphi(x),\dot\varphi(x))}M .
\end{equation}
The Levi--Civita connection $\nabla$, induced by the Riemannian metric $\met$ on $M$, defines a diffeomorphism $(c,\dot c,\ddot c) \mapsto (c,\dot c,\nabla_{\dot c}\,\dot c)$ between the second tangent bundle $T^{2}M$ and the Whitney sum $TM\oplus TM$.
By pointwise operations, this diffeomorphism identifies the second tangent bundle $T^{2}\Diff^{s}(M)$ with the Whitney sum $T\Diff^{s}(M)\oplus T\Diff^{s}(M)$.
By the $\omega$--lemma (see, e.g.,~\cite{EbMa1970}) the identification is smooth.
Using this identification and $u = \dot\varphi\circ\varphi^{-1}$, we express equation~\eqref{eq:second_derivation} as
\begin{equation*}
	\dot u + \nabla_{u}u = \Big(\frac{D}{\ud t}\dot\varphi\Big)\circ\varphi^{-1},
\end{equation*}
where $\frac{D}{\ud t}\dot\varphi(x) \coloneqq \nabla_{\dot\varphi(x)}\dot\varphi(x)$ is the co-variant derivative along the path itself.
We now rewrite equation~\eqref{eq:maineq1} as
\begin{equation}\label{eq:pre_geodesic_spray}
	\Acal^\sharp \Big(\big(\frac{D}{\dd t}\frac{\dd\varphi}{\dd t}\big)\circ\varphi^{-1} \Big)
	= - (\LieD_u \Acal u)^\sharp - (\Acal^\sharp u) \divv(u) + \Acal^\sharp \nabla_u u
	\eqqcolon F(u) .
\end{equation}

The approach is to show that~\eqref{eq:pre_geodesic_spray} defines a smooth spray on $\Diff^{s}(M)$, i.e., a smooth vector field
\begin{equation*}
	\tilde S\colon T\Diff^{s}(M) \to T^{2}\Diff^{s}(M) \simeq T\Diff^{s}(M)\oplus T\Diff^{s}(M) .
\end{equation*}

Let $R_{\psi}\colon\Diff^{s}(M)\to\Diff^{s}(M)$ denote composition with $\psi\in\Diff^{s}(M)$ from the right, i.e., $R_{\psi}(\varphi) = \varphi\circ\psi$.
As already mentioned, this is a smooth mapping, so the corresponding tangent mapping $TR_{\psi}$, given by $T_{\varphi}\Diff^{s}(M) \ni v \mapsto v\circ\psi\in T_{\varphi\circ\psi}\Diff^{s}(M)$, is also smooth.
Let $T\Diff^{s-2}(M)\!\upharpoonright\!\Diff^{s}(M)$ denote the restriction of the tangent bundle $T\Diff^{s-2}(M)$ to the base $\Diff^{s}(M)$.
This is a smooth Banach vector bundle~\cite[Appendix~A]{EbMa1970}.
If $B\colon\Xcal^{s}(M)\to\Xcal^{s-2}(M)$ we denote by $\tilde B$ the bundle mapping $T\Diff^{s}(M)\to T\Diff^{s-2}(M)\!\upharpoonright\!\Diff^{s}(M)$ given by
\begin{equation*}
	\tilde B\colon (\varphi,\dot\varphi) \mapsto (\varphi,\tilde B_{\varphi}(\dot\varphi)), \quad \tilde B_{\varphi}(\dot\varphi)\coloneqq TR_{\varphi}\circ B\circ TR_{\varphi^{-1}} .
\end{equation*}

If $B$ is smooth, then $\tilde B_{\varphi}\colon T_{\varphi}\Diff^{s}(M)\to T_{\varphi}\Diff^{s-2}(M)$ is smooth for fixed $\varphi\in\Diff^{s}(M)$, but $\tilde B\colon T\Diff^{s}(M)\to T\Diff^{s-2}(M)$ need not be smooth, because $\varphi \mapsto \varphi^{-1}$ is not smooth.
The following lemmas resolve the situation in our specific case.

\begin{lemma}\label{lem:Atilde_smooth}
	The mapping
	\begin{equation*}
		\tilde{\Acal}^{\sharp}\colon T\Diff^{s}(M) \to T\Diff^{s-2}(M)\!\upharpoonright\!\Diff^{s}(M)
	\end{equation*}
	is a smooth vector bundle isomorphism.
\end{lemma}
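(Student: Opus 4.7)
The plan is to establish three ingredients: fibre preservation, fibrewise linear invertibility, and global smoothness (together with smoothness of the inverse). The first two are immediate from the definition $\tilde{\Acal}^{\sharp}_{\varphi} = TR_\varphi\circ\Acal^\sharp\circ TR_{\varphi^{-1}}$ combined with \autoref{lem:A_pseudo_differential}: the maps $TR_\varphi$ and $TR_{\varphi^{-1}}$ are inverse linear bijections between the fibres $T_\varphi\Diff^s(M)$ and $\Xcal^s(M)$ (and between $T_\varphi\Diff^{s-2}(M)$ and $\Xcal^{s-2}(M)$), sandwiching the linear isomorphism $\Acal^\sharp\colon\Xcal^s(M)\to\Xcal^{s-2}(M)$.

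The real work is global smoothness. The main obstacle is that $\varphi\mapsto\varphi^{-1}$ is only continuous---not smooth---in the $H^s$ topology, so smoothness of $\tilde{\Acal}^\sharp$ does not follow from the composition formula directly. To bypass this I would exploit the factorisation
\[
	\Acal^\sharp = Z^{-1}\circ\bigl(\id,\,(1-\gamma)\id-\alpha\Delta,\,-\beta\Delta\bigr)\circ Z
\]
from the proof of \autoref{lem:A_pseudo_differential}, reducing the problem to showing that conjugation by $\varphi$ of each atomic ingredient---the musical maps, $\dd$, $\delta$, $\grad$, $\divv$, the Hodge projections $R$ and $P_{\mathrm{ex}}$, the Laplacians on functions and on one-forms, and the inverse operators $\bigl((1-\gamma)\id-\alpha\Delta\bigr)^{-1}$ and $\Delta^{-1}$---yields a smooth bundle map between the appropriate Sobolev tangent bundles.

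For the purely differential ingredients I would invoke the classical Ebin--Marsden trick: if $D$ is a differential operator of order $k$ with smooth coefficients, the identity $\tilde D_\varphi(v)(x)=\bigl(D(v\circ\varphi^{-1})\bigr)(\varphi(x))$ unfolds by the chain rule into a local-coordinate expression involving only $\varphi(x)$, $D\varphi(x)$, and the matrix inverse $(D\varphi(x))^{-1}$ at the single point $x$---$\varphi^{-1}$ is never evaluated at any other point. The $\omega$-lemma, together with Sobolev multiplication and the smoothness of matrix inversion at an invertible argument, then furnishes smoothness of $\tilde D$ as a map $T\Diff^s(M)\to T\Diff^{s-k}(M)\!\upharpoonright\!\Diff^s(M)$. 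For the non-local ingredients $\Delta^{-1}$ and $R$, I would use the elementary identity $\tilde{\Delta^{-1}}_\varphi=(\tilde\Delta_\varphi)^{-1}$ on the relevant closed invariant subspace, combined with the general fact that inversion of a smoothly-varying family of Banach isomorphisms is itself a smooth operation; the harmonic projection $R$ is then handled analogously via its expression in terms of $\dd$, $\delta$, and $\Delta^{-1}$.

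Assembling these smooth pieces according to the factorisation above yields smoothness of $\tilde{\Acal}^\sharp$. Applying the same reasoning to the explicit inverse formula
\[
	(\Acal^\sharp)^{-1}=Z^{-1}\circ\bigl(\id,\,\bigl((1-\gamma)\id-\alpha\Delta\bigr)^{-1},\,-\tfrac{1}{\beta}\Delta^{-1}\bigr)\circ Z
\]
from \autoref{lem:A_pseudo_differential} handles $(\tilde{\Acal}^\sharp)^{-1}$, completing the plan. I expect the non-local step---transferring smoothness of Banach-space inversion through the conjugation while controlling which closed subspaces remain invariant under $\tilde\Delta_\varphi$ as $\varphi$ varies---to be the most delicate part to execute rigorously.
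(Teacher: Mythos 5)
Your proposal takes a genuinely different route from the paper, and the gap sits at exactly the step you flag as delicate. You want to conjugate the composition $Z^{-1}\circ\bigl(\id,\,(1-\gamma)\id-\alpha\Delta,\,-\beta\Delta\bigr)\circ Z$ and the explicit inverse formula factor by factor, but $Z$ and $Z^{-1}$ are operators between $\Xcal^{s}(M)$ and the product $\Har^1(M)\times\delta\Omega^{1,s+1}(M)\times\Fcal^{s-1}(M)$, and conjugation by right translation does not preserve that codomain: the pullback $\varphi^{*}\Har^{1}(M)$ consists of forms harmonic with respect to $\varphi^{*}\met$, not $\met$; the co-exact subspace and the zero-mean condition defining $\Fcal^{s-1}(M)$ are similarly $\varphi$-dependent. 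So the ``smoothly-varying family of Banach isomorphisms'' you want to invert acts between $\varphi$-dependent subspaces, and before smoothness of Banach inversion can be applied one must first exhibit these subspaces as smooth sub-bundles over $\Diff^{s}(M)$ and trivialise them smoothly. That trivialisation is precisely the content of the Ebin--Marsden result that the Hodge decomposition conjugates smoothly over $\Diff^{s}(M)$, so your atomic approach would have to re-prove it rather than invoke it, and the proposal does not supply that argument.

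The paper sidesteps the issue with a different decomposition: instead of the composition through $Z$, it writes $\tilde{\Acal}^{\sharp}=\tilde R^{\sharp}+(1-\gamma)\tilde P_{\mathrm{ex}}^{\sharp}+\tilde W$ as a \emph{sum}, so every conjugated operator is a bundle map $T\Diff^{s}(M)\to T\Diff^{s-k}(M)\!\upharpoonright\!\Diff^{s}(M)$ and no $\varphi$-dependent intermediate spaces appear. Smoothness of $\tilde R^{\sharp}$ and $\tilde P_{\mathrm{ex}}^{\sharp}$ is then exactly the cited Ebin--Marsden statement on the Hodge projections, while $\tilde W$ is a genuine differential operator handled by the chain-rule/$\omega$-lemma argument you also describe. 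For the invertibility, rather than conjugating an explicit inverse, the paper computes that the derivative of $\tilde{\Acal}^{\sharp}$ in a local chart is lower-triangular with diagonal blocks $\id$ and $\tilde{\Acal}^{\sharp}_{\varphi}$, both isomorphisms by \autoref{lem:A_pseudo_differential}, and then invokes the inverse function theorem for Banach manifolds. That yields smoothness of the inverse without touching the $\varphi$-dependent subspaces that obstruct your plan; to keep your composition route you would need to supply the sub-bundle trivialisation you yourself identify as the delicate point.
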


\begin{proof}
	We have 
	\begin{equation*}
		\tilde{\Acal}^{\sharp} = \tilde R^{\sharp} + (1-\gamma)\tilde P_{\mathrm{ex}}^{\sharp} + \tilde W
	\end{equation*}
	where $P_{\mathrm{ex}}^{\sharp} = \sharp\circ P_{\mathrm{ex}}\circ\flat$, $R^{\sharp} = \sharp\circ R\circ \flat$ are the lifted Hodge projections and $W = \alpha\,\sharp\circ\delta\circ\dd\circ\flat - \beta\,\grad\circ\divv$.
	$\tilde P_{\mathrm{ex}}^{\sharp}$ and $\tilde R^{\sharp}$ are smooth bundle maps $T\Diff^{s}(M)\to T\Diff^{s}(M)$~\cite[Appendix A, Lemmas~2,3,6]{EbMa1970}.
	Thus, $\tilde P_{\mathrm{ex}}^{\sharp}$ and $\tilde R^{\sharp}$ are also smooth as mappings $T\Diff^{s}(M)\to T\Diff^{s-2}(M)\!\upharpoonright\!\Diff^{s}(M)$.
	That $\tilde W$ is smooth follows from~\cite[Appendix A, Lemma~2]{EbMa1970}.

	In a local chart in a neighbourhood of $\varphi\in\Diff^{s}(M)$, the derivative of $\tilde{\Acal}^{\sharp}$ at $(\varphi,\dot\varphi)$ is a smooth linear mapping of the form
	\begin{equation*}
		\begin{pmatrix}
			\id & 0 \\
			* & \tilde{\Acal}_{\varphi}^{\sharp}
		\end{pmatrix}\colon T_{\varphi}\Diff^{s}(M)\times T_{\varphi}\Diff^{s}(M) \to T_{\varphi}\Diff^{s}(M)\times T_{\varphi}\Diff^{s-2}(M)
	\end{equation*}
	It follows from \autoref{lem:A_pseudo_differential} that $\tilde{\Acal}^{\sharp}_{\varphi}$ is a linear isomorphism, with smooth inverse given by $\widetilde{(\Acal^{\sharp})^{-1}_{\varphi}}$.
	The result now follows from the inverse function theorem for Banach manifolds.
\end{proof}

\begin{lemma}\label{lem:differential_order_reduction}
	Let $B\colon\Xcal^{s}(M)\to\Xcal^{s-k}(M)$ be a smooth linear differential operator of order~$k$.
	If $s>n/2+k$, then
	\begin{equation*}
		u \longmapsto B\nabla_{u}u-\nabla_{u}B u = [B,\nabla_{u}]u 
	\end{equation*}
	is a smooth non-linear differential operator $\Xcal^{s}(M)\to\Xcal^{s-k}(M)$.
\end{lemma}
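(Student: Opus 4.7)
The plan is to exploit the classical principle that the commutator of two differential operators loses its top-order terms, so that although $B\nabla_{u}u$ and $\nabla_{u}Bu$ are naively of order $k+1$ in $u$, the commutator $[B,\nabla_{u}]u$ is actually only of order $k$, and then to promote the resulting expression to a smooth map between the relevant Sobolev spaces via pointwise-multiplication lemmas and the $\omega$--lemma cited earlier.

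First I would fix a finite atlas with local trivialisations of $TM$, in which
\[
	Bu = \sum_{|\alpha|\le k} a_{\alpha}(x)\,\partial^{\alpha}u,\qquad \nabla_{u}v = u^{i}\partial_{i}v + \Gamma(u,v),
\]
where the $a_{\alpha}$ are smooth and matrix valued and $\Gamma$ is the smooth bilinear bundle map built from Christoffel symbols. Applying Leibniz to $B(u^{i}\partial_{i}u)$ produces $\sum_{\beta+\gamma=\alpha}\binom{\alpha}{\beta}(\partial^{\beta}u^{i})(\partial^{\gamma}\partial_{i}u)$; at top order $|\alpha|=k$ with $|\beta|=0$ this contributes $a_{\alpha}\,u^{i}\,\partial^{\alpha}\partial_{i}u$. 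Exactly the same monomial appears as the top-order piece of $\nabla_{u}Bu = u^{i}\partial_{i}(Bu)+\Gamma(u,Bu)$, with matching coefficient since partial derivatives commute, and so it cancels in the commutator. The Christoffel pieces $\Gamma(u,\cdot)$ carry no extra derivatives of $u$, hence produce no order $k+1$ terms either. What survives is a finite sum of monomials of the schematic form $c(x)\,\partial^{\beta}u\cdot\partial^{\gamma}u$, together with cubic terms of similar shape from $\Gamma$, where $c$ is smooth and $|\beta|,|\gamma|\le k$.

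The second step verifies that each such monomial is smooth $\Xcal^{s}(M)\to\Xcal^{s-k}(M)$. Under the hypothesis $s>n/2+k$, all Sobolev classes $H^{s-j}$ with $0\le j\le k$ sit strictly above the critical exponent $n/2$, so pointwise multiplication is continuous bilinear $H^{s-|\beta|}\times H^{s-|\gamma|}\to H^{s-k}$ by the standard Banach-algebra estimate. Each monomial is therefore a continuous multilinear map in $u$, hence smooth; premultiplication by the smooth coefficient $c(x)$ is a continuous linear operation handled by the $\omega$--lemma. Gluing the local expressions by a smooth partition of unity and summing finitely many smooth maps delivers smoothness of $u\mapsto [B,\nabla_{u}]u$ on the whole of $\Xcal^{s}(M)$.

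The main obstacle is clean bookkeeping: one must verify that every monomial emerging from the expansion, including the cubic contributions from $\Gamma(u,u)$ and the mixed terms where $B$ acts on $\Gamma(u,u)$, has at most $k$ derivatives on any single $u$ factor, and that the index arithmetic for the Sobolev multiplication lemma works out uniformly across these finitely many term types. Nothing deeper than the classical Schauder estimates is needed, but the cancellation count must be executed carefully. As a sanity check, the simplest case $B=\nabla_{X}$ reproduces the curvature identity $[\nabla_{X},\nabla_{u}]u = R(X,u)u + \nabla_{\nabla_{X}u-\nabla_{u}X}u$, from which the drop from two to one derivative on $u$ is manifest.
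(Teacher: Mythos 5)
Your argument is correct and rests on the same key observation as the paper's proof: the naive count gives $k+1$ derivatives on $u$, but the top-order terms in $B\nabla_{u}u$ and $\nabla_{u}Bu$ agree (because $B$ commutes with the principal symbol of $\nabla_{u}$ in its second slot) and therefore cancel, leaving an operator of order $k$. The execution differs. You carry this out by an explicit coordinate-chart Leibniz expansion, isolating the monomial $a_{\alpha}u^{i}\partial^{\alpha}\partial_{i}u$ on each side, and then invoke the Sobolev multiplication lemma $H^{s-|\beta|}\times H^{s-|\gamma|}\to H^{s-k}$ (valid since $s-|\beta|,s-|\gamma|>n/2$) to get smoothness. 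The paper instead proceeds structurally: it splits $\nabla_{v}=G+f\cdot\mathrm{diag}$ into a tensorial part $G$ and a \emph{scalar} first-order operator $f$ acting diagonally on components, so that $[B,\nabla_{v}]=[B,G]+([f,b^{i}_{j}]u^{j})\mathbf{e}_{i}$; the first bracket is order $k$ because $G$ is tensorial, and the second is order $k$ by the classical scalar commutator rule, without writing out Leibniz sums. The paper's route buys a cleaner conceptual explanation of why the cancellation happens and explicitly flags (and then circumvents) the fact that the commutator order-reduction rule fails for general matrix-valued operators; your route is more elementary and makes the Sobolev estimates explicit, which the paper leaves implicit.

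One small slip in your bookkeeping: there are no cubic terms. Both $\nabla_{u}u=u^{i}\partial_{i}u+\Gamma(u,u)$ and $Bu$ are at most quadratic (resp.\ linear) in $u$, so every surviving monomial is quadratic of the form $c(x)\,\partial^{\beta}u\cdot\partial^{\gamma}u$. This does not affect your conclusion, since the multiplication lemma would also handle higher-degree monomials, but the statement is inaccurate as written.
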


\begin{proof}
	If $f$ and $g$ are a scalar differential operators of order $k$ and $l$ respectively, then $[f,g]$ is a scalar differential operator of order $k+l-1$, since the order~$k+l$ differential terms in the commutator cancel each other.
	In general, this is not true for vector valued differential operators.
	Nevertheless, for a fixed $v$, the linear operator $u \mapsto \nabla_{v}u$ is given in components by
	\begin{equation*}
		\nabla_{v}u = \Big( v^{i}u^{j}\Gamma^{k}_{ij} + v^{i}\frac{\pd u^{k}}{\pd x^{i}} \Big) \mathbf{e}_{k},
	\end{equation*}
	so the differentiating part of $\nabla_{v}$ is acting diagonally on the elements of~$u$.
	We write $\nabla_{v}u = G u + f(u^{i})\mathbf{e}_{i}$, where $G\colon\Xcal^{s}(M)\to\Xcal^{s}(M)$ is tensorial and $f$ is a scalar differential operator of order~1.
	If $B = (b^{i}_{j})$, so that $B(u_{1}\pd_1 +\ldots + u_{n}\pd_{n}) = b^{i}_{j}(u^{j})\mathbf{e}_{i}$, then
	\begin{equation*}
		[B,\nabla_{v}]u = [B,G]u + \big([f,b^{i}_{j}]u^{j}\big) \mathbf{e}_{i}.
	\end{equation*}
	Since $G$ is tensorial, $[B,G]$ is a differential operator of the same order as $B$, i.e., order~$k$.
	Since $f$ and $b_{ij}$ are scalar differential operators of order $1$ and $k$, it holds that $[f,b_{ij}]$ is of order $k+1-1= k$.
	Since $\nabla_{v}B u$ differentiates $v$ zero times, and $B\nabla_{v}u$ differentiates $v$ at most $k$ times, it is now clear that the total operation $u\mapsto [B,\nabla_{u}u]$ differentiates $u$ at most~$k$ times, which finishes the proof.
	%
	%
	%
	%
	%
\end{proof}

\begin{lemma}\label{lem:conjugation_smooth}
	Let $B\colon\Xcal^{s}(M)\to\Xcal^{s-k}(M)$ be a smooth linear differential operator of order~$k$.
	If $s > n/2 + k$, then
	\begin{equation*}
		\tilde B\colon T\Diff^{s}(M) \to T\Diff^{s-k}(M)\!\upharpoonright\!\Diff^{s}(M)
	\end{equation*}
	is a smooth bundle map.
\end{lemma}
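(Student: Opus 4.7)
The plan is to reduce by a finite atlas and subordinate partition of unity to smoothness of the local expression of $\tilde B$ in a single chart, and then to exhibit that local expression as a finite sum of pointwise products of smooth maps of $(\varphi,\dot\varphi)$ into $H^{s-k}$. Structurally, this will extend \autoref{lem:Atilde_smooth} from the specific operators $R^{\sharp}$, $P^{\sharp}_{\mathrm{ex}}$, and $W$ (handled directly by \cite[Appendix A]{EbMa1970}) to an arbitrary smooth linear differential operator of order $k$.

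In a local chart write $B(u)^j=\sum_{|\alpha|\le k}b^j_{\alpha,i}(y)\,\partial^\alpha u^i(y)$ with smooth coefficients $b^j_{\alpha,i}$. Substituting $u=\dot\varphi\circ\varphi^{-1}$ and evaluating at $y=\varphi(x)$, the chain rule (Faà di Bruno) produces an identity
\begin{equation*}
	\big((\partial^\alpha u^i)\circ\varphi\big)(x)=\sum_{|\beta|\le|\alpha|} Q^i_{\alpha,\beta}[\varphi](x)\,\partial^\beta\dot\varphi^i(x),
\end{equation*}
in which each $Q^i_{\alpha,\beta}[\varphi]$ is a fixed polynomial in the entries of $(D\varphi)^{-1}$ and in partial derivatives of $\varphi$ of order at most $|\alpha|\le k$. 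Hence $\tilde B_\varphi(\dot\varphi)$ is a finite sum of pointwise products $(b^j_{\alpha,i}\circ\varphi)\cdot Q^i_{\alpha,\beta}[\varphi]\cdot\partial^\beta\dot\varphi^i$.

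Smoothness of each factor as a map into $H^{s-k}$ rests on standard Sobolev facts: (i) the $\omega$-lemma gives smoothness of $\varphi\mapsto b^j_{\alpha,i}\circ\varphi$ from $\Diff^s(M)$ into $H^s(M)\hookrightarrow H^{s-k}(M)$; (ii) since $s-1>n/2$, $H^{s-1}$ is a Banach algebra and matrix inversion is smooth on its open subset of pointwise invertible matrix fields, so $\varphi\mapsto(D\varphi)^{-1}$ and hence each $Q^i_{\alpha,\beta}[\varphi]$ is smooth into $H^{s-k}$; (iii) $\partial^\beta\colon H^s\to H^{s-k}$ is bounded linear for $|\beta|\le k$. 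Finally, $s-k>n/2$ makes $H^{s-k}$ itself a multiplication algebra, so pointwise products of the above factors are smooth bilinear maps into $H^{s-k}$; summing and patching via the partition of unity yields a smooth bundle map $T\Diff^s(M)\to T\Diff^{s-k}(M)\!\upharpoonright\!\Diff^s(M)$.

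The main obstacle I anticipate is the careful Sobolev-exponent bookkeeping needed to track how many derivatives the chain rule puts on $\varphi$ versus on $\dot\varphi$ and to confirm that the hypothesis $s>n/2+k$ is exactly what closes the argument—both by making $H^{s-k}$ a multiplication algebra and by keeping every derivative of $\varphi$ generated by the chain rule in $H^{s-k}$. A secondary technicality is checking that the partition-of-unity patching respects the bundle structure on $T\Diff^{s-k}(M)\!\upharpoonright\!\Diff^s(M)$, but this is automatic from the coordinate-invariance of $B$.
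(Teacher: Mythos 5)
Your proof is correct, and it takes a genuinely different route from the paper's. The paper first establishes the order-$1$ case directly (noting that locally $\tilde B(\varphi,\dot\varphi)$ is a rational combination of $\varphi^i,\dot\varphi^i,\partial_j\varphi^i,\partial_j\dot\varphi^i$, with smoothness coming from pointwise multiplication and the Banach algebra structure), and then bootstraps to order $k$ by writing $B=B_1\cdots B_k$ as a composition of first-order factors, applying the order-$1$ result at each stage and composing the resulting bundle maps $T\Diff^{s-j+1}\to T\Diff^{s-j}\!\upharpoonright\!\Diff^{s-j+1}$. You instead handle order $k$ in one shot via an explicit Faà di Bruno expansion of $(\partial^\alpha u)\circ\varphi$ with $u=\dot\varphi\circ\varphi^{-1}$, reducing to the $\omega$-lemma, smoothness of matrix inversion in the Banach algebra $H^{s-1}$, boundedness of $\partial^\beta\colon H^s\to H^{s-k}$, and the multiplication algebra property of $H^{s-k}$ when $s-k>n/2$. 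Both proofs ultimately rest on the same Sobolev ingredients, but your direct expansion is more self-contained and sidesteps one subtle point in the paper's factorisation step: a general order-$k$ operator need not factor as a \emph{single} composition $B_1\cdots B_k$ of real first-order operators (e.g.\ the flat Laplacian on $\R^2$ does not), and the paper's argument should more precisely be read as applying to a \emph{sum} of such compositions, using linearity of $B\mapsto\tilde B$. Your version trades this for the (correct, but somewhat tedious) bookkeeping that the coefficients $Q^i_{\alpha,\beta}[\varphi]$ are polynomials in $(D\varphi)^{-1}$ and $\partial^\gamma\varphi$ with $|\gamma|\le|\alpha|\le k$, which you identify as the place where the hypothesis $s>n/2+k$ is genuinely used on both the $\varphi$-side and the $\dot\varphi$-side.
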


\begin{proof}
	Assume first that $B$ is of order~1.
	Locally, $\tilde B(\varphi,\dot\varphi)$ is then constructed by rational combinations of $\varphi^{i},\dot\varphi^{i},\frac{\ud \varphi^{i}}{\ud x^{j}},\frac{\ud \dot\varphi^{i}}{\ud x^{j}}$.
	Smoothness follows since pointwise multiplications are smooth operations~\cite[Appendix A, Lemma~2]{EbMa1970}.

	We now decompose $B$ into a sequence of~$k$ first order operators, so that $B = B_{1}\cdots B_{k}$.
	Then $\tilde B = \tilde B_{1} \cdots \tilde B_{k}$ and by the first part of the proof each $\tilde B_{i}$ is smooth and drops differentiability by one.
	This finishes the proof.
\end{proof}

\begin{remark}\label{rmk:bilinear_conjugation}
	Notice that if $Q\colon\Xcal^{s}(M)\times\Xcal^{s-l}(M)\to\Xcal^{s-k}$ is a bilinear differential operator of order~$k\geq 0$ in its first argument and $k-l\geq 0$ in its second argument, then \autoref{lem:conjugation_smooth} implies that
	\begin{equation*}
		\tilde Q\colon T\Diff^{s}(M)\times T\Diff^{s-l}(M)\!\upharpoonright\!\Diff^{s}(M) \to T\Diff^{s-k}(M)\!\upharpoonright\!\Diff^{s}(M)
	\end{equation*}
	is smooth whenever $s>n/2 + k$.
\end{remark}

\begin{lemma}\label{lem:S_tilde_smooth}
	If $s>n/2+2$, then
	\begin{equation*}
		\tilde F:T\Diff^{s}(M) \to T\Diff^{s-2}(M)\!\upharpoonright\!\Diff^{s}(M)
	\end{equation*}
	is a smooth bundle map.
\end{lemma}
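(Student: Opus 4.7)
The plan is to show that the apparent third-order derivatives of $u$ in $F(u)$ cancel, so that $F$ is in fact a smooth non-linear differential operator of order two, after which the preceding conjugation results apply. The cancellation rests on the identity
\[
(\LieD_u v^{\flat})^{\sharp} = [u,v] + \big((\LieD_u \met)(v,\cdot)\big)^{\sharp},
\]
which follows from metric-compatibility of the Levi--Civita connection, combined with $[u,w] = \nabla_u w - \nabla_w u$. Applying these to $v = \Acal^{\sharp} u$, together with the trivial rewriting $\Acal^{\sharp}\nabla_u u = \nabla_u \Acal^{\sharp} u + [\Acal^{\sharp},\nabla_u] u$, the two occurrences of $\nabla_u \Acal^{\sharp} u$ cancel and $F$ reduces to
\[
F(u) = \nabla_{\Acal^{\sharp} u} u \;-\; \big((\LieD_u \met)(\Acal^{\sharp} u, \cdot)\big)^{\sharp} \;-\; (\Acal^{\sharp} u)\,\divv(u) \;+\; [\Acal^{\sharp},\nabla_u] u.
\]

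Each surviving term is a quadratic differential operator in $u$ of order at most two and polarises into a bilinear differential operator $B(v,w)\colon \Xcal^{s}(M) \times \Xcal^{s}(M) \to \Xcal^{s-2}(M)$ of order $\leq 2$ in each argument. For the first three pieces this is immediate from direct inspection (noting that $\Acal^{\sharp}$ contributes order two to one argument); for the commutator piece it is precisely \autoref{lem:differential_order_reduction} that saves one derivative. To each such bilinear $B$ I would apply \autoref{rmk:bilinear_conjugation} with $k=2$, $l=0$, obtaining a smooth bundle map
\[
\tilde B \colon T\Diff^{s}(M) \times T\Diff^{s}(M)\!\upharpoonright\!\Diff^{s}(M) \longrightarrow T\Diff^{s-2}(M)\!\upharpoonright\!\Diff^{s}(M),
\]
and compose with the smooth diagonal $T\Diff^{s}(M) \to T\Diff^{s}(M)\oplus T\Diff^{s}(M)$. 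Summing the four smooth contributions yields that $\tilde F$ is itself a smooth bundle map.

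The main obstacle is the cancellation identity: without it, $F$ genuinely contains a third-order derivative of $u$, so $\tilde F$ fails to take values in $T\Diff^{s-2}(M)\!\upharpoonright\!\Diff^{s}(M)$ at all, and the Banach-manifold ODE framework collapses at the last step of the geodesic-spray construction. The cancellation is not accidental: equation~\eqref{eq:maineq1} is the Euler--Arnold equation~\eqref{eq:EulerArnold_abstract} for the second-order inertia operator $\Acal$, and the skew-symmetry built into $\ad^{*}$ forces the spray to drop exactly two derivatives. A smaller but necessary piece of bookkeeping is verifying that each polarised bilinear piece satisfies the hypothesis $k-l \geq 0$ of \autoref{rmk:bilinear_conjugation} at $s > n/2+2$; this is what pins down the regularity threshold in the statement.
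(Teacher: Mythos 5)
Your reduction of $F(u)$ is correct and matches the paper's: your identity $(\LieD_u v^{\flat})^{\sharp} = [u,v] + \big((\LieD_u\met)(v,\cdot)\big)^{\sharp}$ is exactly the paper's $(\LieD_u v^\flat)^\sharp = \LieD_u v + 2\,\mathrm{Def}(u)v$ written out, and the four-term expression for $F$ is the same. The gap is in how you certify smoothness of the four $\widetilde{(\cdot)}$~bundle maps.

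You treat each surviving term, including the commutator, as a bilinear \emph{differential} operator and then invoke \autoref{lem:differential_order_reduction} and \autoref{rmk:bilinear_conjugation}. But both of those results are stated, and proved, only for genuine differential operators: the proof of \autoref{lem:conjugation_smooth} (on which the remark rests) factors $B$ into a composition of first-order operators, and the proof of \autoref{lem:differential_order_reduction} works with the symbol $B=(b^i_j)$ in local coordinates. The operator $\Acal^{\sharp}$ is \emph{not} a differential operator: it contains the Hodge projections $R^{\sharp}$ and $P_{\mathrm{ex}}^{\sharp}$, which involve $\Delta^{-1}$ and are nonlocal pseudo-differential operators of order~$0$. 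So neither lemma applies to $[\Acal^{\sharp},\nabla_u]u$, $\nabla_{\Acal^{\sharp}u}u$, $2\,\mathrm{Def}(u)\Acal^{\sharp}u$, or $(\Acal^{\sharp}u)\divv(u)$ as you have grouped them, and your appeal to ``\autoref{lem:differential_order_reduction} saves one derivative'' on the commutator is not licensed by that lemma's hypotheses.

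The paper's proof repairs exactly this by splitting $\Acal^{\sharp} = P + W$ with $P = R^{\sharp} + (1-\gamma)P_{\mathrm{ex}}^{\sharp}$ (the order-$0$ pseudo-differential piece) and $W = \alpha\,\sharp\circ\delta\dd\circ\flat - \beta\,\grad\circ\divv$ (a genuine order-$2$ differential operator), giving $F(u) = [P,\nabla_u]u + [W,\nabla_u]u + Q(u,Pu) + Q(u,Wu)$ with $Q(u,v)=\nabla_v u - 2\,\mathrm{Def}(u)v - v\divv(u)$. The differential-operator lemmas are applied only to $W$ and $Q$; the $P$-pieces are handled by a different mechanism — the Ebin--Marsden result (\cite[Appendix~A]{EbMa1970}) that $\tilde R^{\sharp}$ and $\tilde P_{\mathrm{ex}}^{\sharp}$ are already smooth bundle maps $T\Diff^{s}(M)\to T\Diff^{s}(M)$, after which $\widetilde{[P,\nabla_{(\cdot)}]}$ is a difference of compositions $\tilde P\circ\widetilde{\nabla_{(\cdot)}}$ and $\widetilde{\nabla_{(\cdot)}}\circ\tilde P$, and $\widetilde{Q(\cdot,P\,\cdot)}$ is a composition of $\tilde Q$ with $\tilde P$. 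Without that decomposition and the separate input from Ebin--Marsden for the nonlocal pieces, your argument does not establish the smoothness of $\tilde F$.
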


\begin{proof}
	For any $v\in\Xcal(M)$ we have $(\LieD_u v^\flat)^\sharp = \LieD_u v + 2\mathrm{Def}(u)v$, where $\mathrm{Def}(u)$ is the type~$(1,1)$ tensor defined by $\frac{1}{2}(\LieD_u\met)(v,\cdot) = \met(\mathrm{Def}(u)v,\cdot)$ (see, e.g.,~\cite[\S\!~2.3]{Ta1996a}). 
	Thus
	\[
	\begin{split}
		F(u) &=
		- (\LieD_u \Acal u)^\sharp - (\Acal^\sharp u) \divv(u) + \Acal^\sharp \nabla_u u
		\\ &= -\LieD_u \Acal^\sharp u - 2\mathrm{Def}(u)\Acal^\sharp u -(\Acal^\sharp u) \divv(u)  + \Acal^\sharp \nabla_u u
		\\ &= -\nabla_u \Acal^\sharp u + \nabla_{\Acal^\sharp u} u - 2\mathrm{Def}(u)\Acal^\sharp u -(\Acal^\sharp u) \divv(u)  + \Acal^\sharp \nabla_u u
		\\ &= (\underbrace{\Acal^\sharp \nabla_u -\nabla_u \Acal^\sharp}_{[\Acal^\sharp,\nabla_u]}) u + \nabla_{\Acal^\sharp u} u - 2\mathrm{Def}(u)\Acal^\sharp u -(\Acal^\sharp u) \divv(u),
	\end{split}
	\]
	where we used that $\LieD_u v = \nabla_u v - \nabla_v u$.

	Let $Q\colon\Xcal^{s}(M)\times\Xcal^{s-2}(M)\to\Xcal^{s-2}(M)$ be the bilinear mapping
	\begin{equation*}
		Q(u,v) \coloneqq \nabla_{v} u - 2\mathrm{Def}(u)v - v \divv(u).
	\end{equation*}
	Notice that $Q$ is tensorial in $v$ and of order one in $u$.
	If $s>n/2+2$ then $Q$ is smooth.
	Write $\Acal^{\sharp} = P + W$, where $P = R^{\sharp} + (1-\gamma)P_{\mathrm{ex}}^{\sharp}$ and $W$ is a linear differential operator of order two as above.
	We now have
	\begin{equation*}
		F(u) = [\Acal^\sharp,\nabla_u]u + Q(u,\Acal^{\sharp}u) = [P,\nabla_{u}]u + [W,\nabla_{u}]u + Q(u,Pu) + Q(u,Wu).
	\end{equation*}
	The approach is to show that each of these terms are maximally of order two and smooth under conjugation with right translation.

	For the first term,
	\begin{equation*}
		\widetilde{[P,\nabla_{(\cdot)}]} = \tilde P \circ\widetilde{\nabla_{(\cdot)}} - \widetilde{\nabla_{(\cdot)}} \circ\tilde P.
	\end{equation*}
	We already know that $\tilde P\colon T\Diff^{s}(M)\to T\Diff^{s}(M)$ is smooth. 
	From \autoref{lem:conjugation_smooth} and \autoref{rmk:bilinear_conjugation} it follows that
	$\widetilde{\nabla_{(\cdot)}}\colon T\Diff^{s}(M) \to T\Diff^{s-1}(M)\!\upharpoonright\!\Diff^{s}(M)$ is smooth.

	For the second term, $(u,v)\mapsto [W,\nabla_{v}]u$ is a bilinear differential operator.
	From \autoref{lem:differential_order_reduction} it is of order two (since $W$ is of order two).
	From \autoref{lem:conjugation_smooth} and \autoref{rmk:bilinear_conjugation} it then follows that $\widetilde{[W,\nabla_{(\cdot)}]}$ is smooth.

	For the third term, it follows from \autoref{lem:conjugation_smooth} and \autoref{rmk:bilinear_conjugation} that $\tilde Q$ is smooth of order one. 
	Since $\tilde P$ is smooth of order zero, we get that $\widetilde{Q(\cdot,P\,\cdot\,)}$ is smooth of order one.

	For the fourth term, $(u,v)\mapsto Q(u,W v)$ is a bilinear differential operator of order one and two respectively in its arguments.
	It follows from \autoref{lem:conjugation_smooth} and \autoref{rmk:bilinear_conjugation} that $\widetilde{Q(\cdot,W\,\cdot\,)}$ is smooth of order two.

	Altogether, we have proved that $\tilde F\colon T\Diff^{s}(M) \to T\Diff^{s-2}(M)\!\upharpoonright\!\Diff^{s}(M)$ is smooth, which finishes the proof.
\end{proof}

Equation~\eqref{eq:pre_geodesic_spray} can be written
\begin{equation}\label{eq:geodesic_eq_TR}
	\tilde\Acal^{\sharp}\big(\varphi, \frac{D}{\dd t}\dot\varphi\big) = \tilde F(\varphi,\dot\varphi),
\end{equation}
from which we obtain the main result in this section.

\begin{theorem}\label{thm:smoothness_of_spray}
	If $s>n/2+2$, then the geodesic spray
	\begin{equation*}
		\tilde S\colon T\Diff^{s}(M)\ni (\varphi,\dot\varphi) \mapsto \Big(\varphi,\dot\varphi,\big((\tilde{\Acal}^{\sharp})^{-1}\circ \tilde F \big) (\varphi,\dot\varphi)\Big) \in T\Diff^{s}(M)\oplus T\Diff^{s}(M),
	\end{equation*}
	corresponding to the $\alpha$-$\beta$-$\gamma$~metric~\eqref{eq:metric}, is smooth.
\end{theorem}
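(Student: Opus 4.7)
The plan is to combine the two preceding lemmas and observe that the geodesic spray is essentially a composition of smooth bundle maps, plus the trivial assignment of a base point and first velocity. Once \autoref{lem:Atilde_smooth} and \autoref{lem:S_tilde_smooth} are in hand, the assertion is almost a formality; the real work has already been done.

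First, I would rewrite equation~\eqref{eq:pre_geodesic_spray} in the intrinsic form~\eqref{eq:geodesic_eq_TR}, which displays the covariant acceleration $\frac{D}{\dd t}\dot\varphi$ as the result of applying $(\tilde{\Acal}^{\sharp})^{-1}$ to $\tilde F(\varphi,\dot\varphi)$. Since $\tilde F$ lands in $T\Diff^{s-2}(M)\!\upharpoonright\!\Diff^{s}(M)$ by \autoref{lem:S_tilde_smooth}, and $(\tilde{\Acal}^{\sharp})^{-1}$ maps $T\Diff^{s-2}(M)\!\upharpoonright\!\Diff^{s}(M)$ back to $T\Diff^{s}(M)$ smoothly by \autoref{lem:Atilde_smooth} together with the inverse function theorem for Banach manifolds (which gives smoothness of the inverse of a smooth bundle isomorphism), their composition is a smooth map $T\Diff^{s}(M)\to T\Diff^{s}(M)$ covering the identity on $\Diff^{s}(M)$.

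Second, I would use the identification $T^{2}\Diff^{s}(M)\simeq T\Diff^{s}(M)\oplus T\Diff^{s}(M)$ via the Levi--Civita connection, which was already shown (by the $\omega$--lemma) to be a smooth diffeomorphism. Under this identification the spray $\tilde S$ is the map
\[
(\varphi,\dot\varphi)\longmapsto \bigl(\varphi,\dot\varphi,\bigl((\tilde{\Acal}^{\sharp})^{-1}\circ \tilde F\bigr)(\varphi,\dot\varphi)\bigr),
\]
whose first two components are the identity of $T\Diff^{s}(M)$ (trivially smooth) and whose third component is the composition analysed above (smooth by the previous paragraph). Hence $\tilde S$ is smooth as a map into $T\Diff^{s}(M)\oplus T\Diff^{s}(M)$, and pulling back through the diffeomorphism with $T^{2}\Diff^{s}(M)$ gives the result.

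The main obstacle in the whole chain is not in this final theorem but in \autoref{lem:S_tilde_smooth}, where one must verify that the naive second--order operator $F$ actually loses only two derivatives after being conjugated by right translation. Once that and the invertibility in \autoref{lem:Atilde_smooth} are accepted, the present statement is a routine composition of smooth bundle morphisms, and no additional estimates are required.
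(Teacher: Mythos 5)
Your proposal is correct and takes essentially the same route as the paper, whose proof simply states that the theorem follows from \autoref{lem:Atilde_smooth} and \autoref{lem:S_tilde_smooth}. You have merely unpacked that one-line deduction into the composition of the two smooth bundle maps together with the Levi--Civita identification of $T^{2}\Diff^{s}(M)$ with $T\Diff^{s}(M)\oplus T\Diff^{s}(M)$, which is precisely the intended argument (and note that the smoothness of $(\tilde{\Acal}^{\sharp})^{-1}$ is already part of the conclusion of \autoref{lem:Atilde_smooth}, so re-invoking the inverse function theorem here is harmless but redundant).
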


\begin{proof}
	Follows from \autoref{lem:Atilde_smooth} and \autoref{lem:S_tilde_smooth}.
\end{proof}

This result implies local well-posedness and smooth dependence on initial data.

\begin{corollary}\label{cor:local_existence}
	Under the conditions in \autoref{thm:smoothness_of_spray}, the Riemannian exponential 
	\begin{equation*}
		\mathrm{Exp}\colon T\Diff^{s}(M) \to \Diff^{s}(M),
	\end{equation*}
	corresponding to the $\alpha$-$\beta$-$\gamma$~metric~\eqref{eq:metric}, is smooth.
	Further, if $\varphi\in\Diff^{s}(M)$, then
	\begin{equation*}
		\mathrm{Exp}_{\varphi}\colon T_{\varphi}\Diff^{s}(M) \to \Diff^{s}(M)
	\end{equation*}
	is a local diffeomorphism from a neighbourhood of the origin to a neighbourhood of $\varphi$.
\end{corollary}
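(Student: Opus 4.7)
The plan is to reduce the statement to standard ODE theory on Banach manifolds, applied to the smooth vector field $\tilde S$ provided by \autoref{thm:smoothness_of_spray}. Since $T\Diff^{s}(M)$ is a smooth Banach manifold and $\tilde S$ is a smooth second-order vector field on it (viewed as a section of $TT\Diff^{s}(M)$ via the identification $T^{2}\Diff^{s}(M)\simeq T\Diff^{s}(M)\oplus T\Diff^{s}(M)$ fixed earlier), the standard smooth-dependence theorem for flows of smooth vector fields yields a smooth local flow $\Phi\colon\mathcal{U}\to T\Diff^{s}(M)$ defined on an open neighbourhood $\mathcal{U}$ of $\{0\}\times T\Diff^{s}(M)$ in $\R\times T\Diff^{s}(M)$; see~\cite[\S\!~4]{EbMa1970} for precisely this kind of argument in the context of diffeomorphism groups.

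The Riemannian exponential is then defined in the usual way: if $\gamma_{(\varphi,v)}$ denotes the geodesic with $\gamma_{(\varphi,v)}(0)=\varphi$ and $\dot\gamma_{(\varphi,v)}(0)=v$, set $\mathrm{Exp}(\varphi,v) = \gamma_{(\varphi,v)}(1) = \pi\circ\Phi(1,(\varphi,v))$, where $\pi\colon T\Diff^{s}(M)\to\Diff^{s}(M)$ is the bundle projection. Smoothness of $\mathrm{Exp}$ on its domain is immediate as a composition of smooth maps. The scaling identity $\gamma_{(\varphi,tv)}(s) = \gamma_{(\varphi,v)}(ts)$, a general property of sprays that are quadratic in the fibre variable, implies that $\mathrm{Exp}_{\varphi}(tv)$ is defined for all sufficiently small $t$, so the domain of $\mathrm{Exp}_{\varphi}$ contains a neighbourhood of the origin in $T_{\varphi}\Diff^{s}(M)$.

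For the local diffeomorphism property, I would compute the derivative of $\mathrm{Exp}_{\varphi}$ at $0$. Under the canonical identification $T_{0}(T_{\varphi}\Diff^{s}(M))\simeq T_{\varphi}\Diff^{s}(M)$, the curve $t\mapsto\mathrm{Exp}_{\varphi}(tv)$ has tangent vector $v$ at $t=0$ by construction of $\gamma_{(\varphi,v)}$, so $T_{0}\mathrm{Exp}_{\varphi}=\mathrm{id}_{T_{\varphi}\Diff^{s}(M)}$. The inverse function theorem for smooth maps between Banach manifolds then produces the required local diffeomorphism onto a neighbourhood of $\varphi\in\Diff^{s}(M)$.

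No genuine obstacle arises here; the substance lies entirely in \autoref{thm:smoothness_of_spray}. The only subtlety worth noting is that the homogeneity needed for $\mathrm{Exp}_{\varphi}$ to be defined on an open neighbourhood of $0$ reflects the quadratic character of the kinetic energy defining the right-invariant $\alpha$-$\beta$-$\gamma$~metric, rather than anything specific to the operator $\Acal$ beyond its symmetry and positivity.
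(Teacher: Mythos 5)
Your argument is correct and is precisely the standard spray-to-exponential argument from Lang's book \cite{La1999}, which the paper's proof simply cites without reproducing. You have unpacked the same route: smooth spray $\Rightarrow$ smooth local flow, the quadratic (homogeneity) property of sprays to get $\mathrm{Exp}_\varphi$ defined near the origin, $T_0\mathrm{Exp}_\varphi = \mathrm{id}$, and the Banach inverse function theorem.
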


\begin{proof}
	Follows from standard results about smooth sprays on Banach manifolds~\cite{La1999}.
\end{proof}


\section{Descending metrics and the space of densities} 
\label{sec:descending_metrics}

Let $\pi\colon E\to B$ be a smooth fibre bundle.
In this section we characterise pairs of metrics on~$E$ and~$B$ for which the projection~$\pi$ is a Riemannian submersion.
We do this in three steps, introducing more and more structure to the fibre bundle:
\begin{enumerate}
	\item First, the plain case $\pi\colon E\to B$.
	A characterisation of all descending metrics is given.
	\item Second, the case when $\pi\colon E\to B$ is a principal $H$--bundle.
	This allows a characterisation of descending metrics in terms of $H$--invariance.
	\item Third, the case when $E=G$, where $G$ is a Lie group, and $B$ is a $G$--homogeneous space, i.e., there is a transitive Lie group action of $G$ on $B$.
	Then the projection $\pi_b\colon g\mapsto b\cdot g$, for any fixed element $b\in B$, defines a principal $G_b$--bundle, where $G_b$ is the isotropy group of~$b$.
	This structure allows metrics on $G$ that are both right-invariant and descending.
\end{enumerate}

The main example is~$G=\Diff(M)$ and $B=\Dens(M)$, i.e, the space of smooth densities on~$M$ (see \eqref{eq:dens_def} below).
The main result is that the $\alpha$-$\beta$-$\gamma$~metric~\eqref{eq:metric} on~$\Diff(M)$ descends to the right-invariant canonical $L^2$~metric on~$\Dens(M)$, i.e., the Fisher metric.

Let us begin with the first case.
The kernel of the derivative of the projection map~$\pi$ defines the vertical distribution~$\mathcal{V}$ on $E$.
That is, for each $x\in E$
\begin{equation*}
	\mathcal{V}_x=\{ v\in T_xE \,;\, T_x\pi\cdot v=0 \}.
\end{equation*}
If $\met_E$ is a metric on~$E$, we can define the horizontal distribution~$\mathcal{H}=\mathcal{V}^{\bot}$ as the orthogonal complement of~$\mathcal{V}$ with respect to $\met_E$.

\begin{definition}\label{def:descending_metric}
	A metric $\met_E$ on $E$ is called \emph{descending} if there exists a metric $\met_B$ on $B$ such that
	\begin{equation*}
		\met_E(u,v) = (\pi^*\met_B)(u,v)\quad \forall\, u,v\in\mathcal{H}.
	\end{equation*}
\end{definition}

In other words, a metric on $E$ is descending if and only if there exists a metric on $B$ such that~$\pi$ is a \emph{Riemannian submersion}, i.e., such that $T\pi\colon TE\to TB$ preserves the length of horizontal vectors.

If $\met_E$ is descending, then $\met_B$ is unique, since $$T_x\pi\colon\mathcal{H}_x\to T_{\pi(x)}B$$ is an isomorphism for each~$x\in E$.

Now we show how to construct descending metrics.
Let $\met_B$ be a metric on~$B$.
Lift it to a positive semi-definite bilinear form~$\pi^*\met_B$ on~$E$.
Let~$\mathsf{h}$ be a positive semi-definite bilinear form on~$E$ such that~$\ker(\mathsf{h}) \cap \mathcal{V} = \{ 0 \}$ and the co-dimension of $\ker(\mathsf{h})$ is equal to the dimension of~$\mathcal{V}$.
Then
\begin{equation} \label{eq:constructed_metric}
	\met_E = \pi^*\met_B + \mathsf{h}
\end{equation}
is a descending metric on~$E$.
Notice that~$\ker(\pi^*\met_B) = \mathcal{V}$ and~$\ker(\mathsf{h}) = \mathcal{H}$.
Consequently, $\met_E(u,v) = \pi^*\met_B(u,v)$ for all $u,v\in\mathcal{H}$, so $\met_E$ is indeed descending.
Also notice that the horizontal distribution is independent of the choice of~$\met_B$.

The form~\eqref{eq:constructed_metric} characterises all descending metrics.
Indeed, if $\met_E$ is a descending metric, let $\met_B$ be the corresponding metric on $B$, and let $P\colon TE\to \mathcal{V}$ be the orthogonal projection onto $\mathcal{V}$ with respect to~$\met_E$.
Then $\met_E$ is of the form~\eqref{eq:constructed_metric} with $\mathsf{h}(u,v)\coloneqq\met_E(u,Pv)$.

Consider now the second case.
That is, let $H$ be a Lie group and consider the case when $\pi\colon E\to B$ is a principal $H$--bundle, with a left action $L_h\colon E\to E$ for $h\in H$.
Being a principal bundle, the fibres are parameterised by~$H$, so $\pi\circ L_h=\pi$ and if $\pi(x)=\pi(y)$ then there exists a unique~$h\in H$ such that $y =  L_h(x)$.
Thus, if $\met_B$ is a metric on $B$, then $\pi^*\met_B = (\pi\circ L_h)^*\met_B = L_h^*\pi^*\met_B$.
Thus, if $\met_E$ is descending, then 
\begin{equation}\label{eq:invariance}
	(L_h^*\met_E)(u,v) = \met_E(u,v)\quad\forall u,v\in \mathcal{H}.
\end{equation}
The converse is also true.

\begin{proposition}\label{pro:descending_principal_bundle}
	Let $\met_E$ be a metric on~$E$. Then $\met_E$ is descending if and only if it fulfils~\eqref{eq:invariance}.
\end{proposition}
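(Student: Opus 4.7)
For the forward direction, I would use that $\pi^*\met_B$ is $L_h$-invariant as a symmetric tensor on $E$ (from $\pi\circ L_h = \pi$, as already observed in the paragraph preceding the proposition) and that, by the definition of a descending metric, $\met_E$ coincides with $\pi^*\met_B$ on $\mathcal{H}$. The plan is then to chain
\[
L_h^*\met_E(u,v) = \met_E(T L_h u, T L_h v) = \pi^*\met_B(T L_h u, T L_h v) = \pi^*\met_B(u,v) = \met_E(u,v)
\]
for $u,v\in\mathcal{H}$, where the second and last equalities use $\met_E|_\mathcal{H} = \pi^*\met_B|_\mathcal{H}$ and the third is the $L_h$-invariance of $\pi^*\met_B$.

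For the converse, the plan is to construct $\met_B$ directly from $\met_E|_\mathcal{H}$. Given $b\in B$ and $v_1,v_2\in T_b B$, I would pick any $x\in\pi^{-1}(b)$, let $\tilde v_i\in\mathcal{H}_x$ denote the horizontal lifts, and set
\[
\met_B(v_1,v_2) \coloneqq \met_E(\tilde v_1,\tilde v_2).
\]
The key task is well-definedness. If $x' = L_h(x)$ is another point in the fibre (with unique $h\in H$, by the principal bundle structure) and $\tilde v_i'\in\mathcal{H}_{x'}$ are the horizontal lifts at $x'$, then $w_i \coloneqq T_x L_h\tilde v_i - \tilde v_i'$ is vertical, since $T\pi$ annihilates it. Using $\mathcal{H}\perp\mathcal{V}$,
\[
\met_E(T L_h\tilde v_i,T L_h\tilde v_j) = \met_E(\tilde v_i',\tilde v_j') + \met_E(w_i,w_j).
\]
Combining with \eqref{eq:invariance} yields $\met_E(\tilde v_i,\tilde v_i) = \met_E(\tilde v_i',\tilde v_i') + \met_E(w_i,w_i)$, and rerunning the argument with $L_{h^{-1}}$ in place of $L_h$ gives the reverse relation, so positive definiteness of $\met_E$ forces $w_i = 0$. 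Hence $\met_E(\tilde v_i,\tilde v_j) = \met_E(\tilde v_i',\tilde v_j')$ and $\met_B$ is well-defined. Symmetry, bilinearity, and positive definiteness of $\met_B$ are inherited from $\met_E|_\mathcal{H}$, and $\met_E = \pi^*\met_B$ on $\mathcal{H}$ holds by construction, so $\met_E$ is descending.

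The main obstacle is the well-definedness step in the converse: one must verify that the invariance, restricted as it is to horizontal vectors, is strong enough to eliminate the vertical component $w_i$. This is where positive definiteness of $\met_E$ on $\mathcal{V}$ combined with the symmetry $L_h\leftrightarrow L_{h^{-1}}$ plays the essential role; as a byproduct it also shows that $T L_h$ preserves the horizontal distribution, which is what makes the chain of equalities in the forward direction fully rigorous.
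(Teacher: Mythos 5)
Your argument for the converse direction ($\Leftarrow$) is correct and in fact more careful than the paper's. The paper simply asserts ``From~\eqref{eq:invariance} it follows that $\met_E(u,v) = \met_E(u',v')$'' without addressing the possibility that $T_x L_h\tilde v_i$ acquires a nonzero vertical component. Your decomposition $T_x L_h\tilde v_i = \tilde v_i' + w_i$ with $w_i\in\mathcal{V}$, combined with the symmetry argument using $L_{h^{-1}}$ and positive definiteness of $\met_E$ on $\mathcal{V}$, supplies exactly the justification the paper leaves implicit, and it also establishes the nontrivial fact that \eqref{eq:invariance} forces $T L_h$ to preserve~$\mathcal{H}$.

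Your forward direction ($\Rightarrow$), however, is circular as written. The second equality in your chain, $\met_E(T L_h u, T L_h v) = \pi^*\met_B(T L_h u, T L_h v)$, uses $\met_E|_\mathcal{H} = \pi^*\met_B|_\mathcal{H}$, which is only valid once one knows $T L_h u, T L_h v\in\mathcal{H}$. You justify this by invoking the ``byproduct'' that $T L_h$ preserves $\mathcal{H}$ --- but that byproduct was derived in the converse direction under the hypothesis~\eqref{eq:invariance}, i.e., under the very conclusion the forward direction is supposed to establish. The honest bookkeeping is: split $T L_h u$ into horizontal and vertical parts; from $T\pi\circ T L_h = T\pi$ and the submersion property one gets $\met_E\big((T L_h u)_\mathcal{H},(T L_h u)_\mathcal{H}\big) = \met_B(T\pi u,T\pi u) = \met_E(u,u)$, and orthogonality then gives $(L_h^*\met_E)(u,u) = \met_E(u,u) + \met_E\big((T L_h u)_\mathcal{V},(T L_h u)_\mathcal{V}\big)$. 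Hence~\eqref{eq:invariance} holds if and only if the vertical part vanishes, i.e., if and only if $T L_h$ preserves $\mathcal{H}$ --- which is an additional condition, not an automatic consequence of $\met_E$ being descending. The paper's own one-sentence justification of ``$\Rightarrow$'' in the paragraph preceding the proposition glosses over this same point, so your exposition would benefit from isolating the invariance of $\mathcal{H}$ under $T L_h$ as the genuine content of the forward implication rather than a byproduct of the converse.
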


\begin{proof}
	We have shown ``$\Rightarrow$'', so ``$\Leftarrow$'' remains.
	Assuming~\eqref{eq:invariance}, define~$\met_B$ as follows.
	For $\bar u,\bar v \in T_{\pi(x)}B$, take any point $y\in\pi^{-1}(\{x\})$. 
	The linear map $T_y\pi\colon\mathcal{H}_y\to T_{\pi(x)}B$ is an isomorphism, so we get $u,v\in \mathcal{H}_y$ by $u=T_y\pi^{-1}\cdot\bar u$ and $v=T_y\pi^{-1}\cdot\bar v$.
	Define $\met_B$ by
	\begin{equation*}
		\met_B(\bar u,\bar v) \coloneqq \met_E(u,v) .
	\end{equation*}
	This is a well-defined metric on $\met_E$, i.e., it is independent on which $y\in\pi^{-1}(\{x\})$ we use.
	Indeed, for another $y'\in\pi^{-1}(\{ x\})$ we get $u',v' \in \mathcal{H}_{y'}$ as above. 
	Also, $y' = L_h(y)$ for some~$h\in H$.
	From~\eqref{eq:invariance} it follows that $\met_E(u,v) = \met_E(u',v')$.
	By construction, $\met_E(u,v) = (\pi^*\met_B)(u,v)$ for all $u,v\in\mathcal{H}$, so~$\met_E$ is indeed descending.
\end{proof}

We now specialise further, to the third case.
Let $G$ be a Lie group with identity~$e$.
Denote by $L_g$ and $R_g$ respectively the left and right action of $g\in G$ on~$G$.
Assume that $G$ has a right transitive action $\bar R_g$ on a manifold~$B$.
($B$ is then called a $G$--homogeneous space.)
If $b\in B$, then $G_b = \{ g\in G; \bar R_g(b)=b \}$ denotes the isotropy Lie subgroup of~$G$.
For every $b\in B$ we obtain a principal $G_b$--bundle $\pi_b\colon G\to B$, with $\pi_b(g) \coloneqq \bar R_g(b)$. 
This structure implies that $B$ is diffeomorphic to the homogeneous space $G_b\backslash G$ of right co-sets.
The map $\pi_b$, which is well-defined on $G_b\backslash G$, provides a diffeomorphism.
(If $b,b'\in B$, then $G_b$ and $G_{b'}$ are conjugate subgroups, i.e., there exists $g\in G$ such that $g G_b g^{-1} = G_{b'}$.)

We are interested in right-invariant metrics on $G$, i.e., metrics $\met_G$ that fulfil
\begin{equation*}
	\met_G(u,v) = \met_G(TR_g\cdot u, TR_g\cdot u),
\end{equation*}
or, equivalently, $R_g^*\met_G = \met_G$.
Being right-invariant does \emph{not} imply being descending.
For a right-invariant metric $\met_G$ to be descending (with respect to $\pi_b$), it follows from \autoref{pro:descending_principal_bundle} that $\met_G$ must fulfil
\begin{equation*}
	L_h^*R_{g}^*\met_G(u,v) = \met_G(u,v)\quad \forall u,v\in\mathcal{H}^b ,\quad\forall g\in G,\quad \forall h\in G_b,
\end{equation*}
where $\mathcal{H}^b$ denotes the horizontal distribution.
Since $\met_G$ is right-invariant and since the right action descends to $G_b\backslash G$, i.e., $R_g$ maps fibres to fibres, it is enough to check the condition for $g=h^{-1}$ and for vectors $u,v\in \mathcal{H}^{b}_e = \mathfrak{g}_b^{\bot}$, where $\mathfrak{g}_b$ is the Lie algebra of~$G_b$.
Indeed, the following result is given in~\cite{KhLeMiPr2013}.

\begin{proposition}\label{pro:descending_right_invariant}
	Let $\met_G$ be a right-invariant metric on~$G$.
	Then $\met_G$ is descending (with respect to $\pi_b$) if and only if
	\begin{equation*}
		\met_G(\ad_\xi(u),v) + \met_G(u,\ad_\xi(v)) = 0
		\quad \forall \, u,v\in\mathfrak{g}_b^\bot, \; \xi\in\mathfrak{g}_b.
	\end{equation*}
\end{proposition}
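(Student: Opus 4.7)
The plan is to reduce the descent condition of Proposition~\ref{pro:descending_principal_bundle} to an infinitesimal statement at the identity by pairing the left $G_{b}$-action on $G$ with right-invariance of $\met_{G}$. First I observe that left translation by $h\in G_{b}$ preserves $\pi_{b}$-fibres, since $\pi_{b}(hg)=\bar R_{g}(\bar R_{h}(b))=\bar R_{g}(b)=\pi_{b}(g)$; hence $\pi_{b}\colon G\to B$ really is a principal $G_{b}$-bundle with left $G_{b}$-action, and Proposition~\ref{pro:descending_principal_bundle} applies. By right-invariance of $\met_{G}$, the horizontal distribution is determined by its value at the identity, $\mathcal{H}^{b}_{g}=TR_{g}(\mathfrak{g}_{b}^{\bot})$, where the orthogonal complement is taken in $\met_{G}|_{e}$.

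Next I push the condition $L_{h}^{*}\met_{G}=\met_{G}$ on $\mathcal{H}^{b}$ down to the identity. For horizontal vectors $u=TR_{g}\tilde u$ and $v=TR_{g}\tilde v$ with $\tilde u,\tilde v\in\mathfrak{g}_{b}^{\bot}$, two applications of right-invariance give
\begin{equation*}
	\met_{G}(TL_{h}u,TL_{h}v)\big|_{hg}
	= \met_{G}\bigl(T(R_{(hg)^{-1}}\circ L_{h}\circ R_{g})\tilde u,\,T(R_{(hg)^{-1}}\circ L_{h}\circ R_{g})\tilde v\bigr)\big|_{e}.
\end{equation*}
A direct calculation shows that $R_{(hg)^{-1}}\circ L_{h}\circ R_{g}$ sends $x$ to $hxh^{-1}$, i.e.\ it is conjugation $C_{h}$, whose derivative at the identity is $\mathrm{Ad}_{h}$. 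Hence descent of $\met_{G}$ is equivalent to the finite $\mathrm{Ad}$-invariance
\begin{equation*}
	\met_{G}(\mathrm{Ad}_{h}\tilde u,\mathrm{Ad}_{h}\tilde v)=\met_{G}(\tilde u,\tilde v)
	\qquad\forall\,\tilde u,\tilde v\in\mathfrak{g}_{b}^{\bot},\ h\in G_{b}.
\end{equation*}

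The ``only if'' direction now follows by setting $h=\exp(t\xi)$ for $\xi\in\mathfrak{g}_{b}$ and differentiating at $t=0$, using $\frac{d}{dt}\mathrm{Ad}_{\exp(t\xi)}\big|_{0}=\mathrm{ad}_{\xi}$: the asserted skew-symmetry identity is exactly $\frac{d}{dt}\met_{G}(\mathrm{Ad}_{\exp(t\xi)}\tilde u,\mathrm{Ad}_{\exp(t\xi)}\tilde v)\big|_{0}=0$. For ``if'', I would integrate along one-parameter subgroups to recover $\mathrm{Ad}_{\exp(\xi)}$-invariance and then extend to the identity component of $G_{b}$ by using that the latter is generated by $\exp(\mathfrak{g}_{b})$. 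The main obstacle is this integration step: a priori the curve $t\mapsto\mathrm{Ad}_{\exp(t\xi)}\tilde u$ might drift out of the subspace $\mathfrak{g}_{b}^{\bot}$ on which the hypothesis lives, so one must exploit $[\mathfrak{g}_{b},\mathfrak{g}_{b}]\subseteq\mathfrak{g}_{b}$ together with the skew-symmetry hypothesis to show that $\mathrm{ad}_{\xi}$ preserves $\mathfrak{g}_{b}^{\bot}$ for $\xi\in\mathfrak{g}_{b}$; once that is established, a routine energy identity $\frac{d}{dt}\met_{G}(\mathrm{Ad}_{\exp(t\xi)}\tilde u,\mathrm{Ad}_{\exp(t\xi)}\tilde v)\equiv 0$ closes the proof.
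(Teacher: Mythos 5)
The paper itself does not prove this proposition; it only sketches the reduction to conjugation at the identity and then cites \cite{KhLeMiPr2013}. Your reduction via $R_{(hg)^{-1}}\circ L_h\circ R_g = C_h$ to an $\mathrm{Ad}_h$-invariance condition, and the differentiation giving the ``only if'' direction, mirror that sketch and are fine.

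The ``if'' direction, however, has a genuine gap, and your proposed fix does not work. You claim that one can use $[\mathfrak{g}_b,\mathfrak{g}_b]\subseteq\mathfrak{g}_b$ together with the skew-symmetry hypothesis to show that $\ad_\xi$ preserves $\mathfrak{g}_b^\bot$. This is false in general, and false in the paper's main example: for $G=\Diff(M)$, $\mathfrak{g}_b=\Xcalvol(M)$, $\mathfrak{g}_b^\bot=\grad(\Fcal(M))$, the bracket of a divergence-free vector field with a gradient field is typically neither divergence-free nor a gradient. Moreover, the hypothesis only pairs $\ad_\xi u$ against vectors in $\mathfrak{g}_b^\bot$, so it says nothing at all about the $\mathfrak{g}_b$-component of $\ad_\xi u$; no amount of manipulation will produce the vanishing you need. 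The correct repair is to work with the \emph{quotient} adjoint action: since $\mathrm{Ad}_h$ preserves $\mathfrak{g}_b$ for $h\in G_b$, the map $\widetilde{\mathrm{Ad}}_h \coloneqq P^\bot\circ\mathrm{Ad}_h|_{\mathfrak{g}_b^\bot}$ (with $P^\bot$ the $\met_G|_e$-orthogonal projection onto $\mathfrak{g}_b^\bot$) is a bona fide homomorphism $G_b\to\GL(\mathfrak{g}_b^\bot)$, whose generator $\widetilde{\ad}_\xi = P^\bot\circ\ad_\xi|_{\mathfrak{g}_b^\bot}$ does preserve $\mathfrak{g}_b^\bot$ by construction. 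For $u,v\in\mathfrak{g}_b^\bot$ one has $\met_G(\ad_\xi u,v)=\met_G(\widetilde{\ad}_\xi u,v)$, so the hypothesis is exactly skew-symmetry of $\widetilde{\ad}_\xi$, hence $\widetilde{\mathrm{Ad}}_{\exp(t\xi)}=\exp(t\widetilde{\ad}_\xi)$ is orthogonal and the energy identity closes. This is also what well-definedness of the descended metric actually requires: the horizontal lifts of a tangent vector $\bar u$ at $g$ and at $hg$ differ by $\tilde u\mapsto P^\bot\mathrm{Ad}_h\tilde u$, not by $\mathrm{Ad}_h\tilde u$, so demanding full $\mathrm{Ad}_h$-isometry on $\mathfrak{g}_b^\bot$ would be strictly stronger than descent whenever $\mathrm{Ad}_h$ fails to preserve $\mathfrak{g}_b^\bot$.
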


Consider now the reverse question: If $\met_G = \pi^*\met_B + \mathsf{h}$ is descending, when is it right-invariant?
Since right invariance means that $\met_G = R_g^*\met_G$ it must hold that
$R_g^*\pi^*\met_B = \pi^*\met_B$ and~$R_g^*\mathsf{h} = \mathsf{h}$.
Also, since $$R_g^*\pi^*\met_{B} = (\pi\circ R_g)^*\met_{B} = (\bar R_g\circ\pi)^*\met_B = \pi^*\bar R_g^*\met_B$$ we obtain the following result.

\begin{proposition}\label{pro:right_invariant_descending}
	Let $\met_G = \pi^*\met_{B} + \mathsf{h}$ be a descending metric on~$G$.
	Then~$\met_G$ is right-invariant if and only if both~$\met_B$ and~$\mathsf{h}$ are right-invariant, i.e.,
	\begin{equation*}
		\bar R_g^*\met_B = \met_B 
		\qquad\text{and}\qquad
		R_g^*\mathsf{h} = \mathsf{h}
	\end{equation*}
	for all $g\in G$.
\end{proposition}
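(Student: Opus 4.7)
My plan is to prove the two implications separately, with the ``if'' direction being a direct computation already sketched in the paragraph preceding the statement, and the ``only if'' direction relying on the uniqueness of the splitting $\met_G = \pi^*\met_B + \mathsf{h}$.

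For the forward (``if'') direction I would use the intertwining relation $\pi\circ R_g = \bar R_g\circ \pi$ (which holds because, for the principal $G_b$--bundle $\pi_b(g')=\bar R_{g'}(b)$, one has $\pi_b(g' g) = \bar R_g(\pi_b(g'))$). This gives $R_g^*\pi^*\met_B = \pi^*\bar R_g^*\met_B$, and combined with the two hypotheses $\bar R_g^*\met_B = \met_B$ and $R_g^*\mathsf{h}=\mathsf{h}$ yields $R_g^*\met_G = \met_G$.

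For the converse, I would first record the two kernel identities already observed just after~\eqref{eq:constructed_metric}, namely $\ker(\pi^*\met_B) = \mathcal{V}$ and $\ker(\mathsf{h}) = \mathcal{H}$, where $\mathcal{H}$ is the $\met_G$--orthogonal complement of $\mathcal{V}$. These identities make the decomposition unique once $\met_G$ is fixed: $\mathsf{h}(u,v) = \met_G(u,Pv)$ with $P$ the $\met_G$--orthogonal projection onto $\mathcal{V}$, and then $\pi^*\met_B = \met_G - \mathsf{h}$. Next I would verify that right translation preserves both distributions: because $\pi\circ R_g = \bar R_g\circ\pi$, the map $TR_g$ sends $\mathcal{V}$ into $\mathcal{V}$; and because $\met_G = R_g^*\met_G$ (so $R_g$ is an isometry), $TR_g$ also preserves $\mathcal{H} = \mathcal{V}^{\perp}$.

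With these ingredients, applying $R_g^*$ to the decomposition gives
\begin{equation*}
	\met_G \;=\; R_g^*\met_G \;=\; R_g^*\pi^*\met_B + R_g^*\mathsf{h} \;=\; \pi^*(\bar R_g^*\met_B) + R_g^*\mathsf{h}.
\end{equation*}
Since $TR_g$ preserves $\mathcal{H}$ and $\mathcal{V}$, the form $R_g^*\mathsf{h}$ has kernel $\mathcal{H}$ and $\pi^*(\bar R_g^*\met_B)$ has kernel $\mathcal{V}$, so this is a second decomposition of $\met_G$ of the form~\eqref{eq:constructed_metric}. By the uniqueness noted above, $R_g^*\mathsf{h} = \mathsf{h}$ and $\pi^*(\bar R_g^*\met_B) = \pi^*\met_B$; injectivity of $\pi^*$ (a consequence of $\pi$ being a surjective submersion) then gives $\bar R_g^*\met_B = \met_B$.

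The only place that requires any care is checking that right translation preserves the horizontal distribution, which uses the right-invariance of $\met_G$ in an essential way; everything else is formal manipulation of pullbacks and the uniqueness of the splitting. I do not anticipate a genuine obstacle.
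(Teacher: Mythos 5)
Your proof is correct and follows essentially the same route as the paper's (which is contained in the paragraph preceding the proposition rather than in a displayed proof environment): the ``if'' direction via the intertwining relation $R_g^*\pi^*\met_B = \pi^*\bar R_g^*\met_B$, and the ``only if'' direction via uniqueness of the decomposition together with injectivity of $\pi^*$. The paper asserts the implication $R_g^*\met_G=\met_G \Rightarrow R_g^*\pi^*\met_B=\pi^*\met_B$ and $R_g^*\mathsf{h}=\mathsf{h}$ without comment; you supply the justification the paper leaves implicit, namely that $TR_g$ preserves $\mathcal{V}$ (from $\pi\circ R_g = \bar R_g\circ\pi$) and $\mathcal{H}$ (from $R_g$ being an isometry), so that $R_g^*\mathsf{h}$ and $\pi^*(\bar R_g^*\met_B)$ have the correct kernels and uniqueness of the splitting applies.
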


\iftoggle{final}{}{
{\color{blue!50!black}
One may ask what it means for a descending metric to be right-invariant, in terms of the fibre structures generated by the transitive action.
The result is that right-invariant and descending with respect to $\pi_b$ is equivalent to being descending with respect to $\pi_{b'}$ for any $b'\in B$.

\begin{proposition}\label{pro:right_conjugate_descending}
	A metric $\met_G$ is right-invariant and descending with respect to $\pi_b$ if and only if it is descending with respect to $\pi_{b'}$ for any $b'\in B$.
\end{proposition}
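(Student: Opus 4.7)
The plan is to handle both implications via \autoref{pro:descending_principal_bundle} and \autoref{pro:descending_right_invariant}, exploiting the conjugation relation between isotropy subgroups induced by transitivity of the $G$-action. For the forward direction, assume $\met_G$ is right-invariant and descending with respect to $\pi_b$, and let $b'\in B$ be arbitrary. By transitivity I choose $g_0\in G$ with $\bar R_{g_0}(b) = b'$; then $\pi_{b'} = \pi_b\circ L_{g_0}$ and $\mathfrak{g}_{b'} = \Ad_{g_0^{-1}}\mathfrak{g}_b$. Right-invariance of $\met_G$ lets me invoke \autoref{pro:descending_right_invariant}, which reduces the descending condition for $\pi_{b'}$ to the infinitesimal identity $\met_G(\ad_\xi u, v) + \met_G(u, \ad_\xi v) = 0$ for $\xi\in\mathfrak{g}_{b'}$ and $u,v\in\mathfrak{g}_{b'}^\perp$. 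Writing $\xi = \Ad_{g_0^{-1}}\eta$ with $\eta\in\mathfrak{g}_b$ and using $\ad_{\Ad_{g_0^{-1}}\eta} = \Ad_{g_0^{-1}}\circ\ad_\eta\circ\Ad_{g_0}$, the strategy is to transport the identity back to $\mathfrak{g}_b$ and apply the assumed hypothesis there.

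For the reverse direction, assume $\met_G$ is descending with respect to $\pi_{b'}$ for every $b'\in B$. Right-invariance amounts to $R_g^*\met_G = \met_G$ for all $g\in G$. Since $R_g$ intertwines $\pi_b$ with $\bar R_g$ regardless of any invariance assumption, the plan is to use that descending with respect to every $\pi_{b'}$ pins down $\met_G$ on each corresponding horizontal distribution as the pullback of a unique metric on $B$, and then to compare these descended metrics across different base points in order to force $R_g^*\met_G = \met_G$ fibrewise. The $G_{b'}$-invariance from \autoref{pro:descending_principal_bundle}, as $b'$ sweeps out $B$, is expected to supply enough left-translation symmetries to combine, via the transitive action, into full right-invariance.

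The main obstacle lies in the forward direction: the $\met_G|_{\mathfrak{g}}$-orthogonal complement $\mathfrak{g}_{b'}^\perp$ is not in general equal to $\Ad_{g_0^{-1}}(\mathfrak{g}_b^\perp)$, because such an equality would demand that $\Ad_{g_0}$ be an isometry of $\met_G|_{\mathfrak{g}}$—an $\Ad$-invariance that right-invariance alone does not supply. The resulting mismatch between the subspace on which the skew-symmetry identity is assumed and the one on which it must be verified is the crux of the argument, and in the absence of some additional compatibility it tends to force $\met_G$ to be bi-invariant on all of $\mathfrak{g}$. Navigating this obstruction—most likely by using the invariance of $\met_B$ on $B$ under $\bar R_G$ to compensate for the discrepancy between $\mathfrak{g}_{b'}^\perp$ and $\Ad_{g_0^{-1}}\mathfrak{g}_b^\perp$—is what I expect to occupy the bulk of the proof.
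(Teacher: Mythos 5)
The paper contains no proof of this proposition: it sits inside a draft block that is suppressed in the compiled document (the surrounding text, proposition, and proof are all wrapped in an if-toggle that removes them from the final version), and the proof environment holds only a placeholder. That was a wise editorial decision, because the proposition is false in both directions, and the obstruction you flag in the forward direction is not something to be navigated but precisely the reason the implication breaks. Take $G=\SO(3)$ acting on $B=S^2$ on the right by $\bar R_Q(p)=Q^{-1}p$, with the right-invariant metric generated by $\pair{L_i,L_j}=\lambda_i\delta_{ij}$ on $\so(3)$, where $L_1,L_2,L_3$ are the standard infinitesimal rotations and $\lambda_1=\lambda_2=1\neq\lambda_3$. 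For $b=e_3$ one has $\mathfrak{g}_b=\R L_3$ and $\mathfrak{g}_b^\perp=\R L_1\oplus\R L_2$, and the criterion of \autoref{pro:descending_right_invariant} reduces to $\lambda_2-\lambda_1=0$, which holds; so the metric is right-invariant and descends with respect to $\pi_{e_3}$, and the descended metric on $S^2$ is a round metric, hence $\bar R$-invariant, so your plan to salvage the argument by exploiting $\bar R$-invariance of $\met_B$ cannot succeed. For $b'=e_1$ one has $\mathfrak{g}_{b'}^\perp=\R L_2\oplus\R L_3$ and the same criterion yields $\lambda_3-\lambda_2\neq 0$, so the metric does \emph{not} descend with respect to $\pi_{e_1}$. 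Thus ``$\Rightarrow$'' fails, and the mismatch $\mathfrak{g}_{b'}^\perp\neq\Ad_{g_0^{-1}}(\mathfrak{g}_b^\perp)$ that you correctly identified is fatal unless the metric is $\Ad$-invariant, i.e.\ essentially bi-invariant.

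Your ``$\Leftarrow$'' plan is also unworkable, and for a structural reason: accumulating left-translation symmetries can never force right-invariance. Indeed, any \emph{left}-invariant metric $\met_G$ on $G$ satisfies $L_h^*\met_G=\met_G$ everywhere, hence on each $\mathcal{H}^{b'}$, so by \autoref{pro:descending_principal_bundle} it descends with respect to every $\pi_{b'}$; yet such a metric is right-invariant only if it is bi-invariant, which is not the case for generic left-invariant metrics on a non-abelian~$G$. (Even more starkly, if the action on $B$ is free then every $\pi_{b'}$ is a diffeomorphism, and every metric whatsoever descends with respect to every $\pi_{b'}$.) So there is no proof in the paper to compare against, your proposal does not close the gap, and the right response is to recognise that the statement is simply not a theorem.
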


\begin{proof}
	...
\end{proof}
}}

Let us now investigate our case of interest, i.e., $G=\Diff(M)$.
First, consider the manifold of smooth probability densities on $M$, which takes the rôle of~$B$ above.
It is given by
\begin{equation}\label{eq:dens_def}
	\Dens(M) = \left\{ \nu\in\Omega^n(M) \, ;\, \nu>0, \int_M\nu = 1 \right\}.
\end{equation}
The tangent spaces of $\Dens(M)$ are $T_\nu\Dens(M)=\Omega^{n}_0(M) \coloneqq \{ a\in\Omega^{n}(M); \int_M a = 0 \}$.
$\Diff(M)$ acts on $\Dens(M)$ from the right by pullback $\bar R_\varphi(\nu) = \varphi^{*}\nu$.
The corresponding lifted action is again given by pullback, i.e., $T\bar R_{\varphi}(\nu,a) = (\varphi^{*}\nu,\varphi^{*}a)$.

Recall the volume form $\vol\in\Dens(M)$ corresponding to the Riemannian structure on~$M$.
Using $\bar R_\varphi$ and $\vol$, we define the projection map $\pi_\vol\colon\Diff(M)\to\Dens(M)$ by $\pi_\vol(\varphi) = \bar R_\varphi(\vol) = \varphi^{*}\vol$.
\citet{Mo1965} proved that~$\pi_{\vol}$ is surjective, which implies that $\pi_{\vol}$ is a submersion (see \autoref{rem:dens_bundle_structure}).
The corresponding isotropy group is $\Diffvol(M)$, i.e., 
\begin{equation*}
	\pi_\vol(\psi\circ\varphi) = \pi_\vol(\varphi) \quad\text{if and only if}\quad \psi\in\Diffvol(M).
\end{equation*}
Accordingly, with $\Diffvol(M)$ acting on $\Diff(M)$ from the left, we have a principal bundle structure
\begin{equation}\label{eq:diff_principal_bundle}
	\Diffvol(M) \xhookrightarrow{\quad} \Diff(M) \xrightarrow{\;\pi_\vol\;} \Dens(M) .
\end{equation}
The vertical distribution~$\mathcal{V}$ of this bundle structure is given by vectors in $T\Diff(M)$ that are divergence-free when right translated to~$T_\id\Diff(M) = \Xcal(M)$.
That is,
\begin{equation*}
	\mathcal{V}_\varphi = \{ v\in T_\varphi\Diff(M) \, ;\, v\circ\varphi^{-1} \in \Xcalvol(M) \} .
\end{equation*}

Relative to the abstract formulation presented above, $G=\Diff(M)$, $B=\Dens(M)$, and $G_b = \Diffvol(M)$.
In particular, 
$
\Diffvol(M)\backslash\Diff(M) \simeq \Dens(M)
$.

\begin{remark}\label{rem:dens_bundle_structure}
	The analysis involved in the bundle structure~\eqref{eq:diff_principal_bundle} can be made precise in the category of Sobolev manifolds.
	Indeed, 
	\[
		\Diff^s_\vol(M)\backslash\Diff^s(M) \simeq \Dens^{s-1}(M) \quad\text{if}\quad s>n/2+1,
	\]
	where the projection $\pi_{\vol}\colon\Diff^s(M)\to\Dens^{s-1}(M)$ is smooth.
	These results can be used to prove that $\pi_{\vol}\colon\Diff(M)\to\Dens(M)$ is smooth with respect to an inverse limit Hilbert (ILH) topology.
	Notice that the principal bundle structure 
	\begin{equation*}
		\Diff^{s}_{\vol}(M) \xhookrightarrow{\quad} \Diff^{s}(M) \xrightarrow{\quad} \Diff^s_\vol(M)\backslash\Diff^s(M)
	\end{equation*}
	is only $C^{0}$, since the left action of $\Diff^{s}(M)$ on itself is only continuous.
	See Ebin and Marsden~\cite[\S\!~5]{EbMa1970} for details.
	By using the Nash--Moser inverse function theorem, Hamilton~\cite[\S\!~III.2.5]{Ha1982} showed directly that $\pi_{\vol}\colon\Diff(M)\to\Dens(M)$ is a smooth principal $\Diffvol(M)$--bundle with respect to a Fréchet topology.
\end{remark}

{

\begin{remark}\label{rem:reference_density_ambiguity}
	The choice of reference element $\vol\in\Dens(M)$ is not canonical; 
	it specifies which point we consider to be the ``identity density''.
	If $\nu\in\Dens(M)$ is another density, then $\Diffvol(M)$ and $\Diff_\nu(M)$ are conjugate subgroups:
	$\Diff_\nu(M) = \psi^{-1}\circ\Diffvol(M)\circ\psi$ for any $\psi\in \{ \varphi\in\Diff(M); \varphi^{*}\vol = \nu \}$.
\end{remark}

We now turn to right-invariant and descending metrics on $\Diff(M)$.
There is a natural $L^2$~metric on $\Dens(M)$ given by
\begin{equation}\label{eq:canonical_metric_on_dens}
	\inner{a,b}_{\nu} = \int_M \frac{a}{\nu}\frac{b}{\nu} \, \nu, \qquad a,b\in \Omega^{n}_{0}(M), 
\end{equation}
where $a/\nu \in \Omega^{0}(M)$ is defined by $a = (a/\nu)\nu$ and correspondingly for $b/\nu$. (Equivalently, $a/\nu$ is the Radon--Nikodym derivative of the measure induced by $a$ with respect to the measure induced by $\nu$.)
The metric~\eqref{eq:canonical_metric_on_dens} is called the Fisher metric.
As already mentioned, it is fundamental in the theory of information geometry.
In addition, it is used in statistical mechanics for measuring ``thermodynamic length''~\cite{Cr2007,FeCr2009}, and in the geometrical formulation of quantum mechanics~\cite{FaKuMaMaSuVe2010}.
The Fisher metric is canonical: it is independent of the Riemannian structure on~$M$.
 
\begin{remark}
One can also write the Fisher metric \eqref{eq:canonical_metric_on_dens} as
\[
	\inner{a,b}_{\nu} = \int_M (\star_{\nu}a) b,
\]
where $\star_{\nu}:\Omega^{n}(M)\to\Omega^{0}(M)$ is the Hodge star on $n$--forms corresponding to~$\nu$.
\end{remark}

\begin{proposition}\label{pro:invariance_of_density_metric}
	The Fisher metric \eqref{eq:canonical_metric_on_dens} on $\Dens(M)$ is invariant with respect to the action $\bar R_\varphi$.
\end{proposition}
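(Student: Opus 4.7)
The plan is a direct calculation using two basic facts: that pullback is multiplicative on products of forms/functions, and that $\int_M \varphi^{*}\omega = \int_M \omega$ for any $n$--form $\omega$ and any (orientation-preserving) diffeomorphism $\varphi$. Invariance means that for all $\nu\in\Dens(M)$, $\varphi\in\Diff(M)$, and tangent vectors $a,b\in T_\nu\Dens(M)=\Omega^n_0(M)$, one has
\[
\inner{T\bar R_\varphi\cdot a,\, T\bar R_\varphi\cdot b}_{\bar R_\varphi(\nu)} = \inner{a,b}_\nu.
\]
Since the lifted action is again pullback, the left-hand side reads $\inner{\varphi^{*}a,\varphi^{*}b}_{\varphi^{*}\nu}$.

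The first step is to identify the Radon--Nikodym quotient under pullback. Writing $a = (a/\nu)\,\nu$ with $a/\nu\in\Omega^0(M)$, and using that pullback commutes with multiplication of a form by a smooth function, I get $\varphi^{*}a = \varphi^{*}(a/\nu)\cdot \varphi^{*}\nu$, and therefore
\[
\frac{\varphi^{*}a}{\varphi^{*}\nu} = \varphi^{*}(a/\nu) = (a/\nu)\circ\varphi.
\]
The same holds for $b$. Substituting into the definition~\eqref{eq:canonical_metric_on_dens} of the Fisher metric, I would write
\[
\inner{\varphi^{*}a,\varphi^{*}b}_{\varphi^{*}\nu}
= \int_M \varphi^{*}(a/\nu)\,\varphi^{*}(b/\nu)\,\varphi^{*}\nu
= \int_M \varphi^{*}\!\left(\frac{a}{\nu}\frac{b}{\nu}\,\nu\right).
\]

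The second step is invariance of the integral under pullback by a diffeomorphism. Since $\varphi$ is a diffeomorphism of the compact manifold $M$ and the integrand $(a/\nu)(b/\nu)\,\nu$ is an $n$--form, the change of variables formula gives $\int_M \varphi^{*}\omega = \int_M \omega$ (with an absolute value if $\varphi$ is orientation-reversing, but the Fisher metric is defined via an integral of an $n$--form against a density so sign issues cancel automatically). This yields $\inner{\varphi^{*}a,\varphi^{*}b}_{\varphi^{*}\nu} = \int_M (a/\nu)(b/\nu)\,\nu = \inner{a,b}_\nu$, which is exactly the invariance claim.

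There is essentially no obstacle here: the only point to be a touch careful about is that the Radon--Nikodym quotient $a/\nu$ is a genuine smooth function (which uses $\nu>0$ everywhere, part of the definition of $\Dens(M)$), and that the computation does not use the Riemannian structure on $M$ at all; this is why the Fisher metric is canonical, as noted in the text.
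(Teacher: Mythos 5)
Your proof is correct and follows essentially the same route as the paper: you show the Radon--Nikodym quotient $a/\nu$ pulls back naturally under $\varphi^{*}$ and then invoke invariance of the integral of an $n$--form under pullback, which is exactly what the paper does using the notation $\star_\nu a$ for the same quotient. The only cosmetic difference is that the paper works with the Hodge-star form of the metric from the preceding remark, while you unpack the same computation directly from the Radon--Nikodym definition.
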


\begin{proof}
	\begin{equation*}
		\inner{\varphi^{*}a,\varphi^{*}b}_{\varphi^{*}\nu} 
		= \int_M (\star_{\varphi^*\nu}\varphi^*a) \varphi^*b 
		= \int_M \varphi^*\big((\star_{\nu}a) b\big)
		= \inner{a,b}_{\nu}.
	\end{equation*}
\end{proof}

We now come to the main result of this section.

\begin{theorem}\label{thm:metric_is_descending}
	The $\alpha$-$\beta$-$\gamma$~metric~\eqref{eq:metric} on $\Diff(M)$ descends to a metric on $\Dens(M)$, that, up to multiplication with $\beta$, is the Fisher metric~\eqref{eq:canonical_metric_on_dens}.
\end{theorem}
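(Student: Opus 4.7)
The approach is to do the computation at the identity and then propagate globally by right-invariance. Concretely, I would (i) identify the horizontal distribution at $\id$ as $\grad(\Fcal(M))$, (ii) evaluate the $\alpha$-$\beta$-$\gamma$ inner product on a horizontal vector there, (iii) compare with the pullback of the Fisher metric via $T_\id\pi_\vol$, and (iv) extend globally using right-invariance of both sides together with the right-equivariance of $\pi_\vol$.

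The vertical space at $\id$ is $\mathcal{V}_\id = \Xcalvol(M)$. From the Helmholtz--Hodge analysis in \autoref{sub:hodge_components}, $\Acal$ is block-diagonal: it maps $\Xcalvol(M)$ into $\mathsf{D}^1(M)$ and $\grad(\Fcal(M))$ into $\dd\Omega^0(M)$. Since $\mathsf{D}^1(M)$ and $\dd\Omega^0(M)$ are $L^2$-orthogonal pieces of the Hodge decomposition, the Helmholtz decomposition $\Xcal(M)=\Xcalvol(M)\oplus\grad(\Fcal(M))$ is orthogonal with respect to $\pair{\cdot,\cdot}_{\alpha\beta\gamma}$, so $\mathcal{H}_\id = \grad(\Fcal(M))$. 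For $u=\grad f$ with $f\in\Fcal(M)$, the same block-diagonal formula gives $\Acal(\grad f) = -\beta\,\dd\Delta f$, and hence
\[
	\pair{\grad f,\grad f}_{\alpha\beta\gamma}
	= \pair{-\beta\,\dd\Delta f,\dd f}_{L^2}
	= -\beta\pair{\Delta f,\delta\dd f}_{L^2}
	= \beta\int_M (\Delta f)^2\,\vol,
\]
using $\delta\dd f = -\Delta f$ on $0$-forms. On the other hand, $T_\id\pi_\vol(u) = \LieD_u\vol = \divv(u)\vol$, so $T_\id\pi_\vol(\grad f) = (\Delta f)\vol$, and the Fisher metric~\eqref{eq:canonical_metric_on_dens} evaluates to $\inner{(\Delta f)\vol,(\Delta f)\vol}_\vol = \int_M(\Delta f)^2\,\vol$. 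Thus, at $\id$,
\[
	\pair{u,u}_{\alpha\beta\gamma} = \beta\cdot(\pi_\vol^*\inner{\cdot,\cdot})(u,u)
	\qquad\forall\, u\in\mathcal{H}_\id.
\]

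To finish, I would propagate this equality to every $\varphi\in\Diff(M)$ on horizontal vectors. The $\alpha$-$\beta$-$\gamma$ metric is right-invariant by construction, and $\pi_\vol^*\inner{\cdot,\cdot}$ is right-invariant because $\pi_\vol\circ R_\psi = \bar R_\psi\circ\pi_\vol$ and the Fisher metric is $\bar R_\psi$-invariant by \autoref{pro:invariance_of_density_metric}. The horizontal distribution $\mathcal{H}$ is itself right-invariant since $\mathcal{V}$ is (fibres of $\pi_\vol$ are $R_\psi$-invariant) and $\met_E$ is. Consequently the identity at $\id$ lifts to $\met_E|_{\mathcal{H}} = \beta\cdot\pi_\vol^*\met_B|_{\mathcal{H}}$ on all of $\Diff(M)$, which is precisely the statement that the $\alpha$-$\beta$-$\gamma$ metric descends and induces $\beta$ times the Fisher metric on $\Dens(M)$. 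I do not anticipate any serious obstacle; the only real ``insight'' is that the $\beta$-branch $-\beta\,\dd\Delta$ of $\Acal$ on $\grad(\Fcal(M))$ is exactly the one matched to the Fisher metric via $T_\id\pi_\vol$, while the $\bar P_\gamma$ and $\alpha\,\delta\dd$ branches vanish on exact gradients.
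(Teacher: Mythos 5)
Your proof is correct, and it takes a genuinely different (and arguably cleaner) route than the paper's. The paper's proof is modular: it first verifies the algebraic criterion of \autoref{pro:descending_right_invariant} (that $\pair{\LieD_u\grad f,\grad g}_{\alpha\beta\gamma} = -\pair{\grad f,\LieD_u\grad g}_{\alpha\beta\gamma}$ for $u\in\Xcalvol(M)$), which establishes \emph{that} the metric descends, and only then performs a second computation to identify \emph{which} metric it descends to. You instead bypass the criterion entirely by directly exhibiting the equality $\met_E\vert_{\mathcal{H}} = \beta\,\pi_\vol^*\inner{\cdot,\cdot}_{(\cdot)}\vert_{\mathcal{H}}$: the block-diagonal formula $\Acal(\grad f)=-\beta\,\dd\Delta f$ gives $\pair{\grad f,\grad f}_{\alpha\beta\gamma}=\beta\int_M(\Delta f)^2\,\vol$ at the identity in one line, and then you propagate by noting that \emph{both} sides are right-invariant tensors (using $\pi_\vol\circ R_\psi=\bar R_\psi\circ\pi_\vol$ together with \autoref{pro:invariance_of_density_metric}) on a right-invariant distribution $\mathcal{H}$. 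This simultaneously proves the descending property and identifies $\met_B$. What the paper's route buys is a demonstration of the general KhLeMiPr criterion in action, which is reusable for metrics whose descended form is not known in advance; what your route buys is economy — it avoids one of the two computations and needs neither direction of \autoref{pro:descending_right_invariant}, only the elementary observation that a right-invariant bilinear form vanishing on $\mathcal{H}_\id$ vanishes on $\mathcal{H}$. One small stylistic point: you compute only the quadratic form $\pair{u,u}_{\alpha\beta\gamma}$; this is of course sufficient by polarisation, but it is worth a word.
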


\begin{proof}
	First, since the inner product~\eqref{eq:inner_product} on $\Xcal(M)$ corresponding to the $\alpha$-$\beta$-$\gamma$~metric preserves orthogonality with respect to the Helmholtz decomposition, it follows that the horizontal distribution is given by
	\begin{equation*}
		\mathcal{H}_\varphi = \big\{ v\in T_\varphi\Diff(M) \, ;\, v\circ\varphi^{-1} \in \grad(\Fcal(M)) \big\} ,
	\end{equation*}
	i.e., vectors that, when translated to the identity, are given by gradient vector fields.
	If $u\in \mathcal{V}_\id = \Xcalvol(M)$ and $f,g\in\Fcal(M)$, then
	\begin{equation*}
		\begin{split}
		\pair{\LieD_u \grad(f),\grad(g)}_{\alpha\beta\gamma} 
		&= \int_M \beta\,\delta (\LieD_u \grad(f))^{\flat}\delta\grad(g)^{\flat}\, \vol \\
		&= \int_M \beta\,\dd \interior_{\LieD_u \grad(f)}\vol \wedge\star\,\dd\interior_{\grad(g)}\vol \\
		&= -\int_M \beta\,\dd \interior_{\grad(f)}\vol \wedge\star\,\dd\interior_{\LieD_u\grad(g)}\vol \\
		&= -\pair{\grad(f),\LieD_u\grad(g)}_{\alpha\beta\gamma},
		\end{split}
	\end{equation*}
	where we have used $\LieD_u\vol =0$ and $\LieD_u\star a = \star \LieD_u a$ for any $a\in\Omega^{n}(M)$.
	From \autoref{pro:descending_right_invariant} it follows that the $\alpha$-$\beta$-$\gamma$~metric is descending.

	The tangent map $T\pi_\vol$ restricted to $T_\id\Diff(M)=\Xcal(M)$ is given by $u\mapsto \LieD_u\vol$.
	Now,
	\begin{equation*}
		\begin{split}
			\pair{\grad(f),\grad(g)}_{\alpha\beta\gamma} &= 
			\int_M \beta\,\interior_{\grad(f)}\vol\wedge\star\interior_{\grad(g)}\vol \\
			&= \int_M \beta\,\LieD_{\grad(f)}\vol\wedge\star\LieD_{\grad(f)}\vol \\
			&= \beta\inner{\LieD_{\grad(f)}\vol,\LieD_{\grad(g)}\vol}_{\vol}.
		\end{split}
	\end{equation*}
	Therefore, the $\alpha$-$\beta$-$\gamma$~metric for horizontal vectors at the identity tangent space is given by the Fisher metric multiplied by $\beta$ of the projection of the horizontal vectors to the tangent space $T_\vol\Dens(M)$.
	Since this holds at one tangent space, it follows from \autoref{pro:invariance_of_density_metric} that it holds at every tangent space (both the $\alpha$-$\beta$-$\gamma$~metric and the Fisher metric are right-invariant).
	This concludes the proof.
\end{proof}

A consequence of \autoref{thm:metric_is_descending} is a geometric explanation of the observation in \autoref{sub:hodge_components}, that solutions that are initially gradients remain gradients.
Indeed, \citet{He1960} proved that initially horizontal geodesics remain horizontal for any descending metric.

\begin{remark}
	The ``components'' $\met_B$ and $\mathsf{h}$ of the $\alpha$-$\beta$-$\gamma$~metric are identified as follows:
	\begin{equation*}
		\pair{u,v}_{\alpha\beta\gamma} 
		= \underbrace{\pair{\bar P_{\gamma} u^{\flat},\bar P_{\gamma} v^{\flat}}_{L^{2}} +\alpha\pair{\dd u^{\flat},\dd v^{\flat}}_{L^{2}} }_{\mathsf{h}}+ 
		\underbrace{\beta\pair{\delta u^{\flat},\delta v^{\flat}}_{L^{2}} }_{\pi^*\met_B},
	\end{equation*}
	with the same notation as in equation~\eqref{eq:inner_product}.
\end{remark}

\section{Optimal transport and factorisation} 
\label{sec:optimal_transport_and_polar_factorisation}

Optimal mass transport has a long history, going back to \citet{Mo1781} and Kantorovich~\cite{Ka1942,Ka1948}.
For a modern treatise, see the lecture notes by \citet{Ev2001_lecture_notes}, \citet{AmGi2009}, or \citet{Mc2010}, or the monograph by \citet{Vi2009}.

In this section we study the relation between ``distance square'' optimal transport problems and descending metrics.
Typically, optimal transport problems are formulated with minimal regularity restrictions.
In contrast, we consider smooth formulations (more precisely Sobolev~$H^{s}$ with $s>n/2+1$).
We take the point of view of \citet[\S~4]{Ot2001}, but $\Dens(M)$ is identified with right co-sets instead of left.
A central topic is the correspondence between optimal transport problems and polar factorisations.
The main result is given in~\autoref{sub:optimal_info_trans}, where we establish existence and uniqueness results for optimal information transport, and a matching polar factorisation of~$\Diff^{s}(M)$.
As a finite-dimensional analogue, we show in \autoref{sub:qr} that $QR$~factorisation of square matrices can be seen as polar factorisation corresponding to optimal transport of inner products.

Abstract geometric optimal transport problems are formulated as follows.
Let $G$ be a Lie group acting transitively on a manifold~$B$.
Assume $G$ is equipped with a cost function $c\colon G\times G\to \R^{+}$.
This renders a geometric formulation of Monge's original problem:
\begin{equation}\label{eq:optimal_transport_abstract}
\text{
Given $b,b'\in B$, find $T\in \{g\in G; g\cdot b = b' \}$ minimising $c(e,T)$.
}
\end{equation}

We are interested in the case $c = \dist_G^{2}$, where $\dist_G$ is the geodesic distance of a metric~$\met_G$.
In particular, we are interested in the case when $\met_G$ is descending with respect to a fibre bundle structure $\pi\colon G\to B$.
Then, the optimal transport problem~\eqref{eq:optimal_transport_abstract} is simplifies to: (i) find the shortest curve on $B$ connecting $b$ and $b'$, (ii) lift that curve to a horizontal curve on~$G$, and (iii) take the endpoint as the solution.
This simplification is possible because a shortest curve between two fibres must be horizontal:

\begin{lemma}\label{lem:horiz_shortest_curve}
	Let $\pi\colon G\to B$ be a Riemannian submersion, and let $\zeta\colon [0,1]\to G$ be an arbitrary smooth curve.
	Then there exists a unique horizontal curve $\zeta_h\colon [0,1]\to G$ such that $\zeta_h(0)=\zeta(0)$ and $\pi\circ\zeta=\pi\circ\zeta_h$.
	The length of~$\zeta_h$ is less than or equal the length of~$\zeta$, with equality if and only if $\zeta$ is horizontal.
\end{lemma}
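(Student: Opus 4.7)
The plan is to construct $\zeta_h$ by horizontally lifting the velocity of $\zeta$ via a first-order ODE, and then compare lengths pointwise using the orthogonal splitting into vertical and horizontal parts. The main technical point is the global existence of the lift on all of $[0,1]$; once that is in hand, the length comparison is a short orthogonality argument.

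First I would set up the lift as follows. For each $t\in[0,1]$ and each $g$ in the fibre over $\pi(\zeta(t))$, the restriction $T_g\pi\colon\mathcal{H}_g\to T_{\pi(g)}B$ is a linear isomorphism, so there is a unique horizontal vector $X(t,g)\in\mathcal{H}_g$ satisfying $T_g\pi\cdot X(t,g) = T_{\zeta(t)}\pi\cdot\dot\zeta(t)$. The field $X$ depends smoothly on its arguments in a neighbourhood of the curve, so the non-autonomous ODE
\[
    \dot\zeta_h(t) = X(t,\zeta_h(t)), \qquad \zeta_h(0) = \zeta(0) ,
\]
has a unique maximal smooth solution by standard Banach-manifold ODE theory, and by ODE uniqueness the resulting curve is the only one with $\dot\zeta_h\in\mathcal{H}$, $\zeta_h(0)=\zeta(0)$, and $\pi\circ\zeta_h=\pi\circ\zeta$. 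To extend the solution to the full interval $[0,1]$, I would exploit the bundle structure of $\pi$: choosing finitely many local trivialisations covering the compact image of $\pi\circ\zeta$ reduces the problem on each chart to an ODE with bounded fibre-direction right-hand side, and patching the local solutions yields a lift on all of $[0,1]$.

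For the length comparison, decompose orthogonally $\dot\zeta(t) = V(t) + H_\zeta(t)$ with $V(t)\in\mathcal{V}_{\zeta(t)}$ and $H_\zeta(t)\in\mathcal{H}_{\zeta(t)}$. Because the splitting is $\met_G$--orthogonal,
\[
    \|\dot\zeta(t)\|^{2}_{\met_G} = \|V(t)\|^{2}_{\met_G} + \|H_\zeta(t)\|^{2}_{\met_G} \geq \|H_\zeta(t)\|^{2}_{\met_G} .
\]
Since $\pi$ is a Riemannian submersion, $T\pi$ restricts to an isometry on horizontal subspaces, so $\|H_\zeta(t)\|_{\met_G} = \|(\pi\circ\zeta)'(t)\|_{\met_B}$, and analogously $\|\dot\zeta_h(t)\|_{\met_G} = \|(\pi\circ\zeta_h)'(t)\|_{\met_B}$. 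These base-space speeds coincide because $\pi\circ\zeta = \pi\circ\zeta_h$, so $\|\dot\zeta_h(t)\|_{\met_G} \leq \|\dot\zeta(t)\|_{\met_G}$ pointwise, with equality at $t$ exactly when $V(t)=0$. Integrating over $[0,1]$ yields $L(\zeta_h)\leq L(\zeta)$, with equality if and only if $V\equiv 0$, i.e.\ if and only if $\zeta$ is horizontal.

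The hard part of the argument is confirming global existence of $\zeta_h$ on $[0,1]$ in the infinite-dimensional setting; the remainder is bookkeeping with the orthogonal splitting and the Riemannian-submersion property.
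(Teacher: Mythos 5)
Your strategy matches the paper's: horizontally lift the velocity via a first-order ODE, then compare lengths pointwise using the orthogonal splitting $\dot\zeta = V + H$ together with the Riemannian-submersion isometry. The length-comparison paragraph is correct and is essentially the paper's argument, written out via the base speeds.

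There is, however, a gap in the ODE setup. As you define it, $X(t,g)$ only makes sense for $g$ in the single fibre $\pi^{-1}(\{\pi(\zeta(t))\})$: the relation $T_g\pi\cdot X(t,g) = T_{\zeta(t)}\pi\cdot\dot\zeta(t)$ requires both sides to lie in $T_{\pi(\zeta(t))}B$ and has no meaning for $g$ off that fibre. So $X$ is \emph{not} a time-dependent vector field on an open subset of $G$, and you cannot invoke standard Banach-manifold ODE theory; your parenthetical assertion that ``the field $X$ depends smoothly on its arguments in a neighbourhood of the curve'' is precisely the thing that needs to be constructed, not assumed. The paper closes this gap cleanly: it first chooses any time-dependent vector field $\bar X_t$ on $B$ with $\dot{\bar\zeta}(t) = \bar X_t(\bar\zeta(t))$, then lifts it to the horizontal field $X_t(g) = \big(T_g\pi\big|_{\mathcal{H}_g}\big)^{-1}\cdot\bar X_t(\pi(g))$, which is defined for \emph{every} $g\in G$. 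With that replacement the ODE is a genuine one and your uniqueness claim goes through as well, because any horizontal curve $\zeta_h'$ with $\pi\circ\zeta_h'=\pi\circ\zeta$ and $\zeta_h'(0)=\zeta(0)$ satisfies $T\pi\cdot\dot\zeta_h'(t) = \dot{\bar\zeta}(t) = T\pi\cdot X_t(\zeta_h'(t))$, and $T\pi$ is injective on horizontal vectors, forcing $\zeta_h'$ to be an integral curve of the same $X_t$. Your local-trivialisation sketch is an alternative way to supply the missing extension, but as written the argument jumps over it; also, ``bounded fibre-direction right-hand side'' is not automatic in general fibre bundles and deserves at least a remark about why it holds in the setting at hand.
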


\begin{proof}
	For each $t\in [0,1]$ there is a unique decomposition $\dot\zeta(t) = v(t) + h(t)$, where $v(t)\in\mathcal{V}_{\zeta(t)}$ and $h(t)\in\mathcal{H}_{\zeta(t)}$.
	Thus, we have the curves $t\mapsto v(t) \in \mathcal{V}$ and $t\mapsto h(t)\in\mathcal{H}$.
	By the projection $\pi$ we also get a curve $\bar\zeta(t) = \pi(\zeta(t)) \in B$.
	This curve can be lifted to a horizontal curve as follows.
	Take any time-dependent smooth vector field~$\bar X_t$ on~$B$ for which $\bar\zeta$ is an integral curve, i.e., $\dot{\bar\zeta}(t) = \bar X_t(\bar\zeta(t))$.
	Now lift $\bar X_t$ to its corresponding horizontal section $X_t(g) = (T_g\pi)^{-1}\cdot \bar X_t(\pi(g))$.
	(We can do this since $T_g\pi\colon\mathcal{H}_g\to T_{\pi(g)}B$ is an isomorphism.)
	Next, let $\zeta_h$ be the unique integral curve of $X_t$ with $\zeta_h(0) = \zeta(0)$.
	By construction, $\pi(\zeta_h(t)) = \pi(\zeta(t))$ and $\met_G(\dot\zeta_h(t),\dot\zeta_h(t)) = \met_G(h(t),h(t))$.
	Thus, $\met_G(\dot\zeta_h(t),\dot\zeta_h(t))\leq \met_G(\dot\zeta(t),\dot\zeta(t))$, with equality if and only if $\dot\zeta(t)\in\mathcal{H}$.

	It remains to show that $\zeta_h$ is unique.
	Assume $\zeta_h':[0,1]\to G$ is another horizontal curve with $\pi\circ\zeta_h' = \bar\zeta$ and $\zeta_h'(0) = \zeta(0)$.
	By differentiation with respect to $t$ we obtain
	\begin{equation*}
		T_{\zeta_h'(t)}\pi\cdot \dot\zeta_h'(t) = \dot{\bar\zeta}(t) = \bar X_t(\bar\zeta(t)) = 
		\bar X_t(\pi(\zeta_h'(t))) = T_{\zeta_h'(t)}\pi\cdot X_t(\zeta_h'(t)).
	\end{equation*}
	Since $\zeta_h'$ is horizontal, it is an integral curve of $X_t$.
	Since $\zeta_h'$ and $\zeta_h$ have the same initial conditions, uniqueness of integral curves yield $\zeta_h' = \zeta_h$.
	This concludes the proof.
\end{proof}

\begin{remark}
	\autoref{lem:horiz_shortest_curve} is independent of the group structure of $G$.
	Indeed, $G$ and $B$ can be Banach manifolds with a smooth fibre bundle structure $\pi\colon G\to B$.
	The metrics can be weak, which is important for the main example in \autoref{sub:optimal_info_trans}.
\end{remark}

\iftoggle{final}{}{
{\color{blue!50!black}
\begin{framed}
	If $(B,\dist_B)$ is a complete metric space, then the infimum of shortest curves (occurring in the definition of $\dist_B$) is attained, and so there exists a corresponding horizontal curve in $G$, which is a shortest curve between $g$ and $\pi^{-1}(\{ b'\})$.
	In our main example, where $G=\Diff(M)$ and $B=\Dens(M)$, it does \emph{not} hold that $(B,\dist_B)$ is complete.
	However, there are still shortest curves (even unique minimal geodesics), since $\Dens(M)$ is a convex subset of a complete metric space.
\end{framed}
}}

For cost functions given by the square distance of a descending metric, \autoref{lem:horiz_shortest_curve} reduces the optimal transport problem~\eqref{eq:optimal_transport_abstract} to a problem entirely on~$B$: find a shortest curve between two given elements $b'',b'\in B$.
This problem is not always easier to solve.
If, however, the geometry of the Riemannian manifold $(B,\met_B)$ is well understood, for example if any two elements in $B$ are connected by a minimal geodesic, then problem~\eqref{eq:optimal_transport_abstract} simplifies significantly (see \autoref{pro:abstract_optimal_transport} below).

The concept of \emph{polar factorisation} is strongly related to optimal transport.
Following \citet{Br1991}, we introduce the \emph{polar cone}
\begin{equation}\label{eq:polar_cone}
	K = \{ k\in G; \dist_G(e,k)\leq\dist_G(h,k),\;\forall h\in G_b \}.
\end{equation}
Expressed in words, the polar cone~\eqref{eq:polar_cone} consists of elements in $G$ whose closest point on the identity fibre is~$e$.

\begin{proposition}\label{pro:abstract_optimal_transport}
	Let $\met_G$ be a right-invariant metric on $G$ that descends to a metric $\met_B$ on $B$ with respect to the fibre bundle $\pi_b(g) = \bar R_g(b)$ for some fixed $b\in B$.
	Then the following statements are equivalent:
	\begin{enumerate}
		\item For any $b'\in B$, there exists a unique minimal geodesic from $b$ to $b'$.
		\item For any $b'',b'\in B$, there exists a unique minimal geodesic from $b''$ to $b'$.
		\item There exists a unique solution to the optimal transport problem~\eqref{eq:optimal_transport_abstract} with $c=\dist_{G}^{2}$, and that solution is connected to~$e$ by a unique minimal geodesic.
		\item Every $g\in G$ has a unique factorisation $g = h k$, with $h\in G_b$ and $k\in K$, and every $k\in K$ is connected to~$e$ by a unique minimal geodesic.
	\end{enumerate}
\end{proposition}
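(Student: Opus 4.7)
The plan is to prove the cycle (1) $\Leftrightarrow$ (2) $\Leftrightarrow$ (3) $\Leftrightarrow$ (4), using right-invariance of $\met_G$ together with \autoref{lem:horiz_shortest_curve} as the only serious tools. As a preliminary I would first verify that each $\bar R_g$ acts by isometries on $(B,\met_B)$: right-invariance of $\met_G$, the intertwining $\pi_b\circ R_g = \bar R_g\circ\pi_b$, and the fact that right translation permutes the fibres (hence preserves the horizontal distribution) together force $\bar R_g^{*}\met_B = \met_B$. With this isometry in hand, (1) $\Leftrightarrow$ (2) is immediate by transitivity: for any $b'',b'$ pick $g$ with $\bar R_g(b)=b''$ and transport a minimal geodesic between $b$ and $\bar R_{g^{-1}}(b')$ along $\bar R_g$.

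For (2) $\Leftrightarrow$ (3) the mechanism is horizontal lifting. Going forward, I would lift the unique minimal geodesic $\gamma\colon b\to b'$ horizontally from $e$ and let $T^{*}$ be the endpoint; \autoref{lem:horiz_shortest_curve} combined with uniqueness of horizontal lifts then identifies $T^{*}$ as the unique optimiser and the lift as the unique minimal geodesic from $e$ to $T^{*}$. Going backward, the unique minimal geodesic from $e$ to $T^{*}$ in (3) must be horizontal --- otherwise the horizontal re-lift of its projection from $e$ would end in the same fibre and be strictly shorter, contradicting optimality of $T^{*}$ --- so it descends to a minimal curve on $B$ from $b$ to $b'$, and the same lifting argument transfers uniqueness. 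The extension from geodesics out of $b$ to geodesics between arbitrary points of $B$ is absorbed by the right-invariance established above.

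The equivalence (3) $\Leftrightarrow$ (4) hinges on one fibrewise identification:
\begin{equation*}
	K\cap\pi_b^{-1}(\{\pi_b(g)\}) \;=\; \bigl\{\,T\in\pi_b^{-1}(\{\pi_b(g)\}) \,:\, \dist_G(e,T) = \dist_B(b,\pi_b(g))\,\bigr\} .
\end{equation*}
The ``$\supseteq$'' direction is immediate from the Riemannian submersion inequality $\dist_G(e,T)\geq\dist_B(b,\pi_b(T))$. For ``$\subseteq$'' I would apply right-invariance in the form $\dist_G(h,k)=\dist_G(hk^{-1},e)$ to recognise $\dist_G(G_b,k)$ as the distance from $e$ to the fibre through $k^{-1}$, and then use that $\bar R_{k^{-1}}$ is an isometry of $B$ to fold $\dist_B(b,\bar R_{k^{-1}}(b))$ back to $\dist_B(b,\pi_b(k))$. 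Once this identification is in place, the automatic choice $h=gk^{-1}\in G_b$ turns existence and uniqueness of the factorisation $g=hk$ into existence and uniqueness of $k\in K$ in the fibre, i.e.\ of the transport optimiser, while the ``unique minimal geodesic from $e$'' clauses of (3) and (4) match verbatim.

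The main obstacle I anticipate is the right-invariance bookkeeping in this last step: the polar cone $K$ is defined with $G_b$ on the left of $k$, while the fibre through $g$ is also a left coset $G_b g$, so one has to track carefully how the one-sided group translations interact with the right action on $B$. Once the chain of isometries collapsing the apparent asymmetry via $\dist_B(b,\bar R_{k^{-1}}(b)) = \dist_B(b,\bar R_k(b))$ is isolated, the remainder reduces to a routine application of \autoref{lem:horiz_shortest_curve} combined with uniqueness of horizontal lifts and of minimal geodesics.
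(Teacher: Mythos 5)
Your plan parallels the paper's — the two engines are the same (right-invariance together with \autoref{lem:horiz_shortest_curve}), and the preliminary check that $\bar R_g$ acts by isometries on $(B,\met_B)$, the implication $(1)\Leftrightarrow(2)$, and the horizontal-lift argument for $(2)\Leftrightarrow(3)$ are all sound and essentially the paper's steps rearranged from a cycle $1\Rightarrow 2\Rightarrow 3\Rightarrow 4\Rightarrow 1$ into a chain of biconditionals. The one place where you deviate substantively, and where I see a genuine gap, is the fibrewise identification you invoke for $(3)\Leftrightarrow(4)$.

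You propose
\begin{equation*}
	K\cap\pi_b^{-1}(\{\pi_b(g)\}) \;=\; \bigl\{\,T\in\pi_b^{-1}(\{\pi_b(g)\}) : \dist_G(e,T) = \dist_B(b,\pi_b(g))\,\bigr\},
\end{equation*}
and your $\subseteq$ argument needs the intermediate fact that $\dist_G(e,\,\cdot\,)$ restricted to a fibre attains the value $\dist_B(b,\pi_b(g))$, i.e.\ that the Riemannian distance from $e$ to a fibre equals the distance on the base. That equality is not established in the abstract setting of the proposition: \autoref{lem:horiz_shortest_curve} takes as input a curve in $G$ and produces a shorter horizontal curve with the same projection, but it does not by itself allow you to lift an arbitrary curve on $B$ from $e$ (you would need a curve-lifting property for the submersion, and in the Banach-manifold setting completeness of the horizontal-lift flow). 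Moreover, even granting that identity, it does not immediately give $K\cap\pi_b^{-1}(\{\pi_b(g)\})=\{\text{optimisers}\}$ unless one separately knows the infimum over the fibre is attained and equals $\dist_B$. So the identification is both stronger than needed and not justified by the available tools.

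The repair is simpler than what you attempt, and it is precisely what the paper uses (implicitly) in $3\Rightarrow 4$: for any $k\in G$ and $h\in G_b$, right-invariance gives $\dist_G(h,k)=\dist_G(e,h^{-1}k)$, and as $h$ runs over $G_b$ the point $h^{-1}k$ runs over the fibre $\pi_b^{-1}(\{\pi_b(k)\})=G_b k$. Hence
\begin{equation*}
	K\cap\pi_b^{-1}(\{b'\}) \;=\; \bigl\{\,T\in\pi_b^{-1}(\{b'\}) : \dist_G(e,T)\le\dist_G(e,T')\ \ \forall\, T'\in\pi_b^{-1}(\{b'\})\,\bigr\},
\end{equation*}
i.e.\ $K$ meets each fibre exactly in the $\dist_G(e,\cdot)$-minimisers of that fibre — no mention of $\dist_B$ and no lifting required. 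With this in hand, $(3)\Leftrightarrow(4)$ is the clean bijection you want between unique optimisers and unique factorisations, and the two ``unique minimal geodesic'' clauses transfer exactly as you describe. The rest of your argument stands. For comparison, the paper closes the loop differently, proving $4\Rightarrow 1$ by a contradiction showing that the unique minimal geodesic from $e$ to $k\in K$ must be horizontal (reversing the curve and applying \autoref{lem:horiz_shortest_curve} to land in $G_b$ at a strictly closer point if it were not); your $(3)\Rightarrow(2)$ contains the same idea, so the two routes are equivalent in content once the $K$-characterisation is fixed.
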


\begin{proof}
	$1\Rightarrow 2$. 
	Since $\met_G$ is right-invariant we get that if $\zeta\colon[0,1]\to G$ is a minimal geodesic, then so is $\bar R_g\circ\zeta$ for any $g\in G$.
	Since the action $\bar R$ is transitive, $b'' = \bar R_g(b)$ for some $g\in G$.

	$2\Rightarrow 3$. 
	Let $\bar\zeta\colon[0,1]\to B$ be the minimal geodesic from $b$ to $b'$.
	By \autoref{lem:horiz_shortest_curve}, there is a unique corresponding horizontal geodesic $\zeta\colon[0,1]\to G$ with $\zeta(0) = e$ and $\pi_b(\zeta(t))=\bar\zeta(t)$.
	There cannot be any curve from $e$ to $\pi_b^{-1}(\{ b' \})$ shorter than $\zeta$, because then $\bar\zeta$ would not be a minimal geodesic.
	A curve from $e$ to $\pi_b^{-1}(\{ b' \})$ of the same length as $\zeta$ must be horizontal (by \autoref{lem:horiz_shortest_curve}), and therefore equal to~$\zeta$ (also by \autoref{lem:horiz_shortest_curve}).
	Thus, if $q \in \pi_b^{-1}(\{ b' \})\backslash\{\zeta(1)\}$ then $\dist_G(e,q)>\dist_G(e,\zeta(1))$, so $\zeta(1)$ is the unique solution to problem~\eqref{eq:optimal_transport_abstract}.
	Also, $\zeta$ is a unique minimal geodesic between $e$ and $\zeta(1)$.

	$3\Rightarrow 4$.
	Let $k$ be the unique solution to~\eqref{eq:optimal_transport_abstract} with $b' = \pi_b(g)$.
	Then $k$ and $g$ belong to the same fibre, so $g=h k$ for some unique element $h\in G_b$.
	There cannot be another such factorisation $g=h' k'$, because then $k$ would not be a unique solution to~\eqref{eq:optimal_transport_abstract}.
	Now take any $k\in K$.
	Then $k$ is the unique solution to~\eqref{eq:optimal_transport_abstract} with $b'= \pi_b(k)$, and that solution is connected to~$e$ by a unique minimal geodesic.
	Thus, any $k\in K$ is connected to~$e$ by a unique minimal geodesic.

	$4\Rightarrow 1$.
	For any $b'\in B$ we can find $g\in G$ such that $b' = \bar R_g(b) = \pi_b(g)$, which follows since the action $\bar R$ is transitive.
	Let $g=hk$ be the unique factorisation, and let $\zeta\colon[0,1]\to G$ be the unique minimal geodesic from $e$ to $k$.
	Assume now that $\zeta$ is not horizontal.
	Then, by \autoref{lem:horiz_shortest_curve}, we can find a horizontal curve $\zeta_h\colon[0,1]\to G$ with $\zeta_h(0)=k$ and $\zeta_h(1)\in G_b$ that is strictly shorter than~$\zeta$.
	Since $\zeta$ is a unique minimal geodesic between $e$ and $k$, it cannot hold that $\zeta_h(1) = e$.
	Thus, $k\notin K$ since there is a point $\zeta_h(1)$ on the identity fibre closer to $k$ than $e$, so we reach a contradiction.
	Therefore, $\zeta$ must be horizontal, so it descends to a corresponding geodesic $\bar\zeta$ between $b$ and $b'$.
	$\bar\zeta$ must be unique minimal, otherwise $\zeta$ cannot be unique minimal.
	This finalises the proof.
\end{proof}

\begin{remark}
	If the metric $\met_B$ is not right-invariant, then \autoref{pro:abstract_optimal_transport} is still valid in the case $b''=b$.
\end{remark}

\subsection{Optimal information transport} 
\label{sub:optimal_info_trans}

We now return to the example of main interest, namely $G=\Diff^{s}(M)$ and $B=\Dens^{s-1}(M)$.
Recall \autoref{thm:metric_is_descending}, that the $\alpha$-$\beta$-$\gamma$~metric on $\Diff^{s}(M)$ descends to the Fisher metric on $\Dens^{s-1}(M)$ up to multiplication with~$\beta$.
For simplicity, we assume $\beta=1$ throughout this section.
Also recall
\begin{itemize}
	\item $\bar R_\varphi(\nu) = \varphi^{*}\nu$, so $\pi_\vol(\varphi) = \varphi^{*}\vol$;
	\item $\Diff^{s}(M)$ and $\Dens^{s-1}(M)$ are Banach manifolds if $s> n/2+1$;
	\item the projection $\pi_\vol\colon\Diff^{s}(M)\to\Dens^{s-1}(M)$ is smooth.
\end{itemize}
Thus, all prerequisites in \autoref{pro:abstract_optimal_transport} are fulfilled.

\citet*{KhLeMiPr2013} showed that the geodesic boundary-value problem on $\Dens^{s-1}(M)$, with respect to the Fisher metric, can be formulated as an optimal transport problem, with respect to a degenerate cost function.
Since the cost function is degenerate, solutions are not unique, so there is no corresponding polarisation result.
The $\alpha$-$\beta$-$\gamma$~metric on $\Diff^{s-1}(M)$ allows us to obtain a non-degenerate optimal transport formulation in accordance with the framework in~\autoref{sec:optimal_transport_and_polar_factorisation}. 
In particular, we obtain a factorisation result for diffeomorphisms.

Let $\lambda,\nu\in\Dens^{s-1}(M)$.
We consider the following optimal transport problem: 
\begin{equation}\label{eq:optimal_info_trans}
	\text{Find $\varphi\in\{ \eta\in\Diff^{s}(M);\eta^{*}\lambda=\nu\}$ minimising $\dist^{2}_{\alpha\beta\gamma}(\id,\varphi)$.}
\end{equation}
Here, $\dist_{\alpha\beta\gamma}$ is the Riemannian distance corresponding to the $\alpha$-$\beta$-$\gamma$~metric~\eqref{eq:metric}.
Since the $\alpha$-$\beta$-$\gamma$~metric descends to Fisher's information metric, we refer to problem~\eqref{eq:optimal_info_trans} as \emph{optimal information transport};
find the most optimal diffeomorphism pulling the probability density~$\lambda$ to~$\nu$.
For simplicity, we assume that $\lambda = \vol$.


As mentioned in the introduction, \citet{Fr1991} showed that the Fisher metric has constant curvature.
Therefore, its geodesics are easy to analyse.
Indeed, following~\cite{KhLeMiPr2013}, we introduce the infinite-dimensional sphere of radius $r = \sqrt{\vol(M)}$
\begin{equation*}
	S^{\infty,s}(M) = \Big\{
		f\in H^{s}(M,\R); \pair{f,f}_{L^{2}}=\vol(M)
	\Big\}.
\end{equation*}
If $s>n/2$, then $S^{\infty,s}(M)$ is a Banach submanifold of $H^{s}(M,\R)$.
The $L^{2}$ inner product on $H^{s}(M,\R)$, restricted to $S^{\infty,s}(M)$, provides a weak Riemannian metric.
Although weak, the geodesic spray is smooth.
The geodesics are given by great circles, so $S^{\infty,s}(M)$ is geodesically complete and its diameter is given by $\pi \sqrt{\vol(M)}$.

Let $\mathcal{O}^{s}(M) = \{f\in S^{\infty,s}(M); f>0 \}$ denote the set of positive functions of radius~$\sqrt{\vol(M)}$.
$\mathcal{O}^{s}(M)$ is an open subset of $S^{\infty,s}(M)$ and therefore a Banach manifold itself.
The following result is given in~\cite{KhLeMiPr2013}.

\begin{theorem}\label{thm:sphere_diffeo}
	If $s>n/2$, then the map
	\begin{equation*}
		\Phi\colon\Dens^{s}(M)\ni \nu \longmapsto \sqrt{\frac{\nu}{\vol}}
	\end{equation*}
	is an isometric diffeomorphism $\Dens^{s}(M)\to \mathcal{O}^{s}(M)$.
	The diameter of $\mathcal{O}^{s}(M)$, thus $\Dens^{s}(M)$, is $\frac{\pi}{2}\sqrt{\vol(M)}$.
\end{theorem}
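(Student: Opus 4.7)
The plan is to verify four things in sequence: $\Phi$ lands in $\mathcal{O}^{s}(M)$, $\Phi$ is a diffeomorphism, $\Phi$ is an isometry, and finally the diameter equals $\tfrac{\pi}{2}\sqrt{\vol(M)}$.

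Well-definedness is immediate: positivity of $\sqrt{\nu/\vol}$ follows from $\nu>0$, and the normalisation $\int_{M}\vol=1$ assumed at the outset gives $\pair{\sqrt{\nu/\vol},\sqrt{\nu/\vol}}_{L^{2}} = \int_{M}\nu = 1 = \vol(M)$, so $\Phi(\nu)\in S^{\infty,s}(M)$. The candidate inverse is $f\mapsto f^{2}\vol$, which is positive and satisfies $\int_{M}f^{2}\vol=\pair{f,f}_{L^{2}}=1$. Smoothness of both maps in the $H^{s}$ topology follows from the standard Moser-type composition lemma: for $s>n/2$ the space $H^{s}(M,\R)$ is a Banach algebra and any $C^{\infty}$ map defined on a neighbourhood of the range acts smoothly by composition. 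Applying this to $f\mapsto f^{2}$ and to $\rho\mapsto\sqrt{\rho}$ on the open cone of $H^{s}$-positive functions bounded away from zero gives the diffeomorphism property.

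For the isometry, I would differentiate $t\mapsto \sqrt{(\nu+t\hat a)/\vol}$ at $t=0$ to obtain
\begin{equation*}
T_{\nu}\Phi\cdot\hat a = \tfrac{1}{2}\,(\hat a/\vol)\big/\sqrt{\nu/\vol}.
\end{equation*}
Tangency to the sphere is the identity $\pair{\Phi(\nu),T_{\nu}\Phi\cdot\hat a}_{L^{2}} = \tfrac{1}{2}\int_{M}\hat a = 0$. The core calculation, using $\hat a/\nu=(\hat a/\vol)/(\nu/\vol)$, is
\begin{equation*}
\pair{T_{\nu}\Phi\cdot\hat a,\,T_{\nu}\Phi\cdot\hat b}_{L^{2}}
= \tfrac{1}{4}\int_{M}\frac{(\hat a/\vol)(\hat b/\vol)}{\nu/\vol}\,\vol
= \tfrac{1}{4}\int_{M}\frac{\hat a}{\nu}\frac{\hat b}{\nu}\,\nu
= \tfrac{1}{4}\inner{\hat a,\hat b}_{\nu},
\end{equation*}
so $\Phi$ pulls the restricted $L^{2}$ metric on $\mathcal{O}^{s}(M)$ back to the Fisher metric \eqref{eq:canonical_metric_on_dens} up to the overall constant that is absorbed by the chosen radius convention for~$S^{\infty,s}(M)$.

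For the diameter, the geodesics of the weak $L^{2}$ metric on $S^{\infty,s}(M)$ are great circles, so the distance between $f,g\in\mathcal{O}^{s}(M)$ equals $\sqrt{\vol(M)}\,\arccos\!\bigl(\pair{f,g}_{L^{2}}/\vol(M)\bigr)$. Since $f,g>0$ one has $\pair{f,g}_{L^{2}}>0$, forcing the angle strictly into $[0,\pi/2)$; but angles arbitrarily close to $\pi/2$ are realised by sequences of smooth positive functions concentrating on disjoint small open sets (kept in $\mathcal{O}^{s}(M)$ by suitable mollification and rescaling), yielding the supremum $\tfrac{\pi}{2}\sqrt{\vol(M)}$. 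The only technical point worth flagging as an obstacle is the composition-operator smoothness of $\rho\mapsto\sqrt{\rho}$ in step two, but this is the familiar \emph{$\omega$-lemma}-style argument at the heart of \cite{EbMa1970} and is routine for $s>n/2$; the isometry identity and the diameter estimate are then direct.
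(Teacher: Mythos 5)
The paper does not actually prove this theorem; it is stated with a pointer to \citet{KhLeMiPr2013}, so there is nothing to compare line-by-line. Your overall scheme — well-definedness, smoothness of $\nu\mapsto\sqrt{\nu/\vol}$ and its inverse via the composition/$\omega$-lemma for $s>n/2$ on functions bounded away from zero, the pullback computation of the first fundamental form, and the diameter via great circles — is the standard and correct route.

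There is one genuine problem, though, and it is worth being precise about it. Your pullback calculation correctly yields
\begin{equation*}
\pair{T_{\nu}\Phi\cdot\hat a,\,T_{\nu}\Phi\cdot\hat b}_{L^{2}}=\tfrac14\inner{\hat a,\hat b}_{\nu},
\end{equation*}
but the way you dispose of the $\tfrac14$ — ``absorbed by the chosen radius convention for $S^{\infty,s}(M)$'' — does not work. Changing the radius only changes the constraint defining $S^{\infty,s}(M)$; it does not rescale the restricted $L^{2}$ metric, so the pullback factor $\tfrac14$ survives any re-radius-ing as long as the map is $\nu\mapsto\sqrt{\nu/\vol}$. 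To genuinely remove the factor you must rescale the map itself (use $2\sqrt{\nu/\vol}$, landing on the sphere of radius $2\sqrt{\vol(M)}$), or else build the $\tfrac14$ into the definition of the Fisher metric, which is the convention used in \cite{KhLeMiPr2013}. In the present paper's convention \eqref{eq:canonical_metric_on_dens} (no $\tfrac14$), $\Phi$ is therefore a homothety with factor $\tfrac12$ on distances rather than an isometry, and the ``thus'' in the diameter claim ($\diam\Dens^{s}=\diam\mathcal{O}^{s}$) would then need the factor $2$. What your computation has really exposed is a constant-factor mismatch between \eqref{eq:canonical_metric_on_dens} and the statement of the theorem; you should state this explicitly rather than wave it away.

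A smaller gap: in the diameter step you identify $\dist_{\mathcal{O}^{s}}(f,g)$ with the ambient $S^{\infty,s}$ great-circle distance $\sqrt{\vol(M)}\arccos\bigl(\pair{f,g}_{L^{2}}/\vol(M)\bigr)$. That identification requires $\mathcal{O}^{s}(M)$ to be geodesically convex in $S^{\infty,s}(M)$, i.e., that the minimal ambient geodesic between two positive functions stays positive. This is true (see the explicit geodesic \eqref{eq:explicit_geodesic}: both coefficients $\sin((1-t)\theta)/\sin\theta$ and $\sin(t\theta)/\sin\theta$ are nonnegative for $t\in[0,1]$ and $\theta\in(0,\pi)$), but it is a step, and should be said; without it the intrinsic distance in $\mathcal{O}^{s}(M)$ could a priori exceed the ambient one.
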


If $f,g\in\mathcal{O}^{s}(M)$, then there is a unique minimal geodesic $\sigma\colon[0,1]\to S^{\infty,s}(M)$ from $f$ to $g$, and that geodesic is contained in $\mathcal{O}^{s}(M)$, i.e., $\mathcal{O}^{s}(M)$ is a convex subset of $S^{\infty,s}(M)$.
Indeed, the minimal geodesic is given by
\begin{equation}\label{eq:explicit_geodesic}
	\sigma\colon[0,1]\ni t \longmapsto \frac{\sin\!\big((1-t)\theta\big)}{\sin \theta}f + \frac{\sin(t\theta)}{\sin \theta}g,
	\quad \theta = \arccos\Big( \frac{\pair{f,g}_{L^{2}}}{\vol(M)} \Big) .
\end{equation}

The polar cone of $\Diff^{s}(M)$ with respect to the $\alpha$-$\beta$-$\gamma$~metric is
\begin{equation}\label{eq:polar_cone_diff}
	K^{s}(M) = \big\{
		\varphi\in\Diff^{s}(M); \dist_{\alpha\beta\gamma}(\id,\varphi) \leq \dist_{\alpha\beta\gamma}(\eta,\varphi),\;
		\forall\, \eta\in\Diff_{\vol}^{s}(M)
	\big\}.
\end{equation}
There is a unique minimal geodesic between $\vol$ and any $\nu\in\Dens^{s-1}(M)$, so from \autoref{pro:abstract_optimal_transport} every $\psi\in K^{s}(M)$ is the endpoint of a minimal horizontal geodesic $\zeta\colon[0,1]\to\Diff^{s}(M)$ with $\zeta(0) = \id$.
Since $\zeta$ is horizontal, it is of the form $\zeta(t)=\Exp_{\id}(t\grad(w_0))$ for a unique $w_0\in\Fcal^{s+1}(M)$, where $\Exp\colon T\Diff^{s}(M)\to\Diff^{s}(M)$ denotes the Riemannian exponential corresponding to the $\alpha$-$\beta$-$\gamma$~metric.

Let $\varphi\in\Diff^{s}(M)$.
Due to the explicit form~\eqref{eq:explicit_geodesic} of minimal geodesics in $\mathcal{O}^{s-1}(M)$, thus $\Dens^{s-1}(M)$, we can compute the function $w_0\in\Fcal^{s+1}(M)$ such that $\Exp_\id(\grad(w_0))$ is the unique element in $K^{s}(M)$ belonging to the same fibre as $\varphi$.
Indeed,
\begin{equation*}
	\begin{split}
	T_{\id}\pi_\vol\cdot\grad(w_0) &= \frac{\dd}{\dd t}\Big|_{t=0} \pi_\vol(\Exp_\id(t\grad(w_0))) \\
	&= \frac{\dd}{\dd t}\Big|_{t=0}\pi_\vol(\zeta(t)) = \frac{\dd}{\dd t}\Big|_{t=0}\sigma(t)^{2}\vol 
	= 2\dot\sigma(0)\sigma(0)\vol,		
	\end{split}
\end{equation*}
where $\sigma(t)$ is the curve \eqref{eq:explicit_geodesic} with $f=1$ and $g=\sqrt{\Jac(\varphi)}$ (the Jacobian is defined by $\Jac(\psi)\vol = \psi^{*}\vol$).
Since $\sigma(0) = 1$ and $T_{\id}\pi_\vol\cdot\grad(w_0) = \LieD_{\grad(w_0)}\vol = \Delta w_0\vol$, we get
\begin{equation}\label{eq:w0}
	\Delta w_0 = \frac{2\theta\sqrt{\Jac(\varphi)}-2\theta\cos(\theta)}{\sin\theta}, \quad \theta = \arccos\Big(\frac{\int_M \sqrt{\Jac(\varphi)} \vol}{\vol(M)}\Big).
\end{equation}

How do we compute the horizontal geodesic $\zeta(t) = \Exp_\id(t \grad(w_0))$?

One way is to solve equation~\eqref{eq:maineq} with $u(0) = \grad(w_0)$, and reconstruct~$\zeta(t)$ by integrating the non-autonomous equation $\dot\zeta(t) = u(t)\circ\zeta(t)$ with $\zeta(0)=\id$.
From \cite[Proposition~4.5]{KhLeMiPr2013} and the isomorphism $T_\id\pi_\vol\colon \grad(\Fcal^{s+1}(M))\to T_\vol\Dens^{s-1}(M)$, the solution exists for $t\in [0,1]$ (for details on the maximal existence time, see \cite[\S\!~4.2]{KhLeMiPr2013}).

Another way is to directly lift the geodesic curve~\eqref{eq:explicit_geodesic} by the technique in the proof of \autoref{lem:horiz_shortest_curve}.
Indeed, the geodesic $\bar\zeta(t)$ in $\Dens^{s-1}(M)$ corresponding to $\zeta(t)$ is given by $\bar\zeta(t) = \Phi^{-1}(\sigma(t)) = \sigma(t)^{2}\vol$.
Since $\pi_\vol(\zeta(t)) = \bar\zeta(t)$ and 
\begin{equation*}
	TR_{\zeta(t)^{-1}}(\zeta(t),\dot\zeta(t)) = (\id,\dot\zeta(t)\circ\zeta(t)^{-1}) = (\id,\grad(w_t)), \quad w_t\in\Fcal^{s+1}(M)	
\end{equation*}
we conclude
\begin{equation*}
	\begin{split}
	\dot{\bar\zeta}(t) &= T_\id (\bar R_{\zeta(t)}\circ \pi_\vol) \cdot \grad(w_t) \\
		&= \zeta(t)^{*}\big(\LieD_{\grad(w_t)}\vol \big) \\
		&= \divv(\grad(w_t))\circ\zeta(t)\Jac(\zeta(t))\vol.
	\end{split}
\end{equation*}
From $\dot{\bar\zeta}(t) = 2\dot\sigma(t)\sigma(t)\vol$ and $\Jac(\zeta(t)) = \sigma(t)^{2}$, we get
\begin{equation*}
	2\dot\sigma(t)\sigma(t) = \big(\Delta w_t\circ\zeta(t)\big) \sigma(t)^{2} . 
\end{equation*}
The horizontal geodesic $\zeta(t)$ is now constructed by solving the following non-autonomous ordinary differential equation on $\Diff^{s}(M)$
\begin{equation}\label{eq:lifting}
	\begin{split}
		\dot\zeta(t) &= \grad(w_t)\circ\zeta(t), \quad \zeta(0)=\id \\
		\Delta w_t &= \frac{2\dot\sigma(t)}{\sigma(t)}\circ\zeta(t)^{-1} \\
		\sigma(t) &= \frac{\sin\!\big((1-t)\theta\big)}{\sin \theta} + \frac{\sin(t\theta)}{\sin \theta}\sqrt{\Jac(\varphi)},
		\quad \theta = \arccos\Big(\frac{\int_M \sqrt{\Jac(\varphi)} \vol}{\vol(M)}\Big).
	\end{split}
\end{equation}
Explicitly, the vector field $X_{t}$ on $\Diff^{s}(M)$ is
\begin{equation*}
	X_t\colon\zeta \mapsto \widetilde{\grad}\circ \tilde{\Delta}^{-1}\Big( \zeta, \frac{2\dot\sigma(t)}{\sigma(t)} \Big) = \grad\Big(\Delta^{-1}\Big( \frac{2\dot\sigma(t)}{\sigma(t)}\circ\zeta^{-1} \Big)\Big)\circ \zeta .
\end{equation*}
That is, equation~\eqref{eq:lifting} is the non-autonomous ordinary differential equation
\begin{equation*}
	\dot\zeta(t) = X_t(\zeta(t)),\quad \zeta(0) = \id .
\end{equation*}
Smoothness of $X_{t}$ is obtained by the same techniques as in the proofs of \autoref{lem:Atilde_smooth} and \autoref{lem:conjugation_smooth}.

In summary, we have proved the following result.


\begin{theorem}\label{thm:factorisation_of_diff}
	Let $s>n/2+1$.
	Every $\varphi\in\Diff^{s}(M)$ admits a unique factorisation $\varphi = \eta\circ\psi$, with $\eta\in\Diff^{s}_{\vol}(M)$ and $\psi \in K^{s}(M)$.
	We have $\psi = \Exp_\id(\grad(w_0))$ with $w_0$ given by equation~\eqref{eq:w0}.
	There is a unique minimal geodesic $\zeta(t)$ with $\zeta(0)=\id$ and $\zeta(1)=\psi$;
	it can be computed by solving equation~\eqref{eq:lifting}.
	The geodesic $\zeta(t)$ is horizontal.
\end{theorem}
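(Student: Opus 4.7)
The plan is to apply Proposition~\ref{pro:abstract_optimal_transport} in the concrete setting $G=\Diff^s(M)$, $B=\Dens^{s-1}(M)$, $G_b = \Diff^s_\vol(M)$, with $b=\vol$. All the abstract prerequisites have been collected in the preceding paragraphs: the $\alpha$-$\beta$-$\gamma$~metric is right-invariant and (by \autoref{thm:metric_is_descending}) descends to the Fisher metric on $\Dens^{s-1}(M)$, and $\pi_\vol\colon\Diff^s(M)\to\Dens^{s-1}(M)$ is a smooth Banach principal bundle (\autoref{rem:dens_bundle_structure}). The only nontrivial hypothesis of \autoref{pro:abstract_optimal_transport} that remains to be checked is item~1, namely that there is a unique minimal geodesic from $\vol$ to any $\nu\in\Dens^{s-1}(M)$.

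To verify this, I would invoke \autoref{thm:sphere_diffeo}: the Fisher geometry on $\Dens^{s-1}(M)$ is, via $\Phi$, isometric to the open subset $\mathcal{O}^{s-1}(M)\subset S^{\infty,s-1}(M)$, whose geodesics are great circles with the closed form~\eqref{eq:explicit_geodesic}. Because $f=1$ and $g=\sqrt{\Jac(\varphi)}$ are both strictly positive and the angle $\theta=\arccos\bigl(\pair{f,g}_{L^2}/\vol(M)\bigr)$ lies strictly below $\pi/2$ (by the Cauchy--Schwarz strict inequality unless $g$ is constant), the great-circle arc joining them stays in $\mathcal{O}^{s-1}(M)$; hence $\vol$ is connected to $\varphi^*\vol$ by a unique minimal geodesic inside $\Dens^{s-1}(M)$. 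With this in hand, item~4 of \autoref{pro:abstract_optimal_transport} gives the existence and uniqueness of the factorisation $\varphi = \eta\circ\psi$ with $\eta\in\Diff^s_\vol(M)$ and $\psi\in K^s(M)$, as well as the existence of a unique minimal geodesic $\zeta\colon[0,1]\to\Diff^s(M)$ from $\id$ to $\psi$; by the same proposition (and \autoref{lem:horiz_shortest_curve}) this geodesic is horizontal.

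Horizontality of $\zeta$ means $\dot\zeta(0)\in\mathcal{H}_\id = \grad(\Fcal^{s+1}(M))$, so $\zeta(t) = \Exp_\id(t\grad(w_0))$ for some unique $w_0\in\Fcal^{s+1}(M)$. To obtain the explicit equation~\eqref{eq:w0} for $w_0$, I would differentiate $\pi_\vol(\zeta(t)) = \Phi^{-1}(\sigma(t)) = \sigma(t)^2\vol$ at $t=0$, using $T_\id\pi_\vol\cdot\grad(w_0) = \LieD_{\grad(w_0)}\vol = \Delta w_0\,\vol$ and $\sigma(0)=1$. Matching with $2\dot\sigma(0)\sigma(0)\vol$ using the closed form~\eqref{eq:explicit_geodesic} with $f=1$, $g=\sqrt{\Jac(\varphi)}$ gives exactly \eqref{eq:w0}.

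Finally, for the computational statement that $\zeta$ can be reconstructed by~\eqref{eq:lifting}, I would apply the horizontal-lift construction in the proof of \autoref{lem:horiz_shortest_curve}: the time-dependent vector field $\bar X_t$ on $\Dens^{s-1}(M)$ whose flow is $\bar\zeta(t)=\sigma(t)^2\vol$ lifts to the horizontal section $X_t$ on $\Diff^s(M)$ given explicitly by $X_t(\zeta) = \grad\bigl(\Delta^{-1}(2\dot\sigma(t)/\sigma(t)\circ\zeta^{-1})\bigr)\circ\zeta$, and uniqueness of horizontal lifts identifies $\zeta(t)$ as the flow of $X_t$. The main obstacle here is to justify that this ODE has a solution on the full interval $[0,1]$ in $\Diff^s(M)$; smoothness of $X_t$ follows from the same conjugation/composition lemmas used in the proofs of \autoref{lem:Atilde_smooth} and \autoref{lem:conjugation_smooth} (noting that $\sigma(t)$ stays bounded away from zero since $\theta<\pi/2$), and global existence on $[0,1]$ follows because the corresponding base curve $\bar\zeta(t)$ is globally defined in $\Dens^{s-1}(M)$, as also used in \cite[\S\!~4.2]{KhLeMiPr2013}.
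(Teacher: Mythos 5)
Your proposal is correct and follows essentially the same route as the paper: verify that $\vol$ is joined to every $\nu\in\Dens^{s-1}(M)$ by a unique minimal Fisher geodesic using \autoref{thm:sphere_diffeo} and the explicit great-circle formula~\eqref{eq:explicit_geodesic}, then invoke \autoref{pro:abstract_optimal_transport} (items~1 and~4) together with \autoref{lem:horiz_shortest_curve} to get the factorisation and the horizontal geodesic, and finally derive \eqref{eq:w0} and \eqref{eq:lifting} by differentiating $\pi_\vol\circ\zeta$ and lifting the base geodesic. One small misattribution: the bound $\theta<\pi/2$ comes from the strict positivity of $\int_M fg\,\vol$ for $f,g>0$ (while strict Cauchy--Schwarz is what gives $\theta>0$ unless $g$ is constant); in fact, for the arc to stay in $\mathcal{O}^{s-1}(M)$ it suffices that $\theta\in(0,\pi)$, since both coefficients in \eqref{eq:explicit_geodesic} are then nonnegative on $[0,1]$.
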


\begin{remark}
	The factorisation in \autoref{thm:factorisation_of_diff} is independent of~$\alpha$, $\beta$, and $\gamma$, because each parameter choice yields the same horizontal distribution and horizontal geodesics.
\end{remark}




\subsection{Optimal transport of inner products and $QR$~factorisation} 
\label{sub:qr}

In this section we show how the $QR$~factorisation of square matrices is related to optimal transport of inner products on~$\R^{n}$. 
The example provides a finite-dimensional analogue of optimal information transport described in~\autoref{sub:optimal_info_trans}.
We do not address questions of global existence and uniqueness of geodesics (local existence and uniqueness follows automatically).
The aim is to provide geometric insight into the $QR$ and Cholesky factorisations.

Let $G=\GL(n)$ over the field $\R$.
Let $B=\Sympn$ be the manifold of inner products on $\R^{n}$.
$\Sympn$ is identified with the space of symmetric positive definite $n\times n$~matrices;
if $M$ is a symmetric positive definite matrix, then its corresponding inner product is $\pair{\mathbf{x},\mathbf{y}}_{M} = \mathbf{x}^{\trans}M\mathbf{y}$.
$\Sympn$ is a convex open subset of the vector space $\Symn$ of all symmetric $n\times n$~matrices.

The group $\GL(n)$ acts on $\Sympn$ from the right by $\bar R_{A}(M) = A^{\trans}M A$.
This action is transitive.
The lifted action is $T_M\bar R_{A}\cdot U = A^{\trans}U A$, where $U\in T_M\Sympn=\Symn$.

Let $I$ denote the identity matrix (which is an element in both $\GL(n)$ and $\Sympn$). 
Consider the projection $\pi_I\colon\GL(n)\to\Sympn$ given by $\pi_I(A) = \bar R_A(I) = A^{\trans}A$.
The corresponding isotropy group is $G_I = \SO(n)$, because
\begin{equation*}
\pi_I(Q A) = A^{\trans}Q^{\trans}Q A = A^{\trans}A = \pi_I(A)
\end{equation*}
for all $Q\in\SO(n)$.	
Thus, we have a principal bundle
\begin{equation}\label{eq:principal_bundle_QR}
	\SO(n) \xhookrightarrow{\quad} \GL(n) \xrightarrow{\;\pi_I\;} \Sympn .
\end{equation}

There is natural metric $\met_B$ on $\Sympn$ given by
\begin{equation}\label{eq:sym_metric}
	\met_{B,M}(U,V) = \tr(UM^{-2} V), \quad U,V\in T_M\Sympn .
\end{equation}
This metric is invariant with respect to the $\bar R_A$ action.
Indeed,
\begin{equation*}
	\begin{split}
		\met_{B,\bar R_A(M)}(T_M\bar R_A\cdot U, T_M\bar R_A\cdot V) &= \tr(A^{\trans}UA (A^{\trans}MA)^{-2} A^{\trans}VA) \\
		&= \tr((A^{\trans}MA)^{-1}A^{\trans}UA (A^{\trans}MA)^{-1} A^{\trans}VA) \\
		&= \tr(A^{-1}M^{-1}A^{-\trans}A^{\trans}UA A^{-1}M^{-1}A^{-\trans} A^{\trans}VA) \\
		&= \tr(A^{-1}M^{-1}UM^{-1}VA) \\
		& \text{(using cyclic property: $\tr(ABC) = \tr(BCA)$)} \\
		&= \tr(M^{-1}UM^{-1}VA A^{-1}) \\
		&= \tr(M^{-1}UM^{-1}V) \\
		&= \tr(UM^{-2}V) = \met_{B,M}(U,V)
	\end{split}
\end{equation*}

We proceed by constructing a metric on $\GL(n)$.
Consider the projection operator $\ell\colon\Matn\to\Matn$ given by 
\begin{equation*}
	\ell(U)_{ij} = \begin{cases}
		U_{ij} &\text{if } i \geq j, \\
		0 &\text{otherwise.}
	\end{cases}	
\end{equation*}
In words, $\ell(U)$ is zero on the strictly upper triangular entries and equal to~$U$ elsewhere.
Let $\met_{G}$ be the right-invariant metric on $\GL(n)$ defined by
\begin{equation}\label{eq:QR_metric}
	\met_{G,I}(u,v) = \tr\big(\ell(u)^{\trans}\ell(v)\big) + \tr\big((u+u^{\trans})(v+v^{\trans})\big).
\end{equation}
By right translation~$\met_{G,A}(U,V) = \met_{G,I}(UA^{-1},VA^{-1})$.
The orthogonal complement of $T_I\SO(n) = \so(n)$ with respect to $\met_{G,A}$ consists of the upper triangular matrices.
This follows since matrices in $\so(n)$ are skew symmetric, so the second term in~\eqref{eq:QR_metric} vanishes if either $u$ or $v$ belong to $\so(n)$.
In mathematical terms
\begin{equation*}
	\so(n)^{\trans} = \mathfrak{upp}(n) \coloneqq \{ u\in \gl(n); \ell(u) = 0 \}.	
\end{equation*}

\begin{proposition}
	The right-invariant metric $\met_G$ on $\GL(n)$ is descending with respect to the principal bundle structure~\eqref{eq:principal_bundle_QR}.
	The corresponding metric on $\Sympn$ is $\met_B$ in \eqref{eq:sym_metric}.
\end{proposition}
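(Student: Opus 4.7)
The strategy is to recognise $\met_G$ as having the structure exhibited in the construction~\eqref{eq:constructed_metric} of descending metrics, and then invoke Proposition~\ref{pro:right_invariant_descending}. Concretely, I will identify the second term in~\eqref{eq:QR_metric} with the pullback $\pi_I^*\met_B$ and the first term with a positive semi-definite complementary form $\mathsf{h}$.

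First I differentiate $\pi_I(A)=A^{\trans}A$ to obtain $T_I\pi_I\cdot u = u+u^{\trans}$, which is surjective onto $\Symn$ with kernel $\so(n)=\mathcal{V}_I$. Since $\met_{B,I}(X,Y)=\tr(XY)$, this gives $(\pi_I^*\met_B)_I(u,v)=\tr\big((u+u^{\trans})(v+v^{\trans})\big)$, matching exactly the second term of $\met_{G,I}$. Because $\met_B$ is $\bar R_A$-invariant (established immediately above) and $\pi_I\circ R_A=\bar R_A\circ\pi_I$, the pullback $\pi_I^*\met_B$ is right-invariant on $\GL(n)$. Both $\met_G$ and $\pi_I^*\met_B$ are therefore right-invariant and agree on their "second-term" part at the identity, so defining
\[
\mathsf{h}_A(u,v)\coloneqq\tr\big(\ell(uA^{-1})^{\trans}\ell(vA^{-1})\big),
\]
the global decomposition $\met_G=\pi_I^*\met_B+\mathsf{h}$ holds, with $\mathsf{h}$ automatically right-invariant.

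It remains to verify that $\mathsf{h}$ fits the hypotheses of~\eqref{eq:constructed_metric}. Clearly $\mathsf{h}_I(u,u)=\lVert\ell(u)\rVert_F^{2}\geq 0$, so $\mathsf{h}$ is positive semi-definite, and $\ker(\mathsf{h}_I)=\{u:\ell(u)=0\}=\mathfrak{upp}(n)$. Using the orthogonality identity $\so(n)^{\trans}=\mathfrak{upp}(n)$ already recorded in the text, this kernel coincides with the horizontal distribution $\mathcal{H}_I$; by right-invariance, $\ker(\mathsf{h}_A)=\mathcal{H}_A$ for every $A\in\GL(n)$. Complementarity $\ker(\mathsf{h}_I)\cap\mathcal{V}_I=\{0\}$ follows because the intersection of $\mathfrak{upp}(n)$ with $\so(n)$ contains only the zero matrix, and the dimensions add correctly to $n^{2}$.

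Thus $\met_G$ is of the form~\eqref{eq:constructed_metric} with the required complementarity, so by Proposition~\ref{pro:right_invariant_descending} it is right-invariant and descending, with descended metric equal to $\met_B$ on $\Sympn$. The only delicate step is confirming that $\ker\mathsf{h}_I$ coincides exactly with $\so(n)^{\perp}$ rather than being a proper subspace; this reduces to the orthogonality computation performed in the paragraph preceding the proposition, so no genuine obstacle arises.
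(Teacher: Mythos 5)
Your proof is correct and takes a different route from the paper's. You exhibit the explicit global decomposition $\met_G = \pi_I^*\met_B + \mathsf{h}$, matching the second term of~\eqref{eq:QR_metric} with $(\pi_I^*\met_B)_I$ and the first term with $\mathsf{h}_I(u,v) = \tr(\ell(u)^\trans\ell(v))$, and then verify the complementarity hypotheses required by the construction~\eqref{eq:constructed_metric}: $\ker\mathsf{h}_I=\mathfrak{upp}(n)$ meets $\so(n)$ trivially and has the right codimension. This establishes that $\met_G$ descends and identifies the descended metric as $\met_B$ in one stroke. The paper instead verifies the Lie-algebraic criterion of Proposition~\ref{pro:descending_right_invariant}, showing $\met_{G,I}(\ad_\xi u,v)+\met_{G,I}(u,\ad_\xi v)=0$ for $u,v\in\mathfrak{upp}(n)$ and $\xi\in\so(n)$ by a cyclic-trace manipulation, and only afterwards computes $\met_{G,I}$ on $\mathfrak{upp}(n)$ to identify the descended metric. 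Your decomposition is arguably cleaner for this finite-dimensional example because the splitting $\pi_I^*\met_B + \mathsf{h}$ is globally transparent; the paper's $\ad$-criterion is the same tool used in the infinite-dimensional Theorem~\ref{thm:metric_is_descending}, which likely explains the uniform choice of method. One small slip: your concluding sentence credits Proposition~\ref{pro:right_invariant_descending} with the conclusion that $\met_G$ is descending, but that proposition \emph{assumes} the metric is descending and characterises right-invariance; the step you actually need, and already invoked, is the construction~\eqref{eq:constructed_metric}. Similarly, the "delicate step" you flag at the end — confirming $\ker\mathsf{h}_I=\mathcal{H}_I$ — is automatic once the complementarity conditions hold, as noted directly after~\eqref{eq:constructed_metric}, so no separate verification is required.
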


\begin{proof}
	By \autoref{pro:descending_right_invariant} we need to show that 
	\begin{equation*}
		\met_{G,I}( \ad_\xi(u),v) + \met_{G,I}(u,\ad_\xi(v)) = 0, \quad \forall\, u,v\in \mathfrak{upp}(n), \xi\in\so(n).
	\end{equation*}
	We have
	\begin{equation*}
		\met_{G,I}(\ad_\xi(u),v) = \tr\big( ([\xi,u]+[\xi,u]^{\trans})(v+v^{\trans})\big) = 
		\tr\big(([\xi,u+u^{\trans}])(v+v^{\trans}) \big).
	\end{equation*}
	By the cyclic property of the trace
	\begin{equation*}
		\begin{split}
			\tr\big(([\xi,u+u^{\trans}])(v+v^{\trans}) \big) &= -\tr\big((u+u^{\trans})([\xi,v+v^{\trans}]) \big) \\
			&= -\tr\big((u+u^{\trans})([\xi,v] + [\xi,v]^{\trans}]) \big) \\
			&= -\met_{G,I}(u,\ad_\xi(v)).
		\end{split}
	\end{equation*}
	Therefore, the metric is descending.
	If $u\in\mathfrak{upp}(n)$, then $T_I\pi_I\cdot u = u + u^{\trans}$, so
	\begin{equation*}
		\met_{G,I}(u,v) = \frac{1}{4}\tr\big((u+u^{\trans})(v+v^{\trans}) \big) = \met_{B,I}(T_I\pi_I\cdot u,T_I\pi_I\cdot v).
	\end{equation*}
	Since $\met_{B}$ is right-invariant, $\met_G$ descends to $\met_B$.
	This proves the result.
\end{proof}

The horizontal distribution $\mathcal{H}$ is given by $\mathcal{H}_A = \mathfrak{upp}(n)A$.
Since $\mathfrak{upp}(n)$ is a Lie algebra, i.e., closed under the matrix commutator, the horizontal distribution is integrable.
Its integral manifold through the identity is the Lie group $\mathup{Upp}(n)$ of upper triangular $n\times n$~matrices with strictly positive diagonal entries.
Notice that $\mathup{Upp}(n)$ forms the polar cone~\eqref{eq:polar_cone}.

Let $A\in\GL(n)$.
If there exists a unique minimal geodesic $\bar\zeta\colon[0,1]\to\Sympn$ from $I$ to $\pi_I(A)$, then, by \autoref{pro:abstract_optimal_transport}, we obtain a factorisation $A=QR$, with $Q\in\SO(n)$ and $R\in\mathup{Upp}(n)$.
Since the metric~\eqref{eq:QR_metric} is smooth, it follows from standard results in Riemannian geometry that there exists a neighbourhood $\mathcal{O}\subset \Sympn$ of $I$ such that any element in $\mathcal{O}$ is connected to $I$ by a unique minimal geodesic.
Therefore, $A$ has a unique $QR$~factorisation if $\pi_I(A)$ is close enough to $I$.

\begin{remark}	
	The $QR$~factorisation of any matrix $A$ is well-known to exist.
	It is unique if $A$ is invertible, which suggests unique minimal geodesics.
	Details of these questions are not investigated in this paper.
\end{remark}

In summary, the factor~$R$ in the $QR$ factorisation of $A$ solves the problem of optimally (with respect to the cost function $\dist_G^{2}$) transporting the Euclidean inner product on $\R^{n}$ to the inner product defined by $M=A^{\trans}A$.
In addition, the factor~$R$ is the transpose of the Cholesky factorisation of~$M$.
Indeed, if $L=R^{\trans}$ then 
\begin{equation*}
	M=\pi_I(A) = \pi_I(R) = R^{\trans}R = LL^{\trans}.	
\end{equation*}

Usually, the $QR$~factorisation is obtain by direct linear algebraic manipulations.
A different approach is to solve the geodesic equation on $\Sympn$, and lift the geodesic to a horizontal geodesic on $\GL(n)$.
Although probably inefficient compared to existing algorithms (there are very fast algorithms based on Householder reflections), the geodesic approach may provide insight, for example in the case of sparse matrices.

\begin{remark}
	The setting can be extended to $\GL(n,\C)$ by replacing $\SO(n)$ with $\mathup{U}(n)$, and every transpose with the Hermitian conjugate.
\end{remark}

\iftoggle{final}{}{
{\color{blue!50!black}

\begin{remark}
	Consider the case $G=\Diff(\mathbb{T}^{n})$ of diffeomorphisms on the flat $n$--torus.
	$\varphi\in \Diff(\mathbb{T}^{n})$ acts on $H^{s}(\mathbb{T}^{n},\R)$ by
	\begin{equation*}
		A_\varphi\colon H^{s}(\mathbb{T}^{n},\R) \to H^{s}(\mathbb{T}^{n},\R), \quad
		A f \coloneqq f\circ \varphi .
	\end{equation*}

\end{remark}

}}

\begin{remark}
	\citet{Ot2001} studied gradient flows with respect to the Wasserstein metric in optimal mass transport (``Otto calculus'', cf.~\cite{Vi2009}). 
	Analogous gradient flows for the optimal transport problem in this section are related to the Toda flow, the ``$QR$~algorithm''~\cite{GoLo1989}, and Brockett's flow for continuous diagonalisation of matrices~\cite{Br1988}.
\end{remark}


\iftoggle{final}{}{
{\color{blue!50!black}

\subsection{Optimal mass transport ($L^{2}$)} 
\label{sub:optimal_mass_transport}

\todo[inline]{Add lots of references here, and discuss history a bit.}

Let $(M,\met)$ be a compact Riemannian manifold as above.
Consider the $L^{2}$ metric on $H^{s}(M,M)$ given by
\begin{equation}\label{eq:L2_optimal_transport_metric}
	\inner{u,v}_{\varphi} = \int_M \met(u(x),v(x))\vol(x).
\end{equation}
Notice that this metric is invariant with respect to the right action of the isotropy group $\Diff_\vol^{s}(M)$ on $H^{s}(M,M)$.
Also, this $L^{2}$~metric induces a metric on the open subset $\Diff^{s}(M)\subset H^{s}(M,M)$.
We take $G=\Diff^{s}(M)$.

As above, $\Diff^{s}(M)$ acts on $B=\Dens^{s-1}(M)$ from the right by $\bar R_\varphi(\nu) = \varphi^{*}\nu$.
The Monge problem is: given $\nu,\nu'\in\Dens^{s-1}(M)$, find $\varphi\in \{\phi\in \Diff^{s}(M); \phi^{*}\nu = \nu' \}$ such that $\dist_{L^{2}}(\id,\varphi)$ is minimized.
Here, $\dist_{L^{2}}(\id,\varphi)$ is the length of the shortest curve from $\id$ to $\varphi$ with respect to the metric~\eqref{eq:L2_optimal_transport_metric}.
By switching the rôle of $\nu$ and $\nu'$ we can reformulate this problem using the pushforward instead of the pullback: find $\varphi\in \{\phi\in \Diff^{s}(M); \phi_{*}\nu = \nu' \}$ such that $\dist_{L^{2}}(\id,\varphi)$ is minimized.
Now, define the projection $\pi:\Diff^{s}(M)\to \Dens^{s-1}(M)$ by $\pi(\varphi) = \varphi_*\vol = \bar R_{\varphi^{-1}}(\vol)$.
The fibre of $\varphi\in\Diff^{s}(M)$ is given by $\varphi\circ\Diff_{\vol}^{s}(M)$, i.e., we consider left co-sets.

Since the metric~\eqref{eq:L2_optimal_transport_metric} is right-invariant, and since we consider left co-sets, it follows automatically that it is a descending metric with respect to~$\pi$.
The corresponding metric on $\Dens^{s-1}(M)$ is the \emph{Wasserstein metric}, given by
\begin{equation*}
	\inner{a,b}_{\nu} = \int_M \met\big(\grad (f),\grad (g)\big) \nu, 
	\quad \LieD_{\grad (f)}\nu = a, 
	\quad \LieD_{\grad (g)}\nu = b.
\end{equation*}
Notice that computing the metric requires the solution of the Poisson type problem to obtain $f,g\in\Fcal(M)$.
Indeed, if $\nu = \rho\,\vol$, then the equation for $f$ is 
\begin{equation*}
	\divv\big(\rho\grad(f)\big) = \frac{\dd a}{\dd \vol}	
\end{equation*}
and correspondingly for $g$.


}}


\section*{Acknowledgements} 

I would like to thank Martins Bruveris, Darryl Holm, Boris Khesin, Martin Kohlmann, Gerard Misio{\l}ek} and Matthew Perlmutter for very helpful discussions.

The work is supported by the \href{http://www.kva.se}{Royal Swedish Academy of Science} and the Swedish Research Council (contract {\small \href{http://vrproj.vr.se/detail.asp?arendeid=89639}{VR--2012--335}}).



\bibliographystyle{amsplainnat} 
\bibliography{tgvolume}


\end{document}